\documentclass[journal,twoside,web]{ieeecolor}

\usepackage{generic}
\usepackage{cite}
\usepackage{amsmath,amssymb,amsfonts}
\usepackage{amsmath}
\usepackage{algorithmic}
\usepackage{graphicx}
\usepackage{algorithm,algorithmic}
\usepackage{hyperref}
\hypersetup{hidelinks=true}
\usepackage{textcomp}

\usepackage{rgsMacros}
\usepackage{subcaption}
\let\labelindent\relax
\usepackage{enumitem}
\usepackage{comment}

\usepackage{balance}
\newcommand{\Inc}{\textrm{Inc}}
\newcommand{\AM}{\textcolor{red}}
\allowdisplaybreaks
\newtheorem{exe}{Example}
\newtheorem{corol}{Corollary}
\newtheorem{ass}{Assumption}
\newtheorem{proper}{Property}
\newtheorem{defin}{Definition}
\newtheorem{prob}{Problem}
\newtheorem{cla}{Claim}
\newtheorem{rem}{Remark}
\newtheorem{lem}{Lemma}
\newtheorem{prop}{Proposition}
\newtheorem{thm}{Theorem}
\newtheorem{fct}{Fact}
\newenvironment{lemma}{\begin{lem}}{\hfill $\square$ \end{lem}}
\newenvironment{proposition}{\begin{prop}}{\hfill $\square$ \end{prop}}
\newenvironment{corollary}{\begin{corol}}{\hfill $\square$ \end{corol}}
\newenvironment{example}{\begin{exe}}{\hfill $\square$ \end{exe}}
\newenvironment{remark}{\begin{rem} \rm}{ \end{rem}}
\newenvironment{assumption}{\begin{ass}}{\hfill $\bullet$ \end{ass}}

\newenvironment{theorem}{\begin{thm}}{\hfill $\square$ \end{thm}}
\newenvironment{definition}{\begin{defin}}{ \end{defin}}

\newif\ifitsdraft

\usepackage{pifont}

\begin{document}	

\title{\LARGE \bf Complete Abstractions of Monotone Control Systems: From Model-based to Data-Driven Systems}

\author{Adnane Saoud$^{1}$, Anas Makdesi$^2$, Mohamed Maghenem$^3$,  
Antoine Girard$^4$, and Murat Arcak$^5$
 \thanks{$^1$College of Computing, University Mohammed VI Polytechnic, Benguerir, Morocco (email: adnane.saoud@um6p.ma). $^2$Department of Computer Science, LMU of
Munich, Germany (email: anas.makdesi@lmu.de)
  $^3$Universit\'e Grenoble Alpes, CNRS, Grenoble-INP, GIPSA-lab, F-38000, Grenoble, France (e-mail: mohamed.maghenem@gipsa-lab.fr). $^4$Universit\'e Paris-Saclay, CNRS, CentraleSup\'elec, Laboratoire des signaux et systèmes, 91190, Gif-sur-Yvette, France. (email :antoine.girard@centralesupelec.fr). $^5$University of California, Berkeley, 
 California, USA (e-mail: arcak@berkeley.edu). The work of Murat Arcak was supported in part by the National Science Foundation (NSF) under Grant CNS 2111688.}
}

\maketitle

\begin{abstract}
In this paper, we introduce the \textit{approximate strong upper alternating simulation} 
(ASUAS), a new behavioral relation for transition systems. Building on this relation, we construct \textit{upper-} and \textit{lower-sparse} abstractions for \textit{monotone} systems that together form a \textit{complete} abstraction pair: any controller synthesized for the upper-sparse abstraction can be refined into a controller for the original system, and the absence of a controller for the lower-sparse abstraction implies the absence of a controller for the original system. 
A key feature of our approach is the ability to provably tune the conservativeness gap between the two abstractions by tuning the 
space-discretization parameter. 
We further extend these results, beyond the model-based setting, to \textit{data-driven} systems, where the abstraction is constructed directly from finite sampled data, without requiring an explicit system model. The theoretical results are illustrated through simulations.
\end{abstract}

\section{INTRODUCTION} \label{sec:1}
 The last two decades have shown a growing trend of combining tools from formal methods and control theory. The research at this interface gave birth to a new research area, named  \textit{symbolic control}~\cite{tabuada2009verification,belta2017formal}. The main objective of symbolic control is to make a control system verify logic specifications. To this end, an \textit{abstraction} of the original model (i.e. a dynamical system with a finite number of states and inputs), also named  \textit{symbolic model}, is constructed. Abstractions enable the use of supervisory control techniques \cite{cassandras2009introduction} and (algorithmic) game theory \cite{bloem2012synthesis}.
 
An abstraction is usually constructed to verify a behavioral relationship with respect to the original system, such as the \textit{approximate alternating simulation} (AAS). As a result, under some specifications, the controller designed for the abstraction can be refined into a controller for the original system \cite{tabuada2009verification}. In this case, we say that the abstraction is \textit{sound} under those specifications. For most existing behavioral relationships, the obtained abstractions suffer from \textit{conservativeness} issues \cite{tabuada2009verification,belta2017formal}, since some trajectories of the obtained abstractions are not trajectories of the original system. Hence, if we cannot find a controller for the abstraction, we fail to decide whether or not a controller exists for the original system. This is in contrast to \textit{complete} abstractions, for which the existence of a controller is equivalent to the existence of a controller for the original system.

Various approaches are proposed in the literature to construct abstractions for different classes of systems. Based on establishing one-sided AAS relations, results apply to stabilisable systems \cite{tabuada2008approximate}, incrementally forward-complete systems \cite{zamani2012symbolic}, monotone systems \cite{meyer2015adhs}, 
mixed-monotone systems \cite{coogan2015efficient}, differentially-flat systems \cite{liu2012reactive}, impulsive systems \cite{swikir2020symbolic}, hybrid systems \cite{da2021symbolic}, and networked control systems \cite{zamani2017symbolic}. Based on establishing approximate bisimulation-type relations, which provide two-sided behavioral equivalence guarantees, results consider general nonlinear systems \cite{pola2008approximately}, switched systems  \cite{girard2010approximately}, time-delay systems \cite{pola2010symbolic}, and  stochastic systems \cite{zamani2014symbolic}. This two-sided guarantee, however, comes at a price: all the aforementioned bisimulation-based approaches require the system to satisfy some form of incremental stability property~\cite{angeli2002lyapunov}.

While these model-based techniques are effective in settings where accurate models are available, they become challenging when such models are unavailable or impractical. This motivates the shift towards data-driven approaches, which  build abstractions directly from data. Several approaches have been proposed to construct data-driven abstractions, differing both in the class of systems they target and in the type of guarantee they provide. For nonlinear systems, \cite{hashimoto2022learning} learns the unmodeled dynamics via a Gaussian process and provides deterministic guarantees, whereas \cite{devonport2021symbolic} follows a Probably Approximately Correct (PAC) learning approach. For stochastic systems, \cite{lavaei2020formal} deals with continuous-space Markov decision processes through model-free reinforcement learning with probabilistic guarantees, whereas \cite{badings2023robust} adopts a scenario approach to construct interval Markov decision process abstractions with PAC guarantees. Finally, for monotone systems, \cite{makdesi2023data} provides deterministic guarantees.

The key motivation behind this work stems from a fundamental limitation of classical behavioral relations, such as the AAS relation. These relations, rooted in the formal methods literature~\cite{baier2008principles}, are universal, i.e., they do not target systems with specific structural properties. They can, thus, be used to construct sound abstractions for broad classes of systems and specifications~\cite{tabuada2009verification}.
However, this universality comes at the cost of generating conservative abstractions~\cite{girard2010approximately}.
Motivated by the latter drawback, and inspired by classical AAS relations, we introduce a new behavioral relation, named \textit{approximate strong upper alternating simulation} (ASUAS). This relation is suitable for monotone systems, as we show that any controller synthesized for the abstraction can, under certain specifications, be refined into a controller for the original system. Building on this, we construct \textit{upper-} and \textit{lower-sparse abstractions} that together form a \emph{complete abstraction pair}. Completeness here is understood in the following precise sense: the existence of a controller for the upper-sparse abstraction guarantees the existence of a refined controller for the original system, and the absence of controller for the lower-sparse abstraction certifies the absence of controller for the original system. 
We further provide a constructive approach to tune the conservativeness gap between the two abstractions: for any prescribed error bound, we derive a space-discretization parameter that provably guarantees it. These model-based results are then extended to a data-driven setting, where abstractions are constructed directly from a finite set of data points sampled from the system's trajectories. We introduce data-driven methods to build both upper- and lower-sparse abstractions for monotone systems. We further show how to control the conservativeness between the model-based and data-driven abstractions, providing both probabilistic and deterministic guarantees. 

\textbf{Related work:} The construction of complete symbolic abstractions is, in general, intractable. To the best of our knowledge, the only class of systems for which complete abstractions can be constructed is that of incrementally stable systems~\cite{pola2008approximately,girard2010approximately}, which forms a strict subclass of stable systems. A different approach was recently proposed in~\cite{liu2017robust}, where a complete abstraction is computed for a perturbed version of the original system. However, the absence of a controller for the perturbed system does not imply the same for the original system, leaving the question of completeness with respect to the original system unanswered. Regarding monotone systems specifically, the existing literature offers only sound abstractions, both in the model-based setting~\cite{coogan2015efficient,meyer2015adhs,kim2017symbolic} and in the data-driven setting~\cite{makdesi2023data}. The present work goes strictly beyond the aforementioned results by providing, for the first time, the concept of \emph{complete abstraction pairs} for this class of systems.

 The remainder of the paper is organized as follows. Section~\ref{sec:2} recalls the necessary preliminaries. Section~\ref{sec:3} introduces the ASUAS relation and develops the associated controller refinement procedure. Section~\ref{sec:4} presents the construction of complete abstraction pairs for monotone systems in the model-based setting. Section~\ref{sec:dd} extends these results to the data-driven setting. Finally, Section~\ref{sec:6} illustrates the proposed framework through numerical examples.

\textbf{Notation.} Given a set $Y \subseteq \mathbb{R}^n$, $Y^*$ denotes the set of finite sequences of elements in $Y$, and $Y^w$ denotes the set of infinite sequences of elements in $Y$. For $x \in \mathbb{R}^n$, $|x|$ denotes its infinity norm, and, for $\varepsilon \in \mathbb{R}^n_{+}$, $\Omega_{\varepsilon}(x) := \{z \in \mathbb{R}^n : |z_i-x_i| \leq \varepsilon_i, ~ \forall i \in \{1,2,...,n\}\}$. Similarly, for a set $X \subseteq \mathbb{R}^n$, $\Omega_{\varepsilon}(X) := \bigcup\limits_{x \in X} \Omega_{\varepsilon}(x)$. For an infinite sequence 
$\sigma_y := y_0,y_1,\ldots \in Y^w$ and for $\varepsilon \in \mathbb{R}^n_{+}$, $\Omega_{\varepsilon}(\sigma_y) := \{\sigma_z=z_0,z_1,\ldots \in Y^w : z_i \in \Omega_{\varepsilon}(y_i), ~ \forall i \in \mathbb{N} \}$. For a set $K \subseteq Y^w$, $\Omega_{\varepsilon}(K) := \bigcup\limits_{\sigma_y \in K}\Omega_{\varepsilon}(\sigma_y)$. The natural projection $\pi_A: A \times B \rightarrow A$ is defined by $\pi_A(a,b) = a$. We use the classical component-wise partial order on $\mathbb{R}^n$, defined for $x,y \in \mathbb{R}^n$ by $x \leq y$ if and only if $x_i \leq y_i$ for all $i \in \{1,2,\ldots,n\}$. If neither $x \leq y$ nor $y \leq x$ holds, we say that $x$ and $y$ are \textit{incomparable}. The set of all incomparable couples in $\mathbb{R}^n$ is denoted by $\Inc_{\mathbb{R}^n}$. The sets of minimal elements of $A \subseteq \mathbb{R}^n$ is defined as 
$$\min(A) := 
\{x \in A : \forall z \in A, ~ x \leq z~\text{or}~ (x,z)\in \Inc_{\mathbb{R}^n}\}.$$ For $a\in \mathbb{R}^n$, we let $\downarrow a := \{x \in \mathbb{R}^n : x \leq a\}$ and $\uparrow a := \{x \in \mathbb{R}^n : x \geq a\}$. The lower and upper closures of $A \subseteq \mathbb{R}^n$ are $\downarrow A := \bigcup\limits_{a \in A} \downarrow a$ and $\uparrow A := \bigcup\limits_{a \in A} \uparrow a$, respectively. A subset $A \subseteq \mathbb{R}^n$ is said to be lower closed if $\downarrow A = A$, and upper closed if $\uparrow A = A$. A partial ordering $x \leq y$ between a pair of infinite sequences $x \in A^w$ and $y \in A^w$ holds if and only if $x_i \leq y_i$ for all $i \in \mathbb{N}$. An illustration of the concepts of lower-closed sets and the lower closure of a point is provided in Figure~\ref{fig:lower_closed}.
\begin{figure}[!t]
\begin{center}
\includegraphics[scale=0.5]{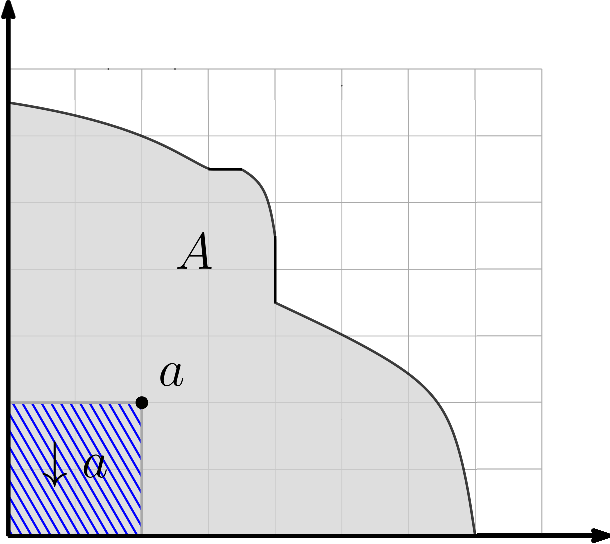}
\end{center}
\caption{A lower-closed set $A \subseteq \mathbb{R}_{+}^2$, with the standard ordering, in gray. The lower closure of a point $a \in A$ is presented in dashed blue.}
\label{fig:lower_closed}	
 \end{figure}

\section{Preliminaries} \label{sec:2}

\subsection{Transition systems}

We start reviewing the notion of \textit{transition system}~\cite{tabuada2009verification}.  
\begin{definition}
[Transition system]
\label{definition6}
	A transition system is a tuple $S :=(X,X^o,U,\Delta,Y,H)$, where $ X $ is the set of states, $X^o \subseteq X$ is the set of initial states, $U$ is the set of inputs, 
 $ \Delta \subseteq X\times U\times X $ is the transition relation, $Y$ is the set of outputs and $H$ is the output map. When $ (x,u,x')\in\Delta$, we use the alternative representation  $ x'\in \Delta(x,u)$,  where the state $x'$ is called a successor of $x$ under the input $u$. 
\end{definition}
The transition system is \textit{finite} (or \textit{symbolic}), if the sets $X$ and $U$ are finite. It is \textit{deterministic}, if there exists at most one successor for any $x \in X$ and $ u \in U $. The set of enabled (admissible) inputs for $x \in X$ is denoted by $U^a(x)$ and defined as $U^a(x) := \{u\in U : \Delta(x,u)\neq \emptyset\}$. 

For the transition system $S$, we assume that for all $x \in X$ and for all $u \in U$, $\Delta(x, u) \neq \emptyset$. This means that for any state all the inputs are enabled. A trajectory of the transition system $S$ is a sequence $\sigma:=(x_0,u_0),(x_1,u_1),...$, where $x_0 \in X^0$, $x_{i+1} \in \Delta(x_i,u_i)$ for all $i \in \mathbb{N}$. The output behavior associated with the trajectory $\sigma$ is the sequence $\sigma_y: = y_0,y_1,...$, where $y_i := H(x_i)$ for all $i \in \mathbb{N}$. We use $\mathcal{B}(S) \subseteq Y^w$ to denote the set of all possible output behaviors of the system $S$.

\subsection{Lower-Closed Specifications}

For the transition system $S :=(X,X^o,U,\Delta,Y,H)$, an output specification $\phi \subseteq Y^{w}$ encodes a set of desirable output behaviors.  The system $S$ is said to satisfy the specification $\phi$ if $\mathcal{B}(S) \subseteq \phi$. A lower-closed specification $\phi$ corresponds to a lower-closed  subset of $Y^w$. Lower-closed specifications can describe complex specifications ranging from safety to reachability, stability, and liveness for lower-closed sets. A fragment of linear temporal logic specifications that are lower closed is  identified in~\cite{kim2016directed}. In the rest of the paper, we focus on lower-closed specifications; analogous results can be formulated for upper-closed specifications.

\subsection{Monotone control systems}

Consider the discrete-time control system
\begin{equation}
\label{dis_sys}
\Sigma : x(k+1) = f(x(k),u(k),d(k)),
\end{equation}
where $x \in X \subseteq \mathbb{R}^n$ is the state, $u \in U\subseteq\mathbb{R}^m$ is the input, and $d \in D \subseteq \mathbb{R}^p$ is a disturbance. 

\begin{definition}[Monotone systems \cite{smith2008monotone}]
The control system $\Sigma$ is monotone if, for each $x_1,x_2 \in X$, for each $u_1,u_2 \in U$, and for each $d_1,d_2\in D$,  $x_1 \leq x_2$, $u_1 \leq u_2$, and $d_1\leq d_2$ then  $f(x_1,u_1,d_1) \leq f(x_2,u_2,d_2)$.
\end{definition}

We require a specific geometry for the sets $U$ and $D$.

\begin{assumption} \label{assum:contr_dist}
The set $U$ is an interval of the form $U :=[\underline{U},\overline{U}]$ with $\underline{U}$, $\overline{U} \in \mathbb{R}^m$, and the set $D$ is a finite union of intervals of the form  $D := \bigcup_{m=1}^M[d_1^m,d_2^m]$ with $d_{1}^1, d_{2}^1, ..., d_1^M, d_{2}^M  \in \mathbb{R}^p$ for some $M \in \mathbb{N}$ \footnote{Without loss of generality, we also assume that for all $i,j \in \{1,2,\ldots,M\}$, with $i \neq j$, we have $[d_1^i,d_2^i] \not\subseteq [d_1^j,d_2^j]$}.
\end{assumption}

The transition-system representation of $\Sigma$ is given, under Assumption \ref{assum:contr_dist}, by\footnote{In the rest of the paper, the discrete-time system $\Sigma$ and the corresponding transition system $S_\Sigma$ can be used interchangeably.} 
\begin{align} \label{eqSsigma}
S_\Sigma :=(X,X^o,U,\Delta,Y,H),
\end{align}
where $X^o \subseteq X$ is the set of initial states, $Y = X$ is the output set, and $H: X \rightarrow Y$ is the identity map. Finally, the transition relation $\Delta$ is defined for $x \in X$ and $u \in U$ as $x' \in \Delta(x,u)$ if and only if there exists $m \in \{1,2,\ldots,M\}$ and $d \in [d_1^m,d_2^m]$ such that $x'=f(x,u,d)$.

When $f$ is continuously differentiable, Kamke-Muller's sufficient condition for monotonicity  is given by
$$ \frac{\partial f_i}{\partial x_j} \geq 0, ~~ \frac{\partial f_i}{\partial u_k} \geq 0, ~~ \frac{\partial f_i}{\partial d_l} \geq 0$$ 
for all $i,j \in \{1,2,\ldots,n\}$, for all $k\in \{1,2,\ldots,m\}$, and for all $l \in \{1,2,\ldots,p\}$; see \cite{smith2008monotone}.

\begin{remark}
Monotone systems constitute a broad and practically relevant class of dynamical systems, with applications in biology~\cite{angeli2003monotone}, traffic flow~\cite{coogan2017formal}, microgrids~\cite{zonetti2019decentralized}, and autonomous vehicles~\cite{smith2008monotone}, among others. At first glance, monotonicity may appear restrictive, as it requires the off-diagonal entries of the Jacobian $\partial f/\partial x$ to be nonnegative, i.e., the system to be cooperative with respect to the positive orthant. This requirement is, however, less stringent than it seems. First, the positive orthant is only one of many admissible order cones: a system is monotone with respect to some orthant whenever its off-diagonal partial derivatives are sign-definite and the associated signed interaction graph contains no negative undirected cycle, a condition that can be verified graphically~\cite{smith2008monotone,angeli2003monotone}; more general changes of coordinates further enlarge the class of systems admitting a monotone representation. Second, physical structure can expose monotonicity that is not apparent in the original coordinates: restricting the dynamics to the invariant subspaces induced by conservation laws, or re-expressing it in suitable reaction coordinates, reveals monotone structure in several biomolecular signaling cascades~\cite{angeli2010graph}. Systems that are not monotone in any of these senses can often be decomposed into an interconnection of monotone components~\cite{dasgupta2007algorithmic}.

\end{remark}

\section{Behavioral relationships} \label{sec:3}

In this section, we introduce the proposed behavioral relationships together with its associated control refinement procedure.

\subsection{The ASUAS relationship}
Consider the two transition systems
\begin{equation} 
\label{eqS1S2}
\begin{aligned}
S_1 & :=(X_1,X_1^o,U_1,\Delta_1,Y_1,H_1) 
\\
S_2 & :=(X_2,X_2^o, U_2,\Delta_2,Y_2,H_2)
\end{aligned}
\end{equation}
such that the following assumption holds. 
 \begin{assumption} \label{assPO}
 The sets $U_1$ and $U_2$ are compact, and subsets of the same set $U \subseteq \mathbb{R}^p$, and $Y_1$ and $Y_2$ are subsets of the same set $Y \subseteq \mathbb{R}^m$.
\end{assumption}
\begin{definition} \label{Def:altsimu_up}
 Consider the transition systems $S_i:=(X_i,X_i^o,U_i,\Delta_i,Y_i,H_i)$, $i=1,2$. Under Assumption \ref{assPO}, and for some $\varepsilon \in \mathbb{R}^m_{+} $, a relation $\mathcal R  \subseteq X_1 \times X_2$ is said to be an $\varepsilon$-approximate strong upper alternating simulation relation, or $\mathcal{R}$ is an $\varepsilon$-ASUAS, from $S_2$ to $S_1$, if:
	\begin{itemize}
		\item[(i)] $\forall x_2^o \in X_2^o$, $\exists x_1^o \in X_1^o$ such that $(x_1^o, x_2^o) \in \mathcal{R}$;
		\item[(ii)]  $\forall (x_1,x_2) \in \mathcal{R}$, $H_1(x_1)\leq H_2(x_2)+\varepsilon$;
        \item[(iii)] $U_2 \subseteq \{\uparrow U_1\}$,  and  $\forall (x_1,x_2) \in \mathcal{R}$, $\forall u_2 \in U^a_2(x_2)$, we have $\{\downarrow u_2\} \cap U_1 \subseteq U^a_1(x_1)$;
		\item[(iv)] $\forall (x_1,x_2) \in \mathcal{R}$, $\forall u_2 \in U^a_2(x_2)$,
		$\forall u_1 \in U^a_1(x_1) \cap \{\downarrow u_2\}$, we have 
 \begin{align} \label{eqiv}
      \hspace{-0.5cm}  \forall x_1'\in  \Delta_1(x_1,u_1), ~ \exists x_2'\in  \Delta_2(x_2,u_2) : (x_1',x_2') \in \mathcal{R}.
        \end{align}
	\end{itemize}
	  In this case, we say that $S_2$ $\varepsilon$-approximately strongly upper alternatingly simulates $S_1$, or $S_2$ $\varepsilon$-ASUAS $S_1$, denoted $S_2 \preccurlyeq^{\varepsilon}_u S_1$. 
\end{definition}




\ifitsdraft
\textcolor{blue}{
\begin{figure}
\centering
    \includegraphics[scale=0.55]{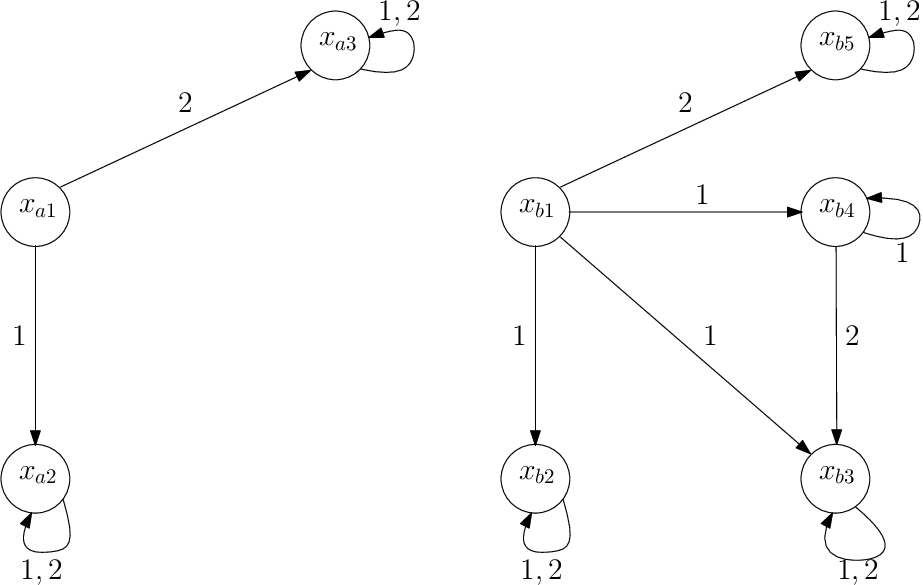}
\caption{\textcolor{blue}{Transition systems $S_a$ and $S_b$ for Example \ref{examp1}.}}
\label{fig:examp1}
\end{figure}
\begin{example}
\label{examp1}
Consider the transitions systems $S_a=(X_a,X_a^o,U_a,\Delta_a,Y_a,H_a)$ and $S_b=(X_b,X_b^o,U_b,\Delta_b,Y_b,H_b)$ in Figure \ref{fig:examp1}. The transition system $S_a=(X_a,X_a^o,U_a,\Delta_a,Y_a,H_a)$ has a set of states $X_a=\{x_{a1},x_{a2},x_{a3}\}$, a set of initial states $X_a^o=\{x_{a1},x_{a2},x_{a3}\}$, a set of inputs $U_a=\{1,2\}$, a set of outputs $Y=\mathbb{R}$ and an output map given by: $H_a(x_{a1})=1$, $H_a(x_{a2})=20$ and $H_a(x_{a3})=50$. We also define the set of enabled inputs for the transition system $S_a$ as follows: $U^a_a(x_{a1})=U^a_a(x_{a2})=U^a_a(x_{a3})=U_a$. Similarly, the transition system $S_b=(X_b,X_b^o,U_b,\Delta_b,Y_b,H_b)$ has a set of states $X_b=\{x_{b1},x_{b2},x_{b3},x_{b4},x_{b5}\}$, a set of initial states $X_b^o=\{x_{b1},x_{b2},x_{b4},x_{b5}\}$, a set of inputs $U_b=\{1,2\}$, a set of outputs $Y=\mathbb{R}$ and an output map given by: $H_b(x_{b1})=1.5$, $H_b(x_{b2})=3$, $H_b(x_{b3})=10$, $H_b(x_{b4})=15$ and $H_b(x_{b5})=50$. We also define the set of enabled inputs for the transition system $S_b$ as follows: $U^a_b(x_{b1})=U^a_b(x_{b2})=U^a_b(x_{b3})=U^a_b(x_{b4})=U^a_b(x_{b5})=U_b$. 
Now consider the relation $\mathcal{R}\subseteq X_b \times X_a$ defined by 
\begin{equation}
\label{eqn:relation}
\begin{aligned}
    \mathcal{R}  := \{ & (x_{b1},x_{a1}),(x_{b2},x_{a2}), (x_{b3},x_{a2}), \\ &(x_{b4},x_{a2}), (x_{b5},x_{a3})\}.
\end{aligned}
\end{equation}
It can be easily shown that $\mathcal{R}$ is an $0.5$-ASUAS relation from $S_a$ to $S_b$.
\end{example}}
\fi

\begin{remark}
\label{rk:existence}
Condition (iii) in Definition \ref{Def:altsimu_up} is mainly used to ensure that, for all $(x_1,x_2) \in \mathcal{R}$ and for all $u_2 \in U^a_2(x_2)$, the set $U^a_1(x_1) \cap 
\{\downarrow u_2\}$ used in condition (iv) is nonempty.
\end{remark}

\begin{remark}
 The difference between the proposed relation and traditional ones in symbolic control~\cite{tabuada2009verification, belta2017formal,baier2008principles} resides in (ii). Indeed, while traditional relations impose \emph{output closeness}, requiring the outputs of related states to remain within $\varepsilon$ distance to each other, the proposed relation imposes only output ordering. Moreover, the proposed relation differs from the upper alternating simulation relation in~\cite{kim2017symbolic} at two levels:
 \begin{itemize}
    \item While the relation in~\cite{kim2017symbolic} requires the output of one system to be upper bounded by the output of the other, we relax this upperbound to a precision $\varepsilon \in \mathbb{R}^m_{+}$.
    
    \item While the relation in~\cite{kim2017symbolic} requires, for each input $u_2 \in U_2^a(x_2)$, the existence $u_1 \in U_1^a(x_1)$ such that \eqref{eqiv} holds,  we require that \eqref{eqiv} holds for each $u_2 \in U_2^a(x_2)$ and for each $u_1 \leq u_2$.
    
    The reason behind this choice is to allow the controller of $S_1$, refined from (an abstraction) $S_2$, to be more permissive, i.e., to admit a larger set of valid control inputs at every state. 
\end{itemize} 
\end{remark}

The following result establishes the transitivity of ASUAS.

\begin{proposition}
\label{prop:transitivity}
 Let $S_i=(X_i,X_i^o,U_i,\Delta_i,Y_i,H_i)$, $i=\{1,2,3\}$ be a collection of transition systems such that $U_1$, $U_2$ and $U_3$ are compact subsets of the same partially ordered set $U \subseteq \mathbb{R}^p$, and $Y_1$, $Y_2$ and $Y_3$ are subsets of the same partially ordered set $Y \subseteq \mathbb{R}^m$. Assume further that $U_1 \subseteq U_2$ or $U_3 \subseteq U_2$, and consider $\varepsilon_1,\varepsilon_2 \in \mathbb{R}^m_{\geq 0}$. Then, the following holds:
$$S_3 \preccurlyeq^{\varepsilon_2}_u S_2 \text{ and } S_2 \preccurlyeq^{\varepsilon_1}_u S_1 \implies S_3 \preccurlyeq^{\varepsilon_1+\varepsilon_2}_u S_1.$$
\end{proposition}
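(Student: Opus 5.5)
The plan is to compose the two given relations and check conditions (i)–(iv) of Definition~\ref{Def:altsimu_up} directly. Let $\mathcal{R}_1 \subseteq X_1 \times X_2$ be an $\varepsilon_1$-ASUAS from $S_2$ to $S_1$, and let $\mathcal{R}_2 \subseteq X_2 \times X_3$ be an $\varepsilon_2$-ASUAS from $S_3$ to $S_2$. Define
$$\mathcal{R} := \{(x_1,x_3) \in X_1 \times X_3 : \exists x_2 \in X_2,\ (x_1,x_2)\in\mathcal{R}_1,\ (x_2,x_3)\in\mathcal{R}_2\}.$$
I will show that $\mathcal{R}$ is an $(\varepsilon_1+\varepsilon_2)$-ASUAS from $S_3$ to $S_1$.

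\textbf{Conditions (i) and (ii).} These are immediate.
\begin{itemize}
\item For (i), take $x_3^o \in X_3^o$. Condition (i) for $\mathcal{R}_2$ gives some $x_2^o \in X_2^o$ with $(x_2^o,x_3^o)\in\mathcal{R}_2$. Condition (i) for $\mathcal{R}_1$ then gives some $x_1^o \in X_1^o$ with $(x_1^o,x_2^o)\in\mathcal{R}_1$, so $(x_1^o,x_3^o)\in\mathcal{R}$.
\item For (ii), let $(x_1,x_3)\in\mathcal{R}$ with witness $x_2$. Chaining the two output inequalities gives $H_1(x_1)\le H_2(x_2)+\varepsilon_1 \le H_3(x_3)+\varepsilon_2+\varepsilon_1$.
\end{itemize}
The first part of (iii) follows from transitivity of the partial order: $U_3 \subseteq \uparrow U_2 \subseteq \uparrow(\uparrow U_1) = \uparrow U_1$.

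\textbf{Key step: an intermediate input.} The essential point is that, for $(x_1,x_3)\in\mathcal{R}$ with witness $x_2$, $u_3\in U_3^a(x_3)$ and $u_1 \in U_1 \cap \downarrow u_3$, we need some $u_2$ satisfying
$$u_2 \in U_2^a(x_2),\qquad u_1 \le u_2 \le u_3 .$$
This is where the hypothesis $U_1\subseteq U_2$ or $U_3\subseteq U_2$ enters, and I expect it to be the only delicate step.
\begin{itemize}
\item If $U_1\subseteq U_2$, take $u_2 := u_1$. Then $u_1 \in \downarrow u_3 \cap U_2$, which lies in $U_2^a(x_2)$ by condition (iii) for $\mathcal{R}_2$.
\item If $U_3\subseteq U_2$, take $u_2 := u_3$. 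Then $u_3 \in \downarrow u_3\cap U_2 \subseteq U_2^a(x_2)$, again by condition (iii) for $\mathcal{R}_2$.
\end{itemize}

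\textbf{Second part of (iii).} With this $u_2$, condition (iii) for $\mathcal{R}_1$ gives $\downarrow u_2 \cap U_1 \subseteq U_1^a(x_1)$. Since $u_1 \le u_2$, this contains $u_1$. As $u_1 \in U_1 \cap \downarrow u_3$ was arbitrary, we get $\downarrow u_3\cap U_1\subseteq U_1^a(x_1)$.

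\textbf{Condition (iv).} Fix $u_3\in U_3^a(x_3)$, $u_1\in U_1^a(x_1)\cap\downarrow u_3$ and $x_1'\in\Delta_1(x_1,u_1)$, and choose the same $u_2$ as above.
\begin{itemize}
\item Since $u_2 \in U_2^a(x_2)$ and $u_1\in U_1^a(x_1)\cap\downarrow u_2$, condition (iv) for $\mathcal{R}_1$ yields $x_2'\in\Delta_2(x_2,u_2)$ with $(x_1',x_2')\in\mathcal{R}_1$.
\item Since $u_3 \in U_3^a(x_3)$ and $u_2\in U_2^a(x_2)\cap\downarrow u_3$, condition (iv) for $\mathcal{R}_2$ yields $x_3'\in\Delta_3(x_3,u_3)$ with $(x_2',x_3')\in\mathcal{R}_2$.
\end{itemize}
Hence $(x_1',x_3')\in\mathcal{R}$. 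This establishes (iv) and concludes that $S_3 \preccurlyeq^{\varepsilon_1+\varepsilon_2}_u S_1$.
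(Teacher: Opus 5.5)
Your proposal is correct and follows the paper's proof essentially step for step. It uses the same composed relation and the same chaining of (i), (ii) and $U_3 \subseteq \uparrow U_1$. It also picks the same intermediate input, $u_2 := u_1$ when $U_1 \subseteq U_2$ or $u_2 := u_3$ when $U_3 \subseteq U_2$, places it in $U_2^a(x_2)$ via condition (iii) of $\mathcal{R}_2$, and then applies conditions (iii) and (iv) of $\mathcal{R}_1$ and condition (iv) of $\mathcal{R}_2$ exactly as the paper does.
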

\begin{proof}
Let $\mathcal{R}_2$ be the $\varepsilon_2$-ASUAS relation from $S_3$ to $S_2$ and $\mathcal{R}_1$ be the $\varepsilon_1$-ASUAS relation from $S_2$ to $S_1$. Consider the relation $\mathcal{R}\subseteq X_1 \times X_3$ defined by $(x_1,x_3) \in \mathcal{R}$ if and only if there exists $x_2 \in X_2$ such that $(x_1,x_2) \in \mathcal{R}_1$ and $(x_2,x_3) \in \mathcal{R}_2$, and let us show that $\mathcal{R}$ is an $(\varepsilon_1+\varepsilon_2)$-ASUAS relation from $S_3$ to $S_1$.

For $x_3^o \in X_3^o$, we have the existence of $x_2^o \in X_2^o$ such that $(x_2^o,x_3^o) \in \mathcal{R}_2$. Moreover, for such $x_2^o \in X_2^o$, we have the existence of $x_1^o \in X_1^o$ such that $(x_1^o,x_2^o) \in \mathcal{R}_1$. Hence, condition (i) of Definition~\ref{Def:altsimu_up} is satisfied.

For $(x_1,x_3) \in \mathcal{R}$, we have the existence of $x_2 \in X_2$ such that $(x_1,x_2) \in \mathcal{R}_1$ and $(x_2,x_3) \in \mathcal{R}_2$. Hence, one gets that $H_1(x_1) \leq H_2(x_2)+\varepsilon_1$ and $H_2(x_2) \leq H_3(x_3)+\varepsilon_2$, which implies that, $H_1(x_1) \leq H_3(x_3)+\varepsilon_1+\varepsilon_2$ and condition (ii) of Definition~\ref{Def:altsimu_up} is satisfied.


For $(x_1,x_3) \in \mathcal{R}$, we have the existence of $x_2 \in X_2$ such that $(x_1,x_2) \in \mathcal{R}_1$ and $(x_2,x_3) \in \mathcal{R}_2$. From the relations $\mathcal{R}_2$ and $\mathcal{R}_1$ we have $U_3 \subseteq \{\uparrow U_2\}$ and $U_2 \subseteq \{\uparrow U_1\}$, which imply $U_3 \subseteq \{\uparrow U_1\}$. Now pick any $u_3 \in U^a_3(x_3)$ and any $u_1 \in \{\downarrow u_3\} \cap U_1$ and let us show that $u_1 \in U^a_1(x_1)$. Under the hypothesis $U_1 \subseteq U_2$ or $U_3 \subseteq U_2$, define $u_2 := u_1$ in the first case and $u_2 := u_3$ in the second. In either case, $u_2 \in U_2$ and $u_1 \leq u_2 \leq u_3$. Hence $u_2 \in \{\downarrow u_3\} \cap U_2$, and by condition (iii) of $\mathcal{R}_2$, $u_2 \in U^a_2(x_2)$. Applying condition (iii) of $\mathcal{R}_1$ with this $u_2$, we conclude that $u_1 \in \{\downarrow u_2\} \cap U_1 \subseteq U^a_1(x_1)$, and condition (iii) of Definition~\ref{Def:altsimu_up} is satisfied.

For $(x_1,x_3) \in \mathcal{R}$, we have the existence of $x_2 \in X_2$ such that $(x_1,x_2) \in \mathcal{R}_1$ and $(x_2,x_3) \in \mathcal{R}_2$. Pick any $u_3 \in U^a_3(x_3)$, any $u_1 \in U^a_1(x_1) \cap \{\downarrow u_3\}$, and any $x_1' \in \Delta_1(x_1,u_1)$. Under the hypothesis $U_1 \subseteq U_2$ or $U_3 \subseteq U_2$, define $u_2 := u_1$ in the first case and $u_2 := u_3$ in the second. In either case, $u_2 \in U_2$ and $u_1 \leq u_2 \leq u_3$, so $u_2 \in \{\downarrow u_3\} \cap U_2 \subseteq U^a_2(x_2)$ by condition (iii) of $\mathcal{R}_2$. Applying condition (iv) of $\mathcal{R}_1$ with $u_2 \in U^a_2(x_2)$ and $u_1 \in U^a_1(x_1) \cap \{\downarrow u_2\}$, there exists $x_2' \in \Delta_2(x_2,u_2)$ such that $(x_1',x_2') \in \mathcal{R}_1$. Applying condition (iv) of $\mathcal{R}_2$ with $u_3 \in U^a_3(x_3)$ and $u_2 \in U^a_2(x_2) \cap \{\downarrow u_3\}$, there exists $x_3' \in \Delta_3(x_3,u_3)$ such that $(x_2',x_3') \in \mathcal{R}_2$. By the definition of $\mathcal{R}$, $(x_1',x_3') \in \mathcal{R}$, and condition (iv) of Definition~\ref{Def:altsimu_up} is satisfied.
 
\end{proof}

\subsection{Refinement Procedure under ASUAS}

In this section, we show that the  relation $S_2 \preccurlyeq^{\varepsilon}_u S_1$ allows refining a controller for $S_2$ into a controller for $S_1$, under lower-closed specification. To do so, we first introduce the concept of controllers for transition systems.

A controller for a system $S_1:=(X_1,X_1^o,U_1,\Delta_1,Y_1,H_1)$ is a transition system $S_c:=(X_c,X^o_c,U_c,\Delta_c,Y_c,H_c)$ such that $U_c \subseteq U_1$, $Y_1$ and $Y_c$ are subsets of the same set $Y$, and $S_c \preccurlyeq^{\varepsilon}_u S_1$ via a relation $\mathcal{R}_{1,c} \subseteq X_1 \times X_c$, for some $\varepsilon \in \mathbb{R}^n_+$. In this case, the controlled system is given by the transition system $S_{1c}:=S_c \times_{\mathcal{R}_{1,c}} S_{1}=(X_{1c},X^o_{1c},U_{1c},\Delta_{1c},Y_{1c},H_{1c})$ defined formally as follows: 
   \begin{itemize}
       \item The set of states $X_{1c}= (x_1,x_c) \in \mathcal{R}_{1,c}$;
       \item The set of initial states $X_{1c}^0=X_{1c} \cap (X_1^o \times X_c^o)$;
       \item The set of inputs $U_{1c}=U_c$;
       \item The transition relation $(x_1',x_c') \in \Delta_{1c}(x_1,x_c,u)$ holds if and only if 
       \begin{itemize}
           \item $(x_1,x_c)\in \mathcal{R}_{1,c}$;
           \item $u \in U^a_c(x_c)$;
           \item $x_1' \in \Delta_1(x_1,u_1)$, for some  $u_1 \in  \{\downarrow u\} \cap U^a_1(x_1)$;
           \item $x_c' \in \Delta_c(x_c,u)$ such that $(x_1',x_c') \in \mathcal{R}_{1,c}$;
       \end{itemize}
       \item The set of outputs $Y_{1c}=Y_1$;
       \item The output map $H_{1c}(x_1,x_c)=H_1(x_1)$.
   \end{itemize}

   \begin{remark} \label{remcl}
    Given $(x_1,x_c) \in X_{1c}$, the set of enabled inputs for the controlled system is given by $U^a_{1c}(x_1,x_c)=U_c^a(x_c)$. In this context, the controller works as follows: starting from a pair $(x_1, x_c) \in \mathcal{R}_{1,c}$, the controller $S_c$ allows the execution of any input  $u \in U_c^a(x_c)$. The system $S_1$ then selects an input $u_1 \in U_1^a(x_1) \cap  \{\downarrow u\}$ and transitions to any state $x_1' \in \Delta_1(x_1, u_1)$ under the chosen input. This transition is then matched by a corresponding transition by the controller, where the controller $S_c$ observes the new state $x_1'$ of $S_1$ and transitions to a state $x_c' \in \Delta_c(x_c, u)$ satisfying $(x_1', x_c') \in \mathcal{R}_{1,c}$. The existence of this matching transition is ensured by the fact that $S_c \preccurlyeq^{\varepsilon}_u S_1$. 
   \end{remark} 
   
In the following, we show how to refine a controller $S_{C_2}$ for system $S_2$, under a lower-closed  specification, into a controller $S_{C_1}$ for the transition system $S_1$, when $S_2 \preccurlyeq^{\varepsilon}_u S_1$. To do so, we start introducing an auxiliary behavioral relation.
\begin{definition}\label{Def:simu_up}
Consider the systems $S_i:=(X_i,X_i^o,U_i,\Delta_i,Y_i,H_i)$, $i=1,2$. Under Assumption \ref{assPO}, and for some $\varepsilon \in \mathbb{R}^m_{+}$, a relation $\mathcal{R} \subseteq X_1 \times X_2$ is said to be an $\varepsilon$-approximate simulation relation, or $\mathcal{R}$ is an $\varepsilon$-AUS, from $S_1$ to $S_2$ if: 
\begin{itemize}
    \item[(i)] $\forall x_1^o \in X_1^o$, $\exists x_2^o \in X_2^o$ such that $(x_1^o, x_2^o) \in \mathcal{R}$;
    \item[(ii)] $\forall (x_1,x_2) \in \mathcal{R}$, $H_1(x_1)\leq H_2(x_2)+\varepsilon$;
    \item[(iii)] $\forall (x_1,x_2) \in \mathcal{R}$, $\forall u_1 \in U^a_1(x_1)$ and $\forall x_1'\in \Delta_1(x_1,u_1)$, $\exists u_2 \in U^a_2(x_2)$ and $\exists x_2'\in \Delta_2(x_2,u_2)$ such that $(x_1',x_2') \in \mathcal{R}$.
\end{itemize}
In this case, we say that $S_2$ $\varepsilon$-approximately upper simulates $S_1$, or $S_2$ $\varepsilon$-AUS $S_1$, denoted $S_1 \preccurlyeq^{\varepsilon}_{u,S} S_2$.
\end{definition}

Unlike the traditional approximate simulation relation in~\cite{girard2007approximation}, which imposes output closeness, the proposed AUS relation imposes output ordering. Note that this relation is introduced solely as a proof tool for Theorem~\ref{thm:1}. The central behavioral relation throughout the paper remains the ASUAS relation of Definition~\ref{Def:altsimu_up}.

\begin{theorem}
\label{thm:1}
Consider the transition systems $S_i:=(X_i,X_i^o,U_i,\Delta_i,Y_i,H_i)$, $i=1,2$, such that Assumption~\ref{assPO} holds. Suppose that the relation $\mathcal{R}_{1,2}$ is an $\varepsilon$-ASUAS relation from $S_2$ to $S_1$, for some $\varepsilon \in \mathbb{R}^m_{+}$. Let $S_c$ be a controller for $S_2$ under a lower-closed specification $\phi$, i.e., $\mathcal{B}(S_c \times_{\mathcal{R}_{2,c}} S_2) \subseteq \phi$, where $\mathcal{R}_{2,c}$ is the corresponding ASUAS relation from $S_c$ to $S_2$. If $U_c \subseteq U_1$, then $S_c \times_{\mathcal{R}_{2,c}} S_2$ is a controller for the transition system $S_1$ under the specification $\Omega_{\varepsilon}(\phi)$.
\end{theorem}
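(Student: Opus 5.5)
We read the statement as two claims about $S_{2c}:=S_c\times_{\mathcal{R}_{2,c}}S_2$. First, $S_{2c}$ is a controller for $S_1$, that is, $U_{2c}=U_c\subseteq U_1$ and $S_{2c}\preccurlyeq^{\varepsilon}_u S_1$ for a suitable relation. Second, the closed loop $S_{2c}\times S_1$ has all its behaviors in $\Omega_\varepsilon(\phi)$. The plan is to obtain the relation by composing two ASUAS relations through Proposition~\ref{prop:transitivity}, and then to transfer behaviors through an auxiliary AUS relation (Definition~\ref{Def:simu_up}) combined with lower-closedness of $\phi$.

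\textbf{Step 1: $S_{2c}\preccurlyeq^{0}_u S_2$.} We use the projection relation $\mathcal{P}:=\{(x_2,(x_2,x_c)) : (x_2,x_c)\in\mathcal{R}_{2,c}\}$ and check the four conditions of Definition~\ref{Def:altsimu_up}.
\begin{itemize}
\item[(i)] An initial state of $S_{2c}$ is an element of $\mathcal{R}_{2,c}\cap(X_2^o\times X_c^o)$, so its $x_2$-component lies in $X_2^o$.
\item[(ii)] This holds with $\varepsilon=0$, since $H_{2c}(x_2,x_c)=H_2(x_2)$.
\item[(iii)] We have $U_{2c}=U_c\subseteq U_2$, because $S_c$ is a controller for $S_2$. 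By Remark~\ref{remcl}, the enabled inputs of $S_{2c}$ at $(x_2,x_c)$ are exactly $U_c^a(x_c)$. Condition (iii) of $\mathcal{R}_{2,c}$ then gives $\{\downarrow u\}\cap U_2\subseteq U_2^a(x_2)$.
\item[(iv)] Given $u\in U^a_c(x_c)$, $u_2\in U_2^a(x_2)\cap\{\downarrow u\}$ and $x_2'\in\Delta_2(x_2,u_2)$, condition (iv) of $\mathcal{R}_{2,c}$ yields $x_c'\in\Delta_c(x_c,u)$ with $(x_2',x_c')\in\mathcal{R}_{2,c}$. By the definition of the product, $(x_2',x_c')\in\Delta_{2c}((x_2,x_c),u)$, and this pair is $\mathcal{P}$-related to $x_2'$.
\end{itemize}
Since $U_c\subseteq U_2$, the hypothesis ``$U_3\subseteq U_2$'' of Proposition~\ref{prop:transitivity} holds with $S_3=S_{2c}$. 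Together with $S_2\preccurlyeq^{\varepsilon}_u S_1$ via $\mathcal{R}_{1,2}$, the proposition gives $S_{2c}\preccurlyeq^{\varepsilon}_u S_1$ via the composed relation
\[
\mathcal{R}:=\{(x_1,(x_2,x_c)) : (x_1,x_2)\in\mathcal{R}_{1,2},\ (x_2,x_c)\in\mathcal{R}_{2,c}\}.
\]
Combined with the assumption $U_c\subseteq U_1$, this shows that $S_{2c}$ is a controller for $S_1$.

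\textbf{Step 2: behavior transfer.} Let $S_{1,2c}:=S_{2c}\times_{\mathcal{R}}S_1$. We show that $S_{1,2c}\preccurlyeq^{\varepsilon}_{u,S}S_{2c}$ via the relation $\{((x_1,z),z) : (x_1,z)\in\mathcal{R}\}$.
\begin{itemize}
\item Condition (i) is immediate, since initial states of $S_{1,2c}$ have initial $z$-components.
\item Condition (ii) reduces to $H_1(x_1)\le H_2(x_2)+\varepsilon=H_{2c}(z)+\varepsilon$, which is condition (ii) of $\mathcal{R}$.
\item For condition (iii), a transition of $S_{1,2c}$ from $(x_1,z)$ uses some $u\in U^a_{2c}(z)$ and goes to $(x_1',z')$ with $z'\in\Delta_{2c}(z,u)$ and $(x_1',z')\in\mathcal{R}$. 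Hence $S_{2c}$ matches it with the same input $u$ and the successor $z'$.
\end{itemize}
By induction along trajectories, every output behavior $y^1=y^1_0,y^1_1,\ldots$ of $S_{1,2c}$ is then matched by some $y^2\in\mathcal{B}(S_{2c})\subseteq\phi$ with $y^1_i\le y^2_i+\varepsilon$ for all $i$.

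\textbf{Step 3: lower-closedness.} Define $w_i:=\min(y^1_i,y^2_i)$ componentwise. Then $w\le y^2$, so $w\in\phi$ because $\phi$ is lower closed. For each component $j$, $|y^1_{i,j}-w_{i,j}|$ is either $0$ or $y^1_{i,j}-y^2_{i,j}\le\varepsilon_j$, so $y^1\in\Omega_\varepsilon(w)\subseteq\Omega_\varepsilon(\phi)$.

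\textbf{Expected obstacle.} The delicate point is this last step: $w$ must be a legitimate element of $Y^w$, which requires $Y$ to be closed under componentwise minima. Lower-closedness of $\phi$ is understood inside $Y^w$, so if this closure fails we would need either to work with $\phi$ as a lower-closed subset of $(\mathbb{R}^m)^w$ or to choose $w$ differently. A secondary check is that the inputs of $S_{2c}$ are enabled wherever they are needed, that is, that $\Delta_{2c}$ is nonempty for $u\in U_c^a(x_c)$. This follows from Remark~\ref{rk:existence} together with condition (iv) of $\mathcal{R}_{2,c}$.
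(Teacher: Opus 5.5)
Your proof is correct and follows the paper's route. You use the same composed relation $\mathcal{R}$, followed by the content of Propositions~\ref{prop:4} and~\ref{prop:5}. There are two small differences. First, you obtain $S_{2c}\preccurlyeq^{\varepsilon}_u S_1$ by checking the projection relation $\mathcal{P}$ and invoking Proposition~\ref{prop:transitivity}, which is legitimate via $U_c\subseteq U_2$; the paper instead verifies (i)--(iv) for $\mathcal{R}$ directly, using exactly the intermediate input $u_2:=u$ that the transitivity proof would choose. Second, you use the witness $w=\min(y^1,y^2)$ in place of Lemma~\ref{lem:eps}.

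Your concern about whether $w\in Y^w$ is not a gap relative to the paper. The paper also takes $\downarrow$ in the ambient space when it writes $\Omega_{\varepsilon}(\downarrow\phi)\subseteq\Omega_{\varepsilon}(\phi)$. Under that reading, $w\le y^2\in\phi$ already gives $w\in\phi$.
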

\begin{proof}
Consider the relation 
\begin{eqnarray*}
    \mathcal{R} := & \{((x_1,(x_2,x_c)) \in X_1 \times (X_2 \times X_c) : 
    \\ & (x_2,x_c) \in \mathcal{R}_{2,c} ~ \land ~ (x_1,x_2) \in \mathcal{R}_{1,2}\}
\end{eqnarray*}
To prove the result, let us first claim that $\mathcal{R}$ is $\varepsilon$-ASUAS relation from $(S_c \times_{\mathcal{R}_{2,c}} S_{2}):= (X_{2c},X_{2c}^o,U_{2c},\Delta_{2c},Y_{2c},H_{2c})
$ to $S_1$. The latter would imply, using Proposition \ref{prop:4} in the Appendix, that  
$$ \left( (S_c \times_{\mathcal{R}_{2,c}} S_{2}) \times_{\mathcal{R}} S_1 \right) \preccurlyeq^{\varepsilon}_{u,S} \left( S_c \times_{\mathcal{R}_{2,c}} S_{2} \right), $$
where $\preccurlyeq^{\varepsilon}_{u,S}$ denotes the $\varepsilon$-AUS relation in Definition~\ref{Def:simu_up}. Then, using Proposition \ref{prop:5} in the Appendix, the fact that $\mathcal{B}(S_c \times_{\mathcal{R}_{2,c}}  S_2) \subseteq \phi$, and lower closedness of the specification $\phi$, that $S_c \times_{\mathcal{R}_{2,c}}  S_2$ is a controller for the transition system $S_1$ under the specification $\Omega_{\varepsilon}(\phi)$.

To prove the claim, we start using Assumption \ref{assPO} to conclude that $U_{2c} = U_c \subseteq U_1$ and $Y_{2c} = Y_2$ and $Y_1$ are subsets of the same set $Y \subseteq \mathbb{R}^m$.
For $(x_2^0,x_c^0) \in X_{2c}^0=\mathcal{R}_{2,c}\cap (X_2^o \times X_c^o) $, one has from definition of the controlled system $S_c \times_{\mathcal{R}_{2,c}} S_{2}$ that  
\begin{equation}
\label{eqn1}
    (x_2^0,x_c^0) \in \mathcal{R}_{2,c}.
\end{equation} 
Moreover,  one has that for $x_2^0 \in X_2^0$, there exists $x_1^0 \in X_1^0$ such that $(x_1^0,x_2^0) \in \mathcal{R}_{1,2}$. Hence, it follows from (\ref{eqn1}) that, for $(x_2^0,x_c^0) \in X_{2c}^0$, there exists $x_1^0 \in X_1^0$ such that $(x_1^0,(x_2^0,x_c^0)) \in \mathcal{R}$ and condition (i) of Definition \ref{Def:altsimu_up} is satisfied.
Now, consider $(x_1,(x_2,x_c)) \in \mathcal{R}$, we have that $ (x_1,x_2) \in \mathcal{R}_{1,2}$, which implies that $H_1(x_1) \leq H_2(x_2)+\varepsilon$. Hence, one gets that $H_1(x_1) \leq H_{2c}(x_2,x_c)+\varepsilon=H_2(x_2)+\varepsilon$ and condition (ii) of Definition \ref{Def:altsimu_up} is satisfied. 
To show the third condition, since $S_c$ is a controller for $S_2$, $U_c \subseteq U_2$, which combined with $U_2 \subseteq \{\uparrow U_1\}$ from $\mathcal{R}_{1,2}$ yields $U_{2c} = U_c \subseteq U_2 \subseteq \{\uparrow U_1\}$. Now consider $(x_1,(x_2,x_c)) \in \mathcal{R}$ and any $u \in U^a_{2c}(x_2,x_c) = U^a_c(x_c)$. Since $U_c \subseteq U_2$, we have $u \in U_2$, and from condition (iii) of $\mathcal{R}_{2,c}$ it follows that $\{\downarrow u\} \cap U_2 \subseteq U^a_2(x_2)$. In particular, $u \in U^a_2(x_2)$. Applying condition (iii) of $\mathcal{R}_{1,2}$ with $u$ as the abstract input, $\{\downarrow u\} \cap U_1 \subseteq U^a_1(x_1)$. Hence condition (iii) of Definition~\ref{Def:altsimu_up} is satisfied.
To show (iv), we consider $(x_1,(x_2,x_c)) \in \mathcal{R}$, any $u \in U^a_{2c}(x_2,x_c) = U^a_c(x_c)$, any $u_1 \in U^a_1(x_1) \cap \{\downarrow u\}$, and any $x_1' \in \Delta_1(x_1,u_1)$ and let us show the existence of $(x_2',x_c') \in \Delta_{2c}(x_2,x_c,u)$ such that $(x_1',(x_2',x_c')) \in \mathcal{R}$. Since $U_c \subseteq U_2$ (by the controller definition), $u \in U_2$, and from condition (iii) of $\mathcal{R}_{2,c}$, $\{\downarrow u\} \cap U_2 \subseteq U^a_2(x_2)$, in particular, $u \in U^a_2(x_2)$. Applying condition (iv) of $\mathcal{R}_{1,2}$ with $u \in U^a_2(x_2)$ and $u_1 \in U^a_1(x_1) \cap \{\downarrow u\}$, there exists $x_2' \in \Delta_2(x_2,u)$ such that $(x_1',x_2') \in \mathcal{R}_{1,2}$. Applying condition (iv) of $\mathcal{R}_{2,c}$ with $u \in U^a_c(x_c)$ and $u \in U^a_2(x_2) \cap \{\downarrow u\}$, there exists $x_c' \in \Delta_c(x_c,u)$ such that $(x_2',x_c') \in \mathcal{R}_{2,c}$. By the construction of the controlled system $S_c \times_{\mathcal{R}_{2,c}} S_2$, we have $(x_2',x_c') \in \Delta_{2c}(x_2,x_c,u)$. Combined with $(x_1',x_2') \in \mathcal{R}_{1,2}$, this gives $(x_1',(x_2',x_c')) \in \mathcal{R}$. 
\end{proof}

\ifitsdraft
\textcolor{blue}{
\begin{example}
  Consider the transition system $S_a$ and $S_b$ defined in Example \ref{examp1} and illustrated in Figure \ref{fig:examp1}. Consider the lower-closed specification $\phi \subseteq \mathbb{R}^w$ defined by $\phi=(-\infty,40]^w$. Consider the control policy $(X^o_{\mathcal{C}_a},\mathcal{C}_a)$ where $X^o_{\mathcal{C}_a}=\{x_{a1},x_{a2}\}$ and $\mathcal{C}_a$ is defined by: $\mathcal{C}_a(x_{a1})=\{1\}$, $\mathcal{C}_a(x_{a2})=\{2\}$ and $\mathcal{C}_a(x_{a3})=\{1,2\}$. One can easily check that $(X^o_{\mathcal{C}_a},\mathcal{C}_a)$ is a control policy for the transition system $S_a$ and the lower-closed specification $\phi$. Now using the fact that the relation $\mathcal{R}$ in (\ref{eqn:relation}) is a $0.5$-ASUAS relation from $S_a$ to $S_b$ and in view of Theorem \ref{thm:1}, one can refine the control policy $(X^o_{\mathcal{C}_a},\mathcal{C}_a)$ into a control policy $(X^o_{\mathcal{C}_b},\mathcal{C}_b)$ for the transition system $S_b$ and the specification $\Omega_{0.5}(\phi)=(-\infty,40.5]$. According to the refinement procedure in Theorem \ref{thm:1}, the control policy $(X^o_{\mathcal{C}_b},\mathcal{C}_b)$ is defined as follows: the set of initial states $X^o_{\mathcal{C}_b}=\{x_{b1},x_{b2},x_{b5}\}$ and the control map $\mathcal{C}_b$ is given by: $\mathcal{C}_b(x_{b1})=\{1\}$ and $\mathcal{C}_b(x_{b2})=\mathcal{C}_b(x_{b3})=\mathcal{C}_b(x_{b4})=\{1,2\}$.
\end{example}}
\fi

\section{Complete Model-based Abstractions} \label{sec:4}

\subsection{Sparse Abstractions for ASUAS}
    
We start constructing upper- and lower-sparse abstractions for the discrete-time monotone control system $\Sigma$ in (\ref{dis_sys}). 

An upper-sparse abstraction 
\begin{equation}
\label{eqn:up_abs}
 \mathcal{U}_\Sigma := \left(X_\mathcal{U}, X_{\mathcal{U}}^o, U_\mathcal{U}, \Delta_\mathcal{U}, Y_\mathcal{U}, \\H_\mathcal{U} \right) 
 \end{equation}
of $\Sigma$ is constructed as follows:

\begin{itemize}
    \item The set of states $X_\mathcal{U}$ is constructed based on a discretization of the state-space into $n_x\ge 1$ states using a finite partition\footnote{In defining a partition, we ignore the set of measure zero where intervals overlap for notational convenience, as is done for example in~\cite{coogan2015efficient}.} $X_\mathcal{U}$ of the set $X$. Each element $q$ of the partition can be described as an interval $q=[x_1^q,x_2^q]$; 
    \item The set of initial states $X_{\mathcal{U}}^o =\{q \in X_{\mathcal{U}} \mid q\cap X^o \neq \emptyset \}$; 
    \item The finite set of inputs $U_\mathcal{U}$ is constructed by approximating $U$ with $n_u$ values, $U_\mathcal{U}=\big\{\mathsf{u}_{\ell}\in U|\; \ell=0,\dots,n_u-1 \big\}$, with $\mathsf{u}_{0}=\underline{U}$ and $\mathsf{u}_{n_u-1}=\overline{U}$;
    \item The set of outputs is given by $Y_\mathcal{U}=X$;
    \item The output map is given for $q \in X_\mathcal{U}$ by $H_\mathcal{U}(q)=x_2^q$;
    \item The transition relation $\Delta_\mathcal{U} \subseteq X_\mathcal{U}\times U_\mathcal{U} \times X_\mathcal{U}$, defined  for $q, q'\in X_\mathcal{U}$ and $u\in U_\mathcal{U}$ as follows: $q' \in \Delta_\mathcal{U}(q,u)$ if and only if there exists $m\in \{1,\ldots,M\}$ such that $f(x_2^q,u, d_2^m) \in [x_1^{q'},x_2^{q'}]$.
\end{itemize}

Similarly, a lower-sparse abstraction 
\begin{equation}
    \label{eqn:low_abs}
    \mathcal{L}_\Sigma :=\left(X_{\mathcal{L}},X^o_{\mathcal{L}},U_{\mathcal{L}},\Delta_{\mathcal{L}},Y_{\mathcal{L}},H_{\mathcal{L}} \right)
\end{equation}
  of $\Sigma$ is constructed analogously to the upper-sparse abstraction $\mathcal{U}_\Sigma$ in \eqref{eqn:up_abs}: the set of states $X_{\mathcal{L}}$, the set of initial states $X^o_{\mathcal{L}}$, the set of inputs $U_{\mathcal{L}}$, and the output set $Y_{\mathcal{L}}=X$ are defined exactly as their counterparts in $\mathcal{U}_\Sigma$. The two abstractions differ only in the output map and the transition relation, described as follows:
\begin{itemize}
    \item The output map is given for $q \in X_{\mathcal{L}}$ by $H_{\mathcal{L}}(q)=x_1^q$;
    \item The transition relation $\Delta_\mathcal{L} \subseteq X_\mathcal{L}\times U_\mathcal{L} \times X_\mathcal{L}$ defined for $q,q'\in X_\mathcal{L}$ and $u\in U_\mathcal{L}$ as follows $q' \in \Delta_\mathcal{L}(q,u)$ if and only if there exists $m\in \{1,\ldots,M\}$ such that $f(x_1^q,u, d_2^m) \in [x_1^{q'},x_2^{q'}]$.
\end{itemize}

Next, we make the following assumption relating the sets $X_{\mathcal{U}}$ and $X_{\mathcal{L}}$ in $\mathcal{U}_\Sigma$ and $\mathcal{L}_\Sigma$, respectively.

\begin{assumption}
\label{ass33}
The lower and upper sparse abstractions $\mathcal{U}_{\Sigma}$ and $\mathcal{L}_{\Sigma}$ defined in (\ref{eqn:up_abs}) and (\ref{eqn:low_abs}), respectively, satisfy $X_{\mathcal{U}}=X_{\mathcal{L}}$.
\end{assumption}

Since $X_\mathcal{U}=X_\mathcal{L}$, we define the quantizer $Q_{X_\mathcal{U}}=Q_{X_{\mathcal{L}}}: X \rightarrow X_\mathcal{U}$ associated to the partition $X_\mathcal{U}$ as follows: for $x \in X$ and $q \in X_\mathcal{U}$, $q \in Q_{X_\mathcal{U}}(x)$ if and only if $x\in q$. An illustration of the construction procedure of the proposed lower and upper sparse abstractions $\mathcal{L}_\Sigma$ and $\mathcal{U}_\Sigma$ is provided in Figure~\ref{fig:abst}. 
\begin{figure}[!t]
	\begin{center}
		\includegraphics[scale=0.65]{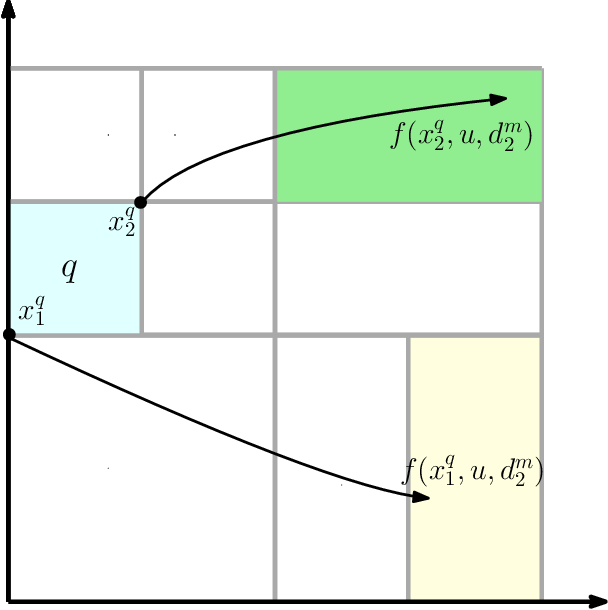}\\
	\end{center}
	\caption{A partition $X_{\mathcal{U}}=X_{\mathcal{L}}$ of the  state space, made of $9$ discrete states. The successor of the discrete state $q$ (in blue) for the upper (respectively, lower) sparse abstractions is the green state (respectively, the yellow state).}
	\label{fig:abst}	
\end{figure}

We are now ready to establish the ASUAS relations between the system $S_\Sigma$ and the proposed upper- and lower-sparse abstractions.

\begin{theorem} \label{thm:main}
Consider the discrete-time monotone control system $S_\Sigma=(X,X^o,U,\Delta,Y,H)$ in \eqref{eqSsigma} with $X\subseteq \mathbb{R}^n$ being lower closed, and such that Assumption \ref{assum:contr_dist} holds. Let $\mathcal{U}_\Sigma=\left(X_\mathcal{U}, X_{\mathcal{U}}^o, U_\mathcal{U}, \Delta_\mathcal{U}, Y_\mathcal{U}, \\H_\mathcal{U} \right)$ in \eqref{eqn:up_abs} and  $\mathcal{L}_\Sigma=\left(X_{\mathcal{L}},X^o_{\mathcal{L}},U_{\mathcal{L}},\Delta_{\mathcal{L}},Y_{\mathcal{L}},H_{\mathcal{L}} \right)$ in \eqref{eqn:low_abs} be an upper-sparse and 
a lower-sparse abstraction of $S_\Sigma$, respectively, satisfying Assumption \ref{ass33}. Then, 
\begin{equation}
\label{eqn:1}
\mathcal{U}_\Sigma  \preccurlyeq^0_u S_\Sigma \preccurlyeq^0_u \mathcal{L}_\Sigma.
\end{equation}
\end{theorem}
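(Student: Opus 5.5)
The plan is to prove the two halves of \eqref{eqn:1} separately. In each case we exhibit an explicit relation defined by the componentwise order and check conditions (i)--(iv) of Definition~\ref{Def:altsimu_up} with $\varepsilon=0$. The only tool needed is monotonicity of $f$, combined with the fact that the abstractions evaluate $f$ at a corner of the cell and at the top disturbance $d_2^m$.

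\emph{First relation, $\mathcal{U}_\Sigma \preccurlyeq^0_u S_\Sigma$.} Here $S_1=S_\Sigma$, $S_2=\mathcal{U}_\Sigma$, and we take
\[
\mathcal{R}_{\mathcal U}:=\{(x,q)\in X\times X_{\mathcal U} : x\le x_2^q\}.
\]
\begin{itemize}
\item For (i), given $q\in X^o_{\mathcal U}$, pick any $x\in q\cap X^o$; then $x\le x_2^q$.
\item Condition (ii) is immediate, since $H(x)=x\le x_2^q=H_{\mathcal U}(q)$.
\item For (iii), $U_{\mathcal U}\subseteq U\subseteq\uparrow U$ holds trivially. Every input of $S_\Sigma$ is enabled by the standing assumption, so $\{\downarrow u_2\}\cap U\subseteq U^a(x)$.
\item For (iv), take $(x,q)\in\mathcal R_{\mathcal U}$, $u_2\in U_{\mathcal U}$, $u_1\in U$ with $u_1\le u_2$, and a successor $x'=f(x,u_1,d)$ with $d\in[d_1^m,d_2^m]$. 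Let $q'$ be the cell containing $f(x_2^q,u_2,d_2^m)$, so that $q'\in\Delta_{\mathcal U}(q,u_2)$ by construction. Monotonicity gives $x'\le f(x_2^q,u_2,d_2^m)\le x_2^{q'}$, hence $(x',q')\in\mathcal R_{\mathcal U}$.
\end{itemize}

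\emph{Second relation, $S_\Sigma\preccurlyeq^0_u \mathcal{L}_\Sigma$.} Here $S_1=\mathcal L_\Sigma$, $S_2=S_\Sigma$, and we take
\[
\mathcal{R}_{\mathcal L}:=\{(q,x)\in X_{\mathcal L}\times X : x_1^q\le x\}.
\]
\begin{itemize}
\item For (i), an initial $x\in X^o$ lies in some cell $q$. That cell meets $X^o$, so $q\in X^o_{\mathcal L}$, and $x_1^q\le x$.
\item Condition (ii) is exactly $H_{\mathcal L}(q)=x_1^q\le x$.
\item For (iii), $U\subseteq\uparrow U_{\mathcal L}$ because $\mathsf u_0=\underline U\in U_{\mathcal L}$. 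It then suffices that every input of $\mathcal L_\Sigma$ is enabled, i.e.\ $f(x_1^q,u,d_2^m)$ lies in some cell of the partition.
\item For (iv), take $(q,x)\in\mathcal R_{\mathcal L}$, $u_2\in U$, $u_1\in U_{\mathcal L}$ with $u_1\le u_2$, and $q'\in\Delta_{\mathcal L}(q,u_1)$, witnessed by some $m$ with $f(x_1^q,u_1,d_2^m)\in q'$. Choose $x':=f(x,u_2,d_2^m)\in\Delta(x,u_2)$. Monotonicity gives $x'\ge f(x_1^q,u_1,d_2^m)\ge x_1^{q'}$, so $(q',x')\in\mathcal R_{\mathcal L}$.
\end{itemize}

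\emph{Main obstacle.} The delicate point is well-posedness: the abstract transitions evaluate $f$ at the corners $x_2^q$ and $x_1^q$, and we need these to lie in $X$ and their images to land in some cell.
\begin{itemize}
\item For $x_1^q$, this is where lower-closedness of $X$ is used: $x_1^q\le y$ for any $y\in q\cap X$, so $x_1^q\in X$.
\item For $x_2^q$, we use that the cells are intervals forming a partition of $X$, so $q\subseteq X$.
\item We also use that $f$ maps into $X$, so that every image belongs to some cell of the partition $X_{\mathcal U}=X_{\mathcal L}$ (Assumption~\ref{ass33}). This same fact gives the nonemptiness of $\Delta_{\mathcal L}(q,u)$ needed in (iii).
\end{itemize}
I would state these facts explicitly at the start of the proof, then carry out the two verifications above.
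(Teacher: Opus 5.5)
Your proof is correct and follows the paper's argument in its main lines. You use the same two relations ($x\le x_2^q$ and $x_1^q\le x$). Conditions (i), (ii) and (iv) are handled the same way: by monotonicity, comparing at the cell corners under the top disturbance $d_2^m$, and using $\underline U\in U_{\mathcal L}$ to get $U\subseteq\uparrow U_{\mathcal L}$.

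The one real difference is condition (iii). You make it trivial by invoking the standing all-inputs-enabled assumption and by assuming $f(X\times U\times D)\subseteq X$. The paper instead transfers enabledness by monotonicity:
\begin{itemize}
\item for $u_q\le u$, it uses $f(x_1^q,u_q,d_2^m)\le f(x,u,d_2^m)\in X$;
\item for $u\le u_q$, it uses $f(x,u,d)\le f(x_2^q,u_q,d_2^m)\in X$;
\end{itemize}
and in both cases it concludes membership in $X$ from $\downarrow X\subseteq X$.

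Your route is shorter. It also makes explicit the well-posedness facts the paper's (iv) takes for granted, namely $f(x,u_2,d_2^m)\in X$ and $f(x_2^q,u_2,d_2^m)\in X$. But under your assumption the lower-closedness hypothesis does no work. Your use of it to get $x_1^q\in X$ is redundant, since $x_1^q\in q\subseteq X$.

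The paper's route is the one that still works if $f$ may leave $X$ and an input counts as enabled at a state only when all its successors remain in $X$. In that setting, lower-closedness is exactly what passes enabledness from the larger state--input pair to the smaller one.
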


\begin{proof}
To prove the result, we show each of ASUAS relations separately.
\textbf{\underline{$S_{\Sigma} \preccurlyeq^0_u \mathcal{L}_{\Sigma}$}
:} Consider the relation $\mathcal{R} \subseteq X_{\mathcal{L}} \times X$ defined as follows:
\begin{equation}
\label{eqn:proof2}
    \mathcal{R} := \{(q,x) \in X_{\mathcal{L}} \times X \mid x_1^q \leq x \}.
\end{equation}
Let us show that $\mathcal{R}$ is a $0$-ASUAS relation from $S_{\Sigma}$ to $ \mathcal{L}_{\Sigma}$. Consider $x \in X^o$, from the construction of the set $X^o_{\mathcal{L}}$ we have the existence of $q \in X^o_{\mathcal{L}}$ such that $x_1^q \leq x$, which implies that $(q,x) \in \mathcal{R}$. Hence, the first condition of Definition~\ref{Def:altsimu_up} is directly satisfied. Now consider $(q,x) \in \mathcal{R}$, we have $H_{\mathcal{L}}(q)=x_1^q \leq x=H(x)$ and condition (ii) of Definition~\ref{Def:altsimu_up} is satisfied. 

For $(q,x) \in \mathcal{R}$ and $u \in U^a(x)$, we have that $\Delta(x,u) \subseteq X$. Consider $u_q \in \{\downarrow u\} \cap U_{\mathcal{L}}$, we have from the monotonicity of the system $\Sigma$ that $f(x_1^q,u_q, d_2^m) \in \{\downarrow \Delta(x,u)\} \subseteq \{\downarrow X\} \subseteq X$, where the last inclusion comes from the lower closedness of the set $X$. Hence, there exists $q' \in X_{\mathcal{L}}$ such that $f(x_1^q,u_q, d_2^m) \in q'$ and $q' \in \Delta_{\mathcal{L}}(q,u_q)$, which in turn implies that $u_q \in U^a_{\mathcal{U}}(q)$. Finally, since $\underline{U} \in U_{\mathcal{L}}$ by construction and $\underline{U} \leq u$ for all $u \in U$, we have $U \subseteq \{\uparrow U_{\mathcal{L}}\}$, and condition (iii) of Definition~\ref{Def:altsimu_up} is satisfied.

Now for $(q,x) \in \mathcal{R}$ and $u \in U^a(x)$, considering $u_q \in U^a_{\mathcal{L}}(q)$ such that $u_q \leq u$, the existence of such $u_q$ is guaranteed in view of Remark~\ref{rk:existence}. Now for $q' \in \Delta_{\mathcal{L}}(q,u_q)$, we have the existence of $m \in \{1,2,\ldots,M\}$ such that $f(x_1^q,u_q, d_2^m) \in [x_1^{q'},x_2^{q'}]$. Moreover, from the monotonicity of the system $\Sigma$ and since $x_1^q \leq x$ and $u_q \leq u$ we have the existence of $x' = f(x,u, d_2^m) \in \Delta(x,u)$ satisfying $x_1^{q'} \leq f(x_1^q,u_q, d_2^m)\leq f(x,u, d_{2}^m)=x'$. Then, $(q',x') \in \mathcal{R}$ and condition (iv) of Definition~\ref{Def:altsimu_up} is satisfied.

\textbf{\underline{$\mathcal{U}_{\Sigma}  \preccurlyeq^0_u S_{\Sigma}$}
:} Consider the relation $\mathcal{R} \subseteq X \times X_{\mathcal{U}}$ defined as follows:
\begin{equation}
\label{eqn:proof2_2}
    \mathcal{R}=\{(x,q) \in X \times X_{\mathcal{U}} \mid x \leq x_2^q \}
\end{equation}
Let us show that $\mathcal{R}$ is a $0$-ASUAS relation from $\mathcal{U}_{\Sigma}$ to $S_{\Sigma}$. For $q \in X^o_{\mathcal{U}}$. By definition of $X^o_{\mathcal{U}}$, there exists 
$x \in X^o \cap q$. Since $x \in q = [x_1^q, x_2^q]$, we have $x \leq x_2^q$, 
and hence $(x,q) \in \mathcal{R}$. Thus, condition (i) of 
Definition~\ref{Def:altsimu_up} is satisfied. Now for $(x,q) \in \mathcal{R}$, we have $H(x)=x \leq H_{\mathcal{U}}(q)=x_2^q $ and condition (ii) of Definition~\ref{Def:altsimu_up} is satisfied. 

For $(x,q) \in \mathcal{R}$ and 
$u_q \in U^a_{\mathcal{U}}(q)$, we have the existence of $q'\in X_{\mathcal{U}}$ such that  
$f(x_2^q,u_q,d_2^m) \in q' \subseteq X$. 
Now for 
$u \in \{\downarrow u_q\} \cap U$, we have from the monotonicity of the system $\Sigma$ that $x' \in \Delta(x,u) \in \downarrow f(x_2^q,u_q,d_2^m)  \subseteq \downarrow X \subseteq X$, where the last inclusion follows the lower closedness of the set $X$, and one gets $u \in U^a(x)$. Finally, since $U_{\mathcal{U}} \subseteq U$ by construction, we have $U_{\mathcal{U}} \subseteq \{\uparrow U\}$, and condition (iii) of Definition~\ref{Def:altsimu_up} is satisfied.

Now for $(x,q) \in \mathcal{R}$ and $u_q \in U^a_{\mathcal{U}}(q)$, considering $u \in U^a(x)$ such that $u \leq u_q$, the existence of such $u$ is guaranteed in view of Remark~\ref{rk:existence}. Now for $x' \in \Delta(x,u)$, we have the existence of $m \in \{1,2,\ldots,M\}$ and $d \in [d_1^m,d_2^m]$ such that $x'=f(x,u,d)$. Similarly, we have the existence of $q' \in \Delta (q,u_q)$ such that $f(x_2^q,u_q,d_2^m) \in q'=[x_1^{q'},x_2^{q'}]$. since $x \leq x_2^q$ and $u \leq u_q$ one gets from the monotonicity of the system $\Sigma$ that $x'=f(x,u, d) \leq f(x_2^q,u_q,d_2^m) \leq x_2^{q'}$. Then, $(x',q') \in \mathcal{R}$ and condition (iv) of Definition~\ref{Def:altsimu_up} is satisfied.
\end{proof}

According to the refinement procedure in Theorem~\ref{thm:1}, a controller synthesized for $\mathcal{U}_{\Sigma}$, under a lower-closed specification, can be refined into a controller for $S_{\Sigma}$. Conversely, the absence of controller for $\mathcal{L}_{\Sigma}$ certifies the absence of controller for $S_{\Sigma}$. We refer to the pair $(\mathcal{U}_{\Sigma},\mathcal{L}_{\Sigma})$ of Theorem~\ref{thm:main} as a \emph{complete abstraction pair} for $S_{\Sigma}$. This completeness property is of significant practical importance. While existing sound abstractions in the literature guarantee that a controller synthesized for the abstraction can be refined into one for the original 
system~\cite{tabuada2009verification,belta2017formal,girard2007approximation,hsu2018multi}, they are inherently one-sided: the absence of an abstract controller leaves the question of whether a controller for the original system exists entirely open. Our approach addresses this limitation by additionally leveraging the lower-sparse abstraction, which allows one to conclusively certify the existence or non-existence of a controller for the original system.

\subsection{Controlling the Conservativeness of the Sparse Abstractions}

Theorem~\ref{thm:main} establishes that the lower- and upper-sparse abstractions together form a complete abstraction pair for monotone systems. However, this result may be conservative in cases where $\mathcal{L}_{\Sigma}$ admits a controller while $\mathcal{U}_{\Sigma}$ does not. 
Motivated by~\cite{puri1995varepsilon}, in this section, we propose an approach to mitigate the degree of conservativeness between $\mathcal{U}_{\Sigma}$ and $\mathcal{L}_{\Sigma}$.

Before providing our result, we make an assumption on the  considered set of states $X_{\mathcal{U}}=X_{\mathcal{L}}$ for $\mathcal{U}_{\Sigma}$ and $\mathcal{L}_{\Sigma}$ 
 defined in (\ref{eqn:up_abs}) and (\ref{eqn:low_abs}), respectively, and their corresponding quantizers $Q_{X_{\mathcal{U}}}=Q_{X_{\mathcal{L}}}$.

\begin{figure}
\centering
    \includegraphics[scale=0.55]{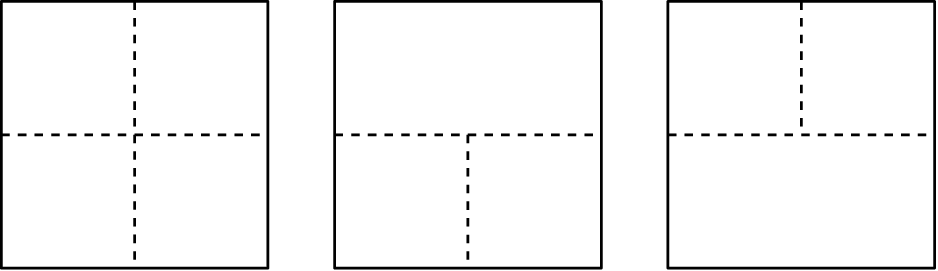}
\caption{The first two partitions satisfy Assumption~\ref{assum1}, while the third partition does not. }
\label{fig:partitions}
\end{figure}

\begin{assumption}
\label{assum1}
\begin{itemize}
    \item[(i)] For  $q,q' \in X_{\mathcal{U}}$ with $q=[x_1^q,x_2^q]$ and $q'=[x_1^{q'},x_2^{q'}]$, if $x_2^q \leq x_2^{q'}$ then $x_1^q \leq x_1^{q'}$.
    \item[(ii)] For $x,x' \in X$ if $x \leq x'$, then for $ q=[x_1^q,x_2^q]= Q_{X_\mathcal{U}}(x)$ and $ q'=[x_1^{q'},x_2^{q'}]= Q_{X_\mathcal{U}}(x')$ we have $x_2^q \leq x_2^{q'}$.
\end{itemize}
\end{assumption}

We note that the practically-relevant case of uniform rectangular grids, where all the cells have equal side lengths along each axis, always satisfies Assumption~\ref{assum1}. Fig.~\ref{fig:partitions} shows other examples illustrating Assumption~\ref{assum1}. We also make the following assumption.

\begin{assumption}
\label{ass:bounds}
    The map $f$ is continuously differentiable and there exist $\alpha_{ij} \geq 0$, $i,j \in \{1,2,\ldots,n\}$, such that
    \begin{equation*}
        \label{eqn:bound_x}
        \begin{aligned}
            0 &\leq \frac{\partial f_i}{\partial x_j} \leq \alpha_{ij} \qquad \forall i,j \in \{1,2,\ldots,n\}.
        \end{aligned}
    \end{equation*}
\end{assumption}

Theorem~\ref{thm:main} guarantees completeness but leaves the gap between $\mathcal{U}_{\Sigma}$ and $\mathcal{L}_{\Sigma}$ uncontrolled. To bound this gap by the discretization parameter, we relate the two abstractions through a \emph{perturbed} version of $\Sigma$ and use Assumption~\ref{ass:bounds} to translate a prescribed precision $\varepsilon$ into an admissible cell size $\eta$. To this end, we first introduce a perturbed version of the system $\Sigma$ in \eqref{dis_sys}: for $\varepsilon \in \mathbb{R}^n_{+}$,
\begin{equation}
\label{dis_sys_pert}
\Sigma_\varepsilon : x(k+1) \in f(x(k),u(k),d(k))+\Omega_{\varepsilon}(0). 
\end{equation}

\begin{theorem} \label{thm:main2}
 Consider the monotone control system \( S_\Sigma \) in \eqref{eqSsigma}, where  \( X \) is lower closed,  \( f \) satisfies Assumption~\ref{ass:bounds}, and Assumption~\ref{assum:contr_dist} holds. Let $\mathcal{U}_\Sigma$ and $\mathcal{L}_\Sigma$ in \eqref{eqn:up_abs} and \eqref{eqn:low_abs}, respectively, be upper-  and lower-sparse abstractions of $S_\Sigma$ such that Assumptions \ref{ass33} and \ref{assum1} hold. Then, given $\varepsilon \in \mathbb{R}^n_+$,  and $\mathcal{L}_{\Sigma_{\varepsilon}}$\footnote{For $\mathcal{L}_{\Sigma_{\varepsilon}}$, the set of states $X_{\mathcal{L}}$, initial states $X_{\mathcal{L}}^o$, control inputs $U_{\mathcal{L}}$, outputs $Y_{\mathcal{L}}$, and the output map $H_{\mathcal{L}}$ are inherited from the  $\mathcal{L}_{\Sigma}$, and the transition relation $\Delta_{\mathcal{L},\mathcal{\varepsilon}}$ is constructed according the dynamics $x(k+1) \in f(x(k),u(k),d(k))+\Omega_{\varepsilon}(0)$ of $S_{\Sigma_{\varepsilon}}$.} (a lower-sparse abstraction of $S_{\Sigma_{\varepsilon}}$), we have,
\begin{equation}
\label{eqn:3}
    \mathcal{L}_{\Sigma_{\varepsilon}}  \preccurlyeq^\eta_u \mathcal{U}_{\Sigma} \preccurlyeq^0_u \mathcal{L}_{\Sigma}
\end{equation}
provided that 
\begin{equation}
\label{eqPrecision}
\begin{aligned}
 \varepsilon_i &\geq \sum\limits_{j=1}^n\alpha_{ij}\eta_j \qquad \forall i \in \{1,2,...,n\},
 \\
\eta_i & =\max\{|x_{2,i}^q-x_{1,i}^q|:q \in X_{\mathcal{U}}\} ~~~ \forall i \in \{1,\ldots,n\}.
\end{aligned}
\end{equation}
\end{theorem}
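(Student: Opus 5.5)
We prove the two relations in \eqref{eqn:3} separately. The right-hand relation $\mathcal{U}_{\Sigma} \preccurlyeq^0_u \mathcal{L}_{\Sigma}$ needs no new work. Theorem~\ref{thm:main} gives $\mathcal{U}_\Sigma \preccurlyeq^0_u S_\Sigma$ and $S_\Sigma \preccurlyeq^0_u \mathcal{L}_\Sigma$. We then apply Proposition~\ref{prop:transitivity} with $S_3=\mathcal{U}_\Sigma$, $S_2=S_\Sigma$, $S_1=\mathcal{L}_\Sigma$ and $\varepsilon_1=\varepsilon_2=0$. Its hypothesis $U_1\subseteq U_2$ holds because $U_{\mathcal{L}}\subseteq U$ by construction.

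For the left-hand relation $\mathcal{L}_{\Sigma_\varepsilon} \preccurlyeq^\eta_u \mathcal{U}_\Sigma$ we need an $\eta$-ASUAS relation $\mathcal{R}\subseteq X_{\mathcal{U}}\times X_{\mathcal{L}}$ from $\mathcal{L}_{\Sigma_\varepsilon}$ to $\mathcal{U}_\Sigma$. The identity is too rigid, so we take
\[
\mathcal{R}:=\{(q,p)\in X_{\mathcal{U}}\times X_{\mathcal{L}} : x_2^q\le x_2^p\}.
\]
Condition (i) follows by taking $q=p$, since $X^o_{\mathcal{U}}=X^o_{\mathcal{L}}$. For condition (ii), every $p$ satisfies $x_2^p\le x_1^p+\eta$ by the definition of $\eta$ in \eqref{eqPrecision}, so $H_{\mathcal{U}}(q)=x_2^q\le x_2^p\le H_{\mathcal{L}}(p)+\eta$. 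For condition (iii), the inclusion $U_{\mathcal{L}}\subseteq\uparrow U_{\mathcal{U}}$ holds because the two input sets coincide. Enabledness of every $u\le u_p$ at $q$ will be a by-product of the key estimate below.

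The key estimate is the following. Let $(q,p)\in\mathcal{R}$, $u_p\in U^a_{\mathcal{L},\varepsilon}(p)$, $u\in U_{\mathcal{U}}$ with $u\le u_p$, and $m\in\{1,\ldots,M\}$. Assumption~\ref{assum1}(i) applied to $x_2^q\le x_2^p$ gives $x_1^q\le x_1^p$. Then
\[
f(x_2^q,u,d_2^m)\le f(x_2^q,u_p,d_2^m)\le f(x_1^q,u_p,d_2^m)+\alpha\eta\le f(x_1^p,u_p,d_2^m)+\varepsilon .
\]
The first step is monotonicity in $u$. The second is the mean value theorem along the segment from $x_1^q$ to $x_2^q$, using $0\le\partial f_i/\partial x_j\le\alpha_{ij}$ from Assumption~\ref{ass:bounds} and $0\le x_2^q-x_1^q\le\eta$ componentwise. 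The third uses monotonicity in $x$ together with $\sum_j\alpha_{ij}\eta_j\le\varepsilon_i$ from \eqref{eqPrecision}.

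We then use this estimate for conditions (iii) and (iv). Write $z:=f(x_2^q,u,d_2^m)$ and $w:=f(x_1^p,u_p,d_2^m)$, and set $y:=\max(z,w)$ componentwise. Then $y\ge z$ and $y\in w+\Omega_\varepsilon(0)$, so $y$ is an admissible successor of $x_1^p$ under $u_p$ for the perturbed dynamics $\Sigma_\varepsilon$.
\begin{itemize}
\item For (iv), take $q'\in\Delta_{\mathcal{U}}(q,u)$ generated by $z\in q'$ and let $p':=Q_{X_{\mathcal{L}}}(y)$. Then $p'\in\Delta_{\mathcal{L},\varepsilon}(p,u_p)$. Since $z\le y$, Assumption~\ref{assum1}(ii) gives $x_2^{q'}\le x_2^{p'}$, i.e.\ $(q',p')\in\mathcal{R}$.
\item For (iii), the same estimate shows $z$ is dominated by a point of $X$: since $u_p$ is enabled at $p$, some successor point of $x_1^p$ under the $\varepsilon$-perturbed dynamics lies in $X$. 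Lower closedness of $X$ then gives $z\in X$, so $u\in U^a_{\mathcal{U}}(q)$.
\end{itemize}

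The main obstacle is domain membership. We must make sure the chosen point $y$ (and $z$) lies in $X$, so that the quantizer is defined on it. The enabledness of $u_p$ at $p$ only guarantees that \emph{some} point of $w+\Omega_\varepsilon(0)$ lies in $X$, not necessarily the maximal one. The first thing to try is the smallest admissible choice: pick $\tilde y\in(w+\Omega_\varepsilon(0))\cap X$ and replace $y$ by $\max(z,\tilde y)$, which stays in $w+\Omega_\varepsilon(0)$. This leaves the question of whether it lies in $X$. If that fails, we fall back on the standing convention (as in the nonblocking assumption on transition systems) that perturbed successors used to build $\Delta_{\mathcal{L},\varepsilon}$ are taken inside $X$, or equivalently that $X$ contains all these points. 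Under that convention, lower closedness of $X$ closes all membership gaps.
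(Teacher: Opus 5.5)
\textbf{Overall.} Your route is the paper's route. The right-hand relation comes from Theorem~\ref{thm:main} plus Proposition~\ref{prop:transitivity}; using $U_{\mathcal{L}}\subseteq U$ instead of $U_{\mathcal{U}}\subseteq U$ works equally well. You use the same relation $x_2^q\le x_2^p$, the same treatment of (i)--(ii), and the same key estimate $z:=f(x_2^q,u,d_2^m)\le w+\varepsilon$ with $w:=f(x_1^p,u_p,d_2^m)$. Your mean-value argument on $[x_1^q,x_2^q]$ is just an unpacked form of Lemma~\ref{lem:growth_bound}.

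\textbf{The gap: the witness in (iv).} $\mathcal{L}_{\Sigma_\varepsilon}$ is the \emph{lower-sparse} abstraction of $\Sigma_\varepsilon$. Write $x^+=f(x,u,d)+e$ with $e\in\Omega_\varepsilon(0)$ treated as an extra disturbance. The sparse construction then uses only the extreme disturbance, so the successors of $p$ under $u_p$ are the cells containing $w+\varepsilon$. This is exactly how the paper uses $\Delta_{\mathcal{L},\varepsilon}$. A cell containing an intermediate point such as $y=\max(z,w)$ is in general not in $\Delta_{\mathcal{L},\varepsilon}(p,u_p)$. Your claim ``$p'=Q_{X_{\mathcal{L}}}(y)\in\Delta_{\mathcal{L},\varepsilon}(p,u_p)$'' therefore fails under the intended construction. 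It holds only if you read the footnote as a non-sparse abstraction, where every point of $w+\Omega_\varepsilon(0)$ generates a transition.

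\textbf{The fix.} Take $p':=Q_{X_{\mathcal{L}}}(w+\varepsilon)$ directly. This is a valid successor under either reading. Your estimate already gives $z\le w+\varepsilon$, so Assumption~\ref{assum1}(ii) yields $x_2^{q'}\le x_2^{p'}$, hence $(q',p')\in\mathcal{R}$.

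\textbf{Your ``main obstacle'' also disappears.} In the sparse abstraction, $u_p\in U^a_{\mathcal{L},\varepsilon}(p)$ is a statement about the maximal images $w+\varepsilon$ themselves. The paper reads enabledness as these maximal images lying in $X$ for the relevant $m$; compare its use of $\Delta(x,u)\subseteq X$ in the proof of Theorem~\ref{thm:main}. Then $z\le w+\varepsilon\in X$ and lower closedness of $X$ give $z\in X$. This is condition (iii), $u\in U^a_{\mathcal{U}}(q)$, and it also makes $p'$ well defined in (iv). So replace the tentative $\max(z,\tilde y)$ detour and the appeal to an extra convention with this one-line argument, which is what the paper does.
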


\begin{proof}
By Theorem~\ref{thm:main}, $\mathcal{U}_{\Sigma} \preccurlyeq^0_u S_{\Sigma}$ and $S_{\Sigma} \preccurlyeq^0_u \mathcal{L}_{\Sigma}$. Identifying $S_3=\mathcal{U}_{\Sigma}$, $S_2=S_{\Sigma}$, $S_1=\mathcal{L}_{\Sigma}$ in Proposition~\ref{prop:transitivity}, the hypothesis $U_3 \subseteq U_2$ holds since $U_{\mathcal{U}} \subseteq U$ by the construction of the upper-sparse abstraction. Applying Proposition~\ref{prop:transitivity}, one gets $\mathcal{U}_{\Sigma} \preccurlyeq^0_u \mathcal{L}_{\Sigma}$.

Lets us show that $\mathcal{L}_{\Sigma_{\varepsilon}}  \preccurlyeq^\eta_u \mathcal{U}_{\Sigma}$. Consider the relation $\mathcal{R} \subseteq X_{\mathcal{U}} \times X_{\mathcal{L}}$ defined by $(q_1,q_2) \in \mathcal{R}$ if and only if $x_2^{q_1} \leq x_2^{q_2}$. Let us show that $\mathcal{R}$ is an $\eta$-ASUAS relation from $\mathcal{L}_{\Sigma_{\varepsilon}}$ to $\mathcal{U}_{\Sigma}$.

Since $X^o_\mathcal{U}=X^o_\mathcal{L}$ condition (i) of Definition~\ref{Def:altsimu_up} is directly satisfied. Now consider $(q_1,q_2) \in \mathcal{R}$, we have $x_2^{q_1}\leq x_2^{q_2}$ and $H_{\mathcal{U}}(q_1)=x_2^{q_1} \leq x_2^{q_2} \leq H_{\mathcal{L}}(q_2)+\eta= x_1^{q_2}+\eta$, where the second inequality follows from (\ref{eqPrecision}). Hence, condition (ii) in Definition~\ref{Def:altsimu_up} is satisfied.

Consider $(q_1,q_2) \in \mathcal{R}$ and $u_{2} \in U^a_{\mathcal{L}}(q_2)$, hence we have that $\max\{f(x_1^{q_2},u_2,d_2^m)+\Omega_{\varepsilon}(0)\} \in X$. Now consider $u_1 \in \{\downarrow u_2\} \cap U_{\mathcal{U}}$. We have that
\begin{align}
\label{eqn:completeness}
\max\{f(x_1^{q_2},u_2,d_2^m)+&\Omega_{\varepsilon}(0)\} \geq  \max\{f(x_1^{q_1},u_2,d_2^m)+\Omega_{\varepsilon}(0)\} \nonumber  \\ &\geq \max\{f(x_2^{q_1}-\eta,u_1,d_2^m)+\Omega_{\varepsilon}(0)\} \nonumber\\ &\geq f(x_2^{q_1},u_1,d_2^m),
\end{align}
where the first inequality comes from (i) in Assumption~\ref{assum1}, the second inequality comes from the fact that $u_1 \leq u_2$ and $x_1^{q_1} \geq x_2^{q_1}-\eta$ in (\ref{eqPrecision}), and the last inequality follows from Lemma~\ref{lem:growth_bound} under Assumption \ref{ass:bounds}. Hence, one gets that $f(x_2^{q_1},u_1,d_2^m) \in \downarrow \max\{f(x_1^{q_2},u_2,d_2^m)+\Omega_{\varepsilon}(0)\} \subseteq \downarrow X \subseteq X$, where the last inclusion comes from the lower closedness of the set $X$, and one gets that $u_1 \in U^a_{\mathcal{U}}(q_1)$. Finally, since $U_{\mathcal{L}} = U_{\mathcal{U}}$ by Assumption~\ref{ass33}, we have $U_{\mathcal{L}} \subseteq \{\uparrow U_{\mathcal{U}}\}$, and condition (iii) of Definition~\ref{Def:altsimu_up} is satisfied.

Consider $(q_1,q_2) \in \mathcal{R}$, $u_{2} \in U^a_{\mathcal{L}}(q_2)$ and any $u_{1} \in U^a_{\mathcal{U}}$ satisfying $u_{1} \leq u_{2}$. For $q_1' \in \Delta_{\mathcal{U}}(q_1,u_1)$, we have the existence of $m \in \{1,2,\ldots, M\}$ such that $f(x_2^{q_1},u_1,d_2^m) \in q_1'$. Moreover, we have from (\ref{eqn:completeness}) that
$$ f(x_2^{q_1},u_1,d_2^m) \leq \max\{f(x_1^{q_2},u_2,d_2^m)+\Omega_{\varepsilon}(0)\}, $$ 
Hence, one gets from (ii) in Assumption~\ref{assum1} the existence of $$q_2' \in Q_{X_\mathcal{U}}(\max\{f(x_1^{q_2},u_2,\\d_2^m)+\Omega_{\varepsilon}(0)\})$$ 
such that $x_2^{q_1'} \leq x_2^{q_2'}$. Finally, since 
$$ q_2' \in Q_{X_\mathcal{U}}(\max\{f(x_1^{q_2},\\u_2,d_2^m)+\Omega_{\varepsilon}(0)\}), $$ 
one gets that $q_2' \in \Delta_{\mathcal{L},\varepsilon}(q_2,u_2)$ satisfies $x_2^{q_1'} \leq x_2^{q_2'}$ which implies that $(q_1',q_2') \in \mathcal{R}$ and condition (iv) in Definition~\ref{Def:altsimu_up} is satisfied. 
\end{proof}

According to Theorem \ref{thm:main2},  to achieve a desired precision $\varepsilon$ between the lower- and the 
upper-sparse abstractions, the largest discretization parameter $\eta \in \mathbb{R}^n_{+}$ should satisfy  \eqref{eqPrecision}. This result is important in practice. Indeed, it shows that any desired accuracy can be achieved using a sufficiently fine  discretization. More importantly, it indicates, for a desired precision $\varepsilon$, the maximum value of the discretization parameter $\eta$ to  use.
Combining Theorems~\ref{thm:main} and~\ref{thm:main2}, and Proposition~\ref{prop:transitivity}, we can formulate the following corollary showing that $S_{\Sigma_\varepsilon}$ $\eta$-ASUAS $\mathcal{U}_{\Sigma}$.  

\begin{corollary}
\label{coro}
Under the preliminaries of Theorem~\ref{thm:main2},
\begin{equation}
\label{eqn:corro1}
    S_{\Sigma_{\varepsilon}} \preccurlyeq^\eta_u \mathcal{U}_{\Sigma} \preccurlyeq^0_u S_{\Sigma}.
\end{equation}
\end{corollary}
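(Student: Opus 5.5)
The plan splits \eqref{eqn:corro1} into its two relations. The right-hand one, $\mathcal{U}_{\Sigma} \preccurlyeq^0_u S_{\Sigma}$, is exactly the first half of \eqref{eqn:1} in Theorem~\ref{thm:main}, whose hypotheses are contained in those of Theorem~\ref{thm:main2}. Only the left-hand relation $S_{\Sigma_{\varepsilon}} \preccurlyeq^\eta_u \mathcal{U}_{\Sigma}$ needs work. I will obtain it by transitivity through the chain
$$S_{\Sigma_\varepsilon} \preccurlyeq^0_u \mathcal{L}_{\Sigma_\varepsilon} \preccurlyeq^\eta_u \mathcal{U}_\Sigma,$$
so that Proposition~\ref{prop:transitivity} gives precision $0+\eta=\eta$.

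The second link is the left relation of \eqref{eqn:3} in Theorem~\ref{thm:main2}, so it is available directly. For the first link, I will apply the lower-sparse part of Theorem~\ref{thm:main} to the perturbed system $\Sigma_\varepsilon$. To make it fit the framework, I will rewrite $\Sigma_\varepsilon$ as a monotone system $\tilde f(x,u,(d,w)) = f(x,u,d)+w$ with augmented disturbance $(d,w) \in \bigcup_m [d_1^m,d_2^m]\times[-\varepsilon,\varepsilon]$. This $\tilde f$ is monotone in $(x,u,(d,w))$, and the augmented disturbance set is still a finite union of intervals, so Assumption~\ref{assum:contr_dist} holds. Its maximal disturbance corners are $(d_2^m,\varepsilon)$, so the lower-sparse transitions built from $\tilde f$ are $f(x_1^q,u,d_2^m)+\varepsilon = \max\{f(x_1^q,u,d_2^m)+\Omega_\varepsilon(0)\}$. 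This is exactly the construction of $\Delta_{\mathcal{L},\varepsilon}$ used in the proof of Theorem~\ref{thm:main2}. Theorem~\ref{thm:main}, with $X$ lower closed, then gives $S_{\Sigma_\varepsilon} \preccurlyeq^0_u \mathcal{L}_{\Sigma_\varepsilon}$ via $\mathcal{R}=\{(q,x): x_1^q\le x\}$.

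For transitivity I set $S_3=S_{\Sigma_\varepsilon}$, $S_2=\mathcal{L}_{\Sigma_\varepsilon}$ and $S_1=\mathcal{U}_\Sigma$. The side condition of Proposition~\ref{prop:transitivity} requires $U_1\subseteq U_2$ or $U_3\subseteq U_2$. Here $U_1=U_{\mathcal{U}}=U_{\mathcal{L}}=U_2$ by Assumption~\ref{ass33} and the identical input construction, so $U_1\subseteq U_2$ holds. The input sets are finite or intervals, hence compact, and all outputs lie in $X$, so Assumption~\ref{assPO} holds. Proposition~\ref{prop:transitivity} then yields $S_{\Sigma_\varepsilon}\preccurlyeq^{\eta}_u \mathcal{U}_\Sigma$, which together with the first paragraph gives \eqref{eqn:corro1}.

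The main obstacle I expect is the first link. Theorem~\ref{thm:main} is stated for a single-valued $f$ with an interval-union disturbance, and the footnote defining $\mathcal{L}_{\Sigma_\varepsilon}$ is informal. I will need to check that the augmented-disturbance reformulation produces exactly $\Delta_{\mathcal{L},\varepsilon}$. I also need to check that the lower-closedness and enabledness arguments (condition (iii)) survive when the perturbed successor may leave $X$. If the reformulation is awkward, the fallback is to redo the proof of $S_\Sigma\preccurlyeq^0_u\mathcal{L}_\Sigma$ verbatim with $f(\cdot)+w$. In condition (iv), each abstract successor of $f(x_1^q,u_q,d_2^m)+\varepsilon$ is dominated by the concrete successor $f(x,u,d_2^m)+\varepsilon\in\Delta_\varepsilon(x,u)$, by monotonicity.
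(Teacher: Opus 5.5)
Your proof is correct and follows the paper's own argument: the same chain $S_{\Sigma_\varepsilon} \preccurlyeq^0_u \mathcal{L}_{\Sigma_\varepsilon} \preccurlyeq^\eta_u \mathcal{U}_\Sigma \preccurlyeq^0_u S_\Sigma$, with Proposition~\ref{prop:transitivity} applied to $S_3=S_{\Sigma_\varepsilon}$, $S_2=\mathcal{L}_{\Sigma_\varepsilon}$, $S_1=\mathcal{U}_\Sigma$ under the side condition $U_1\subseteq U_2$ from $U_{\mathcal{U}}=U_{\mathcal{L}}$. Your augmented-disturbance rewriting $\tilde f(x,u,(d,w))=f(x,u,d)+w$ adds something the paper omits: it justifies applying Theorem~\ref{thm:main} to $\Sigma_\varepsilon$, a step the paper takes without comment.
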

\begin{proof}
By combining Theorems~\ref{thm:main} and~\ref{thm:main2} one gets 
$ S_{\Sigma_{\varepsilon}}  \preccurlyeq^0_u \mathcal{L}_{\Sigma_{\varepsilon}}  \preccurlyeq^\eta_u \mathcal{U}_{\Sigma} \preccurlyeq^0_u S_{\Sigma}$. 
Identifying $S_3=S_{\Sigma_{\varepsilon}}$, $S_2=\mathcal{L}_{\Sigma_{\varepsilon}}$, $S_1=\mathcal{U}_{\Sigma}$ in Proposition~\ref{prop:transitivity}, the hypothesis $U_1 \subseteq U_2$ holds since $U_{\mathcal{U}} = U_{\mathcal{L}}$ by the construction of the two sparse abstractions. Applying Proposition~\ref{prop:transitivity} we obtain (\ref{eqn:corro1}).
\end{proof}

According to the refinement procedure in Theorem~\ref{thm:1}, the previous corollary guarantees approximate completeness of the upper-sparse abstraction, under a lower-closed specification $\phi$. That is, if we can synthesize a controller for  $\mathcal{U}_{\Sigma}$, then we can refine it into a controller for $S_{\Sigma}$, and if there is no controller for $\mathcal{U}_{\Sigma}$, then there is no controller for the system $S_{\Sigma_{\varepsilon}}$ under the specification $\Omega_{-\eta}(\phi)$\footnote{For a specification $\phi \subseteq Y^w$ and $\eta \in \mathbb{R}^n_{+}$, the set $\Omega_{-\eta}(\phi)$ is defined by $\Omega_{-\eta}(\phi):=\{\sigma_y=y_0,y_1,y_2,\ldots \in Y^w \mid \Omega_{\eta}(\sigma_y) \subseteq \phi$\}.}.
Moreover, any precision $\varepsilon \in \mathbb{R}^n_{+}$ can be achieved by choosing a discretization parameter $\eta \in \mathbb{R}^n_{+}$ according to (\ref{eqPrecision}).

\section{Complete Data-Driven Abstractions}
\label{sec:dd}
In this section, we propose a data-driven approach to construct the upper- and lower-sparse abstractions.
Instead of relying on the system's dynamics, we use a set of data points sampled from the system to construct the abstractions.
Data-driven abstractions are constructed on a partition similar to the one used in model-based abstractions. This will allow us to provide relationships between the data-driven and the model-based abstractions. 

\subsection{Data-Driven Abstractions for ASUAS}
\label{sec:dd_max}

 We first consider the case where all data points are collected under the maximum disturbance input. We then show how the approach can be extended to a more general setting, in which, data are collected under varying disturbance inputs, assuming that the Lipschitz constant of the map f with respect to the disturbance is known.

Let us consider a data set sampled from the discrete-time control system in \eqref{dis_sys}, while applying the maximum (worst-case) disturbance at each sample point. 
To simplify the results, we assume in this section that the set of disturbance inputs is a multidimentional interval of the form $D = [\underline D, \overline{D}] \subseteq \mathbb{R}^p$. 
The data set is defined as follows:
$$
\mathcal{D} = \{ (\tilde{x}_k,\tilde{u}_k, \tilde{x}_k') \mid \tilde{x}_k' = f(\tilde{x}_k,\tilde{u}_k,\overline{D}), k \in \mathbb K \},
$$
where $\mathbb{K}$ is a finite set of indices. For all $k \in \mathbb{K}$ we have $\tilde{x}_k \in X \subseteq \mathbb{R}^n$, $\tilde{u} \in U \subseteq \mathbb{R}^m$. 
We define the data-driven upper-sparse abstraction on the same partition used to define the model-based one. 
This abstraction is formally defined as  
\begin{equation}
\label{eqn:up_abs_data}
 \mathcal{U}_{\Sigma_\mathcal{D}} := \left(X_{\mathcal{U}_\mathcal{D}}, X_{\mathcal{U}_\mathcal{D}}^o, U_{\mathcal{U}_\mathcal{D}}, \Delta_{\mathcal{U}_\mathcal{D}}, Y_{\mathcal{U}_\mathcal{D}}, \\H_{\mathcal{U}_\mathcal{D}} \right), 
 \end{equation}
 where, $X_{\mathcal{U}_\mathcal{D}} = X_{\mathcal{U}},\ X_{\mathcal{U}_\mathcal{D}}^o = X_{\mathcal{U}}^o, U_{\mathcal{U}_\mathcal{D}} = U_{\mathcal{U}},  Y_{\mathcal{U}_\mathcal{D}} = Y_{\mathcal{U}}, \\H_{\mathcal{U}_\mathcal{D}} = H_{\mathcal{U}}$ are inherited from $\mathcal{U}_{\Sigma}$ in (\ref{eqn:up_abs}).

 To define the transition relation for the upper-sparse data-driven abstraction $\Delta_{\mathcal{U}_\mathcal{D}}$, we first introduce some auxiliary notations. Given an abstract state $q \in X_{\mathcal{U}_\mathcal{D}}$ and an abstract input $u \in U_{\mathcal{U}_\mathcal{D}}$, the map $(q,u) \mapsto \mathbb K^+(q,u) \subseteq \mathbb{K}$ defined as $\mathbb{K}^+(q,u) = \{k \in \mathbb{K} \mid x_2^q \leq \tilde{x}_k, u \leq \tilde{u}_k\}$ returns the collection of data points with larger state and input compared to $(q,u)$. Under the monotonicity assumption, the successor of the abstract state $q$ under input $u$ can be bounded above by the successors of data points that are greater than the pair $(q, u)$. This is captured by the following set:
\begin{equation}
\label{eqn:Q_u}
    \mathcal Q_u(q,u) = \left( \displaystyle \bigcap_{k \in \mathbb{K}^+(q,u)}\{x \in X \mid x \leq \tilde{x}_k' \} \right).
\end{equation}
Note that $\mathcal Q_u$ represents the region of the state space bounded above by the {transitions of data points } satisfying $x_2^q \leq \tilde{x}_k$ and $u \leq \tilde{u}_k$. 
Here, we adopt the convention that the intersection over an empty index set (i.e., when $\mathbb {K}^+(q,u) = \emptyset$) yields the entire state space $\mathcal Q_u(q,u) = X$.
Now we can define the transition relation as follows 
\begin{equation}
\label{eqn:delta_data}
 \Delta_{\mathcal{U}_\mathcal{D}}(q,u) = q' \Leftrightarrow x_2^{q'} = \max \{x_2^{q^\star} \mid {q^\star} \cap \mathcal Q_u(q,u) \neq \emptyset\}. 
 \end{equation}
The transition relation $ \Delta_{\mathcal{U}_\mathcal{D}}$ can be seen as choosing the largest abstract state intersecting with the allowed successor (computed from data). The unicity of the successor is due to the interval shape of the disturbance set. \\
Similarly, we define the data-driven lower-sparse abstraction. 
\begin{equation}
\label{eqn:low_abs_data}
 \mathcal{L}_{\Sigma_\mathcal{D}} := \left(X_{\mathcal{L}_\mathcal{D}}, X_{\mathcal{L}_\mathcal{D}}^o, U_{\mathcal{L}_\mathcal{D}}, \Delta_{\mathcal{L}_\mathcal{D}}, Y_{\mathcal{L}_\mathcal{D}}, \\H_{\mathcal{L}_\mathcal{D}} \right), 
 \end{equation}
 where $X_{\mathcal{L}_\mathcal{D}} = X_{\mathcal{L}},\ X_{\mathcal{L}_\mathcal{D}}^o = X_{\mathcal{L}}^o, U_{\mathcal{L}_\mathcal{D}} = U_{\mathcal{L}},  Y_{\mathcal{L}_\mathcal{D}} = Y_{\mathcal{L}}, \\H_{\mathcal{L}_\mathcal{D}} = H_{\mathcal{L}}$ are inherited from $\mathcal{L}_{\Sigma}$ in (\ref{eqn:low_abs}). Similarly to the data-driven upper sparse abstraction, to define the transiton relation $\Delta_{\mathcal{L}_\mathcal{D}}$ we consider the map $(q,u) \mapsto \mathbb K^-(q,u) \subseteq \mathbb{K}$ defined formally as
 {$\mathbb{K}^-(q,u) = \{k \in \mathbb{K} \mid x_1^q \geq \tilde{x}_k, u \geq \tilde{u}_k\}$}. Indeed, given an abstract state $q$ and an abstract input $u$, the set $\mathbb{K}^-(q,u)$ returns all data points smaller than the pair $(q,u)$. We bound the successor of the abstract state $q$ under an input $u$ from below by the successor of the data points smaller than the pair $(q,u)$ using
\begin{equation}
\label{eqn:Q_l}
 \mathcal Q_l(q,u) = \left(\displaystyle \bigcap_{k \in \mathbb {K}^-(q,u)}\{x \in X \mid x \geq \tilde{x}_k' \}\right).  
\end{equation}
The transition relation $\Delta_{\mathcal{L}_\mathcal{D}}$ is then defined as follows:  \begin{equation}
\label{eqn:delta_down_data}
 \Delta_{\mathcal{L}_\mathcal{D}}(q,u) =  q'  \Leftrightarrow x_1^{q'} = \min \{x_1^{q^\star} \mid {{q^\star} \cap \mathcal Q_l(q,u)} \neq \emptyset \}. 
 \end{equation}

 With the necessary preliminaries covered, we now  provide the relationship ordering between the original control system $S_{\Sigma}$, the sparse model-based abstractions $\mathcal{L}_{\Sigma},\mathcal{U}_{\Sigma}$ and the sparse data-driven abstractions $\mathcal{L}_{\Sigma_\mathcal{D}}, \mathcal{U}_{\Sigma_\mathcal{D}}$.

\begin{theorem} \label{thm:main_data}
Consider the  monotone control system $S_\Sigma=(X,X^o,U,\Delta,Y,H)$ in \eqref{eqSsigma} with $X\subseteq \mathbb{R}^n$ being lower closed, and such that Assumption \ref{assum:contr_dist} holds. Let $\mathcal{U}_\Sigma=\left(X_\mathcal{U}, X_{\mathcal{U}}^o, U_\mathcal{U}, \Delta_\mathcal{U}, Y_\mathcal{U}, \\H_\mathcal{U} \right)$ in \eqref{eqn:up_abs} and  $\mathcal{L}_\Sigma=\left(X_{\mathcal{L}},X^o_{\mathcal{L}},U_{\mathcal{L}},\Delta_{\mathcal{L}},Y_{\mathcal{L}},H_{\mathcal{L}} \right)$ in \eqref{eqn:low_abs} be an upper-sparse and 
a lower-sparse abstraction of $S_\Sigma$, respectively, satisfying Assumption \ref{ass33}. Similarly, let $\mathcal{U}_{\Sigma_\mathcal{D}} = \left(X_{\mathcal{U}_\mathcal{D}}, X_{\mathcal{U}_\mathcal{D}}^o, U_{\mathcal{U}_\mathcal{D}}, \Delta_{\mathcal{U}_\mathcal{D}}, Y_{\mathcal{U}_\mathcal{D}}, \\H_{\mathcal{U}_\mathcal{D}} \right)$ in \eqref{eqn:up_abs_data} and $\mathcal{L}_{\Sigma_\mathcal{D}} = \left(X_{\mathcal{L}_\mathcal{D}}, X_{\mathcal{L}_\mathcal{D}}^o, U_{\mathcal{L}_\mathcal{D}}, \Delta_{\mathcal{L}_\mathcal{D}}, Y_{\mathcal{L}_\mathcal{D}}, \\H_{\mathcal{L}_\mathcal{D}} \right)$ in \eqref{eqn:low_abs_data} be an upper-sparse and 
a lower-sparse data-driven abstraction of $S_\Sigma$, respectively, satisfying Assumption \ref{ass33}. Then, 
\begin{equation}
\label{eqn:soundness_model}
       \mathcal{U}_{\Sigma_\mathcal{D}} \preccurlyeq^0_u \mathcal{U}_\Sigma  \preccurlyeq^0_u S_\Sigma \preccurlyeq^0_u \mathcal{L}_\Sigma \preccurlyeq^0_u \mathcal{L}_{\Sigma_\mathcal{D}}.
\end{equation}
\end{theorem}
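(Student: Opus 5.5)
The middle two relations $\mathcal{U}_\Sigma \preccurlyeq^0_u S_\Sigma \preccurlyeq^0_u \mathcal{L}_\Sigma$ are exactly Theorem~\ref{thm:main}, since here $D=[\underline D,\overline D]$ is the case $M=1$ with $d_2^1=\overline D$. It therefore remains to establish the two outer relations $\mathcal{U}_{\Sigma_\mathcal{D}} \preccurlyeq^0_u \mathcal{U}_\Sigma$ and $\mathcal{L}_\Sigma \preccurlyeq^0_u \mathcal{L}_{\Sigma_\mathcal{D}}$. Both compare two finite systems on the same partition, with the same inputs $U_\mathcal{U}=U_\mathcal{L}$, the same initial states and the same output maps. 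I will therefore use order-based relations on cells, in the spirit of the proof of Theorem~\ref{thm:main2}.

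\textbf{Upper relation.} I will take $\mathcal{R}_u := \{(q_1,q_2)\in X_\mathcal{U}\times X_{\mathcal{U}_\mathcal{D}} : x_2^{q_1}\le x_2^{q_2}\}$.
\begin{itemize}
\item Condition (i) holds with the diagonal, since $X^o_{\mathcal{U}_\mathcal{D}}=X^o_\mathcal{U}$.
\item Condition (ii) follows from $H_\mathcal{U}(q)=x_2^q$.
\end{itemize}
The key pointwise fact is that for every data point $k\in\mathbb{K}^+(q_2,u_2)$ and every $u_1\le u_2$, monotonicity gives
\[ f(x_2^{q_1},u_1,\overline D)\le f(\tilde x_k,\tilde u_k,\overline D)=\tilde x_k'. \]
Hence $f(x_2^{q_1},u_1,\overline D)\in\mathcal Q_u(q_2,u_2)$, using the convention $\mathcal Q_u=X$ when the index set is empty. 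This membership does two things.
\begin{itemize}
\item For (iii), it shows that the model successor exists in $X$, so $u_1\in U^a_\mathcal{U}(q_1)$. Here I use that $X$ is lower closed and that $\Delta_{\mathcal{U}_\mathcal{D}}(q_2,u_2)\ne\emptyset$ forces the relevant points to lie in $X$. The inclusion $U_{\mathcal{U}_\mathcal{D}}\subseteq\uparrow U_\mathcal{U}$ is trivial.
\item For (iv), the unique model successor $q_1'$ contains $f(x_2^{q_1},u_1,\overline D)$, so $q_1'\cap\mathcal Q_u(q_2,u_2)\ne\emptyset$. By the definition \eqref{eqn:delta_data} as a maximum, the data successor $q_2'$ then satisfies $x_2^{q_1'}\le x_2^{q_2'}$, i.e.\ $(q_1',q_2')\in\mathcal{R}_u$.
\end{itemize}

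\textbf{Lower relation.} Symmetrically, I will take $\mathcal{R}_l := \{(q_1,q_2)\in X_{\mathcal{L}_\mathcal{D}}\times X_\mathcal{L} : x_1^{q_1}\le x_1^{q_2}\}$.
\begin{itemize}
\item Conditions (i) and (ii) are immediate from $H_\mathcal{L}(q)=x_1^q$.
\item For (iii), I must show that every $u_1\le u_2$ is admissible for the data abstraction at $q_1$. The set $\mathcal Q_l(q_1,u_1)$ contains the point $f(x_1^{q_1},u_1,\overline D)$: for $k\in\mathbb{K}^-(q_1,u_1)$ we have $\tilde x_k\le x_1^{q_1}$ and $\tilde u_k\le u_1$, so $\tilde x_k'\le f(x_1^{q_1},u_1,\overline D)$. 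That point lies in $X$, because it is below $f(x_1^{q_2},u_2,\overline D)\in X$ and $X$ is lower closed. So $\mathcal Q_l(q_1,u_1)$ meets some cell and the minimum in \eqref{eqn:delta_down_data} is attained.
\item For (iv), the data successor $q_1'$ has $x_1^{q_1'}$ no larger than the $x_1$ of the cell containing $f(x_1^{q_1},u_1,\overline D)$. That cell in turn has lower corner below $f(x_1^{q_2},u_2,\overline D)$, hence related to the model successor $q_2'$.
\end{itemize}

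\textbf{Main obstacle.} I expect the difficulty to be the ordering of cell corners. Passing from ``point $p\le p'$'' to ``$x_1^{Q(p)}\le x_1^{Q(p')}$'', and in the upper case comparing the maximum over intersecting cells, needs partition regularity of the type in Assumption~\ref{assum1}, together with well-definedness of the max and min in \eqref{eqn:delta_data} and \eqref{eqn:delta_down_data}.

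If a componentwise $\max$ or $\min$ over cells is not attained, I would first try to argue via the cell containing the extremal point of the lower-closed set $\mathcal Q_u$. That set is a lower-closed intersection of orthant boxes, so its maximal corner points are candidates, and in the upper case the set $\{x_2^{q^\star}\}$ is dominated there. Failing that, I would invoke Assumption~\ref{assum1} explicitly, as in the proof of Theorem~\ref{thm:main2}, to obtain the needed comparisons.
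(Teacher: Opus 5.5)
Your plan matches the paper's proof in structure. You use the same order relations on cells: $x_2^{q_1}\le x_2^{q_2}$ for the upper pair and $x_1^{q_1}\le x_1^{q_2}$ for the lower pair. You use the same monotonicity comparison between a model corner point and the data successors $\tilde x_k'$. You take the middle two relations from Theorem~\ref{thm:main} with $M=1$. For the upper pair, your condition (iv) is tighter than the paper's. You note that the model successor $q_1'$ itself meets $\mathcal{Q}_u(q_2,u_2)$, since it contains $f(x_2^{q_1},u_1,\overline D)$, so maximality in \eqref{eqn:delta_data} gives $x_2^{q_1'}\le x_2^{q_2'}$ directly. The paper instead passes through $\max(\mathcal{Q}_u(q_2,u_2))$ and an unstated grid property of the cell containing $f(x_2^{q_1},u_1,\overline D)$.

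The gap is condition (iv) for the lower pair. Let $c$ be the cell containing $f(x_1^{q_1},u_1,\overline D)$. From $x_1^{c}\le f(x_1^{q_1},u_1,\overline D)\le f(x_1^{q_2},u_2,\overline D)\in q_2'$ you cannot conclude $x_1^{c}\le x_1^{q_2'}$ for a general partition. Your fallback, Assumption~\ref{assum1}, is not a hypothesis of Theorem~\ref{thm:main_data}. The paper's proof quietly uses the same kind of regularity at this step: it asserts that the cell containing $f(x_1^{q_2},u_2,\overline D)$ has the largest lower corner among all cells whose lower corner lies below that point.

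The detour is unnecessary if you mirror your own upper-case argument and use the \emph{model} point $p_2:=f(x_1^{q_2},u_2,\overline D)$ as the witness, instead of $f(x_1^{q_1},u_1,\overline D)$.
\begin{itemize}
\item $p_2\in q_2'\subseteq X$, because $u_2$ is admissible at $q_2$ in $\mathcal{L}_\Sigma$.
\item For every $k\in\mathbb{K}^-(q_1,u_1)$ you have $\tilde x_k\le x_1^{q_1}\le x_1^{q_2}$ and $\tilde u_k\le u_1\le u_2$, hence $\tilde x_k'\le p_2$.
\item So $p_2\in\mathcal{Q}_l(q_1,u_1)\cap q_2'$, which makes $q_2'$ one of the candidate cells in \eqref{eqn:delta_down_data}. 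Minimality then yields $x_1^{q_1'}\le x_1^{q_2'}$.
\end{itemize}
The same observation shows $\mathcal{Q}_l(q_1,u_1)\neq\emptyset$, so it also settles (iii). The only structural requirement left is one the definitions already presuppose: the $\max$ and $\min$ in \eqref{eqn:delta_data} and \eqref{eqn:delta_down_data} must be attained as greatest and least elements. This is what the paper's claim of a unique successor amounts to.

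A smaller point concerns (iii) for the upper pair. Nonemptiness of $\Delta_{\mathcal{U}_\mathcal{D}}(q_2,u_2)$ does not force $f(x_2^{q_1},u_1,\overline D)\in X$. If $\mathbb{K}^+(q_2,u_2)=\emptyset$, then $\mathcal{Q}_u(q_2,u_2)=X$ is nonempty regardless. Even when $\mathbb{K}^+(q_2,u_2)\neq\emptyset$, the $\tilde x_k'$ need not lie in $X$. What you actually need is a data point certifying membership, for instance some $k\in\mathbb{K}^+(q_2,u_2)$ with $\tilde x_k'\in X$, after which lower closedness of $X$ finishes the step. The paper's step $f(x_2^{q_1},u_{q_1},\overline D)\le\max(\mathcal{Q}_u(q_2,u_{q_2}))$ makes the same tacit assumption. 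So this is an imprecision you share with the paper rather than a defect of your route.
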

\begin{proof}
To prove the result, we show each of ASUAS relations separately.

\textbf{\underline{$\mathcal{L}_{\Sigma} \preccurlyeq^0_u \mathcal{L}_{\Sigma_\mathcal{D}}$}
:} Consider the relation $\mathcal{R} \subseteq X_{\mathcal{L}_{\mathcal{D}}} \times X_{\mathcal{L}}$ defined as follows:
\begin{equation}
\label{eqn:proof4_2}
    \mathcal{R} := \{(q_1,q_2) \in X_{\mathcal{L}_{\mathcal{D}}} \times X_{\mathcal{L}} \mid x_1^{q_1} \leq x_1^{q_2} \}.
\end{equation}
Let us show that $\mathcal{R}$ is a $0$-ASUAS relation from  $ \mathcal{L}_{\Sigma}$ to $\mathcal{L}_{\Sigma_\mathcal{D}}$. 
 Consider $q_2 \in X^o_{\mathcal{L}}$, and let us choose $q_1 \in X^o_{\mathcal{L}_{\mathcal{D}}}$ such that $x_1^{q_2} = x_1^{q_1}$. 
 This is possible because the two systems share the same set of initial states  $X^o_{\mathcal{L}}$. Hence, the first condition of Definition~\ref{Def:altsimu_up} is satisfied.
 Now consider $(q_1,q_2) \in \mathcal{R}$, we have $H_{\mathcal{L}_{\mathcal{D}}}(q_1)=x_1^{q_1} \leq x_1^{q_2}=H_{\mathcal{L}}(q_2)$ 
 and condition (ii) of Definition~\ref{Def:altsimu_up} is also satisfied. 

 Now, for $(q_1,q_2) \in \mathcal{R}$ and $u_{q_2} \in U^a_{\mathcal{L}}(q_2)$, we have that $\Delta_{\mathcal{L}}(q_2,u_{q_2}) \subseteq X$. 
For $u_{q_1} \in \{\downarrow u_{q_2}\} \cap U_{\mathcal{L}_\mathcal{D}}$, we have that $ x_1^{q_1'}= \min \{x_1^{q^\star} \mid {q^\star} \cap \mathcal Q_l(q_1,u_{q_1}) \neq \emptyset \}  \leq \min(\mathcal Q_l(q_1,u_{q_1})) \subseteq \{\downarrow X\} \subseteq X$,
 where the first inclusion comes from the definition of $\mathcal Q_l(q_1,u_{q_1})$ in (\ref{eqn:Q_l}) and the last inclusion comes from the lower closedness of the set $X$. Hence, there exists $q_{1}' \in X_{\mathcal{L}_\mathcal{D}}$ 
 such that $x_1^{q_1'} \in q_{1}'$ and $q_1' = \Delta_{\mathcal{L}_\mathcal{D}}(q_1,u_{q_1})$, which in turn implies that
  $u_{q_1} \in U^a_{\mathcal{L}_\mathcal{D}}(q_1)$. Finally, since $U_{\mathcal{L}_\mathcal{D}} = U_{\mathcal{L}}$ by construction, we have $U_{\mathcal{L}} \subseteq \{\uparrow U_{\mathcal{L}_\mathcal{D}}\}$, and condition (iii) of Definition~\ref{Def:altsimu_up} is satisfied.
 
For $(q_1,q_2) \in \mathcal{R}$ and $u_{q_2} \in U^a_{\mathcal{L}}(q_2)$, consider $u_{q_1} \in U^a_{\mathcal{L}_\mathcal{D}}(q_1)$ 
such that $u_{q_1} \leq u_{q_2}$, the existence of such $u_{q_1}$ is guaranteed in view of Remark~\ref{rk:existence}.
 Now, for $q_1' = \Delta_{\mathcal{L}_\mathcal{D}}(q_1,u_{q_1})$, we have $ x_1^{q_1'} = \min \{x_1^{q^\star} \mid {q^\star} \cap \mathcal Q_l(q_1,u_{q_1}) \neq \emptyset \} $, 
 making $x_1^{q_1'} \leq \min(\mathcal Q_l(q_1,u_{q_1}))$. We have $x_1^{q_2'}$ is defined such that $f(x_1^{q_2},u_{q_2}, \overline{D}) \in [x_1^{q_2'},x_2^{q_2'}]$,
meaning that $x_1^{q_2'} \leq f(x_1^{q_2},u_{q_2}, \overline{D})$ and $\forall q_2^{\star} \in  X_{\mathcal{L}}$ such that
$x_1^{q_2^{\star}} \leq f(x_1^{q_2},u_{q_2}, \overline{D})$, then $x_1^{q_2^{\star}} \leq x_1^{q_2'}$. 
Due to the monotonicity of the system, and since $x_1^{q_1} \leq x_1^{q_2} $ and $u_{q_1} \leq u_{q_2}$ 
we have for all $k \in \mathbb {K}^-(q_1,u_{q_1})$, $\tilde{x}_k' \leq f(x_1^{q_2},u_{q_2}, \overline{D})$.
Hence, $\min(\mathcal Q_l(q_1,u_{q_1})) \leq f(x_1^{q_2},u_{q_2}, \overline{D})$. 
Therefore, $x_1^{q_1'} \leq x_1^{q_2'}$, $(q_1',q_2') \in \mathcal{R}$ and condition (iv) of Definition~\ref{Def:altsimu_up} is satisfied.

\textbf{\underline{$\mathcal{U}_{\Sigma_\mathcal{D}}  \preccurlyeq^0_u \mathcal{U}_{\Sigma}$}
:} Consider the relation $\mathcal{R} \subseteq X_{\mathcal{U}} \times X_{\mathcal{U}_\mathcal{D}}$ defined as follows:
\begin{equation}
\label{eqn:proof4}
    \mathcal{R}=\{(q_1,q_2) \in X_{\mathcal{U}} \times X_{\mathcal{U}_\mathcal{D}} \mid x_2^{q_1} \leq x_2^{q_2} \}.
\end{equation}
Let us show that $\mathcal{R}$ is a $0$-ASUAS relation from  $ \mathcal{U}_{\Sigma_\mathcal{D}}$ to $\mathcal{U}_{\Sigma}$. 
 Consider $q_2 \in X^o_{\mathcal{U}_{\mathcal{D}}}$, and let us choose $q_1 \in X^o_{\mathcal{U}}$ such that $x_2^{q_2} = x_2^{q_1}$. 
 This is possible because the two systems share the same set of initial states $X^o_{\mathcal{U}}$. Hence, the first condition of Definition~\ref{Def:altsimu_up} is satisfied.
Let $(q_1,q_2) \in \mathcal{R}$, we have $H_{\mathcal{U}}(q_1)=x_2^{q_1} \leq x_2^{q_2}=H_{\mathcal{U}_{\mathcal{D}}}(q_2)$ 
 and condition (ii) of Definition~\ref{Def:altsimu_up} is also satisfied.

 Now, for $(q_1,q_2) \in \mathcal{R}$ and $u_{q_2} \in U^a_{\mathcal{U}_\mathcal{D}}(q_2)$, we have that $\Delta_{\mathcal{U}_\mathcal{D}}(q_2,u_{q_2}) \subseteq X$.
For $u_{q_1} \in \{\downarrow u_{q_2}\} \cap U_{\mathcal{U}}$, due to the monotonicity of the system $\Sigma$, and since $x_2^{q_1} \leq x_2^{q_2} $ and $u_{q_1} \leq u_{q_2}$
we have for all $k \in \mathbb {K}^+(q_2,u_{q_2})$, $\tilde{x}_k' \geq f(x_2^{q_1},u_{q_1}, \overline{D})$.
Hence, $ f(x_2^{q_1},u_{q_1}, \overline{D}) \leq \max(\mathcal Q_u(q_2,u_{q_2})) \subseteq \{\downarrow X\} \subseteq X$,
 where the first inclusion comes from the definition of $\mathcal Q_u(q_1,u_{q_1})$ in (\ref{eqn:Q_u}) and the last inclusion comes from the lower closedness of the set $X$. Hence, there exists $q_{1}' \in X_{\mathcal{U}}$
such that $q_1' \in \Delta_{\mathcal{U}}(q_1,u_{q_1})$, which in turn implies that
$u_{q_1} \in U^a_{\mathcal{U}}(q_1)$. Finally, since $U_{\mathcal{U}_\mathcal{D}} = U_{\mathcal{U}}$ by construction, we have $U_{\mathcal{U}_\mathcal{D}} \subseteq \{\uparrow U_{\mathcal{U}}\}$, and condition (iii) of Definition~\ref{Def:altsimu_up} is satisfied.
 
For $(q_1,q_2) \in \mathcal{R}$ and $u_{q_2} \in U^a_{\mathcal{U}_\mathcal{D}}(q_2)$, consider $u_{q_1} \in U^a_{\mathcal{U}}(q_1)$
such that $u_{q_1} \leq u_{q_2}$, the existence of such $u_{q_1}$ is guaranteed in view of Remark~\ref{rk:existence}.
We have $x_2^{q_1'}$ is defined such that $f(x_2^{q_1},u_{q_1}, \overline{D}) \in [x_1^{q_1'},x_2^{q_1'}]$,
meaning that $x_2^{q_1'} \geq f(x_2^{q_1},u_{q_1}, \overline{D})$ and $\forall q_1^{\star} \in  X_{\mathcal{U}}$ such that
$x_2^{q_1^{\star}} \geq f(x_2^{q_1},u_{q_1}, \overline{D})$, then $x_2^{q_1^{\star}} \geq x_2^{q_1'}$.
For $q_2' = \Delta_{\mathcal{U}_\mathcal{D}}(q_2,u_{q_2})$, we have $ x_2^{q_2'} = \max \{x_2^{q^\star} \mid {q^\star} \cap \mathcal Q_u(q_2,u_{q_2}) \neq \emptyset \} $,
making $x_2^{q_2'} \geq \max(\mathcal Q_u(q_2,u_{q_2}))$. 
Due to the monotonicity of the system, and since $x_2^{q_1} \leq x_2^{q_2} $ and $u_{q_1} \leq u_{q_2}$
we have for all $k \in \mathbb {K}^+(q_2,u_{q_2})$, $\tilde{x}_k' \geq f(x_2^{q_1},u_{q_1}, \overline{D})$.
Hence, $\max(\mathcal Q_u(q_2,u_{q_2})) \geq f(x_2^{q_1},u_{q_1}, \overline{D})$.
Therefore, $x_2^{q_1'} \leq x_2^{q_2'}$, and  $(q_1',q_2') \in \mathcal{R}$ and condition (iv) of Definition~\ref{Def:altsimu_up} is satisfied.
\end{proof}

\begin{remark}
\label{rem:general_disturbances}
Although Theorem~\ref{thm:main_data} assumes that data points are collected under maximum disturbances, the result can be generalized to datasets collected under arbitrary disturbances. Specifically, if a Lipschitz constant $K_d > 0$ of the vector field $f$ with respect to the disturbance is known (i.e., $|f(x, u, d_2) - f(x, u, d_1)| \leq K_d |d_2 - d_1|$ for all $d_1, d_2 \in D$), one can define a bloating parameter $\tilde{\delta} := K_d |\overline{D} - \underline{D}|$ and modify the target sets $Q_u(q,u)$ and $Q_l(q,u)$ in \eqref{eqn:Q_u} and \eqref{eqn:Q_l} by adding and subtracting $\tilde{\delta} \cdot \mathbf{1}_n$, respectively. The same soundness relations in \eqref{eqn:soundness_model} then remain valid.
\end{remark}

\subsection{Controlling the Conservativeness of the Data-Driven Sparse Abstractions}
We now provide an approach to control the conservativeness between 
the model-based abstractions $\mathcal{L}_{\Sigma}, \mathcal{U}_{\Sigma}, $
and the data-driven abstractions $\mathcal{L}_{\Sigma_{\mathcal{D}}}, \mathcal{U}_{\Sigma_{\mathcal{D}}}$. 
Let us consider the perturbed version of the discrete-time control system $\Sigma$ defined in \eqref{dis_sys_pert}.
The objective of this section is to establish a relation between the data-driven abstraction of the system $\Sigma$ and the model-based abstraction of the perturbed system $\Sigma_{\varepsilon}$. To establish such a relation, we also assume prior knowledge of an upper bound on the derivative of the unknown function $f$ as stated in Assumption \ref{ass:bounds}.

Moreover, we assume the capability of sampling data points in small intervals in the neighborhood of grid points using all the possible inputs. First, we define those intervals on the upper side of the grid points $x_q^2$. Given $\xi \in \mathbb{R}^n_{\geq 0}$, for all $q \in \tilde X_{\mathcal{U}_{\mathcal{D}}}$, the sampling interval is defined as $\mathcal{I}_{q}^+(\xi) := [x_2^q, x_2^q + \xi] \subseteq \mathbb{R}^n$.

\begin{assumption}
\label{ass: sample_from_I}
    The data set contains data points from each interval $\mathcal{I}_{q}^+(\xi)$ using all the possible inputs (i.e., for all $q \in X_{\mathcal{U}_{\mathcal{D}}}$   
for all $u \in U_{\mathcal{U}_{\mathcal{D}}}$,
there exist $k \in K$ such that $\tilde{x}_k \in \mathcal{I}_{q}^+(\xi)$ and $\tilde{u}_k = u$)
\end{assumption}
For some $q \in \tilde X_{\mathcal{U}_{\mathcal{D}}}$ ,
the interval $\mathcal{I}_{q}^+$ is not included in the set $X$. Therefore, 
we will assume that we sample data points from the set $\tilde{X} $, such that
$X \subseteq \tilde{X}$ and 
$\mathcal{I}_{q}^+ \subseteq \tilde{X} $ for all $q \in \tilde X_{\mathcal{U}_{\mathcal{D}}}$. 
The next proposition shows that if we sample data points in the intervals $\mathcal{I}_{q}^+$, 
then we can control the conservativeness of the data-driven abstraction by 
establishing a relation between data-driven abstraction and the perturbed model-based abstraction.
\begin{theorem}
\label{the:precision}
Consider the discrete-time monotone control system \( S_\Sigma \) defined in \eqref{eqSsigma}, where the state space \( X \) is lower closed, the unknown map \( f \) satisfies Assumption~\ref{ass:bounds}, and Assumption~\ref{assum:contr_dist} holds. Let $\mathcal{U}_\Sigma$ in \eqref{eqn:up_abs}  be an upper-sparse 
abstraction of $S_\Sigma$ and $\mathcal{U}_{\Sigma_\mathcal{D}}$ in \eqref{eqn:up_abs_data} 
be a data-driven upper-sparse abstraction of $S_\Sigma$ such that Assumption \ref{ass33} holds. Furthermore, given $\varepsilon \in \mathbb{R}^n_+$, we let $\mathcal{U}_{\Sigma_{\varepsilon}}$ be upper-sparse abstraction of $S_{\Sigma_{\varepsilon}}$.
Then, for each $\varepsilon \in \mathbb{R}^n_+$ and for each $\xi=(\xi_1,\xi_2,\ldots,\xi_n)\in \mathbb{R}^n_{\geq 0}$ satisfying:
\begin{equation*}
    \label{eqn:eta_nu}
    \begin{aligned}
         \varepsilon_i  \geq \sum\limits_{j=1}^n\alpha_{ij} \xi_j\qquad \forall i \in \{1,2,...,n\}.
    \end{aligned}
\end{equation*}
If Assumption \ref{ass: sample_from_I} holds, then
\begin{equation}
    \label{eqn:soundness_precision}
     \mathcal{U}_{\Sigma_{\varepsilon}}  \preccurlyeq^0_u \mathcal{U}_{\Sigma_{\mathcal{D}}} \preccurlyeq^0_u \mathcal{U}_{\Sigma} \preccurlyeq^0_u S_{\Sigma}. 
    \end{equation}
\end{theorem}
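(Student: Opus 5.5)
The two right-most relations in \eqref{eqn:soundness_precision}, $\mathcal{U}_{\Sigma_{\mathcal{D}}} \preccurlyeq^0_u \mathcal{U}_{\Sigma} \preccurlyeq^0_u S_{\Sigma}$, are already given by Theorem~\ref{thm:main_data} (equivalently Theorem~\ref{thm:main} for the second one). So the only new relation to prove is $\mathcal{U}_{\Sigma_{\varepsilon}} \preccurlyeq^0_u \mathcal{U}_{\Sigma_{\mathcal{D}}}$. For it I would use the same relation as in the proof of Theorem~\ref{thm:main_data}:
\[
\mathcal{R} := \{(q_1,q_2) \in X_{\mathcal{U}_{\mathcal{D}}} \times X_{\mathcal{U}} \mid x_2^{q_1} \leq x_2^{q_2}\},
\]
where the first component lives in $\mathcal{U}_{\Sigma_{\mathcal{D}}}$ and the second in $\mathcal{U}_{\Sigma_{\varepsilon}}$. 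Since $\mathcal{U}_{\Sigma_\varepsilon}$ is an upper-sparse abstraction of $S_{\Sigma_\varepsilon}$, its successor of $(q,u)$ is the cell containing the componentwise maximum of $f(x_2^q,u,\overline{D})+\Omega_\varepsilon(0)$, namely $f(x_2^q,u,\overline{D})+\varepsilon$.

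Conditions (i)--(iii) of Definition~\ref{Def:altsimu_up} are checked as before.
\begin{itemize}
\item For (i), both abstractions share the initial set $X^o_{\mathcal{U}}$, so I pair each $q_2$ with $q_1=q_2$.
\item For (ii), $H_{\mathcal{U}_{\mathcal{D}}}(q_1)=x_2^{q_1}\le x_2^{q_2}=H_{\mathcal{U}}(q_2)$.
\item For (iii), the input sets coincide, so $U_{\mathcal{U}} \subseteq \{\uparrow U_{\mathcal{U}_{\mathcal{D}}}\}$. Admissibility in $\mathcal{U}_{\Sigma_{\mathcal{D}}}$ is automatic: $\mathcal{Q}_u(q,u)$ always contains lower-closed pieces of $X$, or all of $X$ when $\mathbb{K}^+$ is empty, so a maximal intersecting cell exists. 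I would argue this via lower closedness of $X$, as in Theorem~\ref{thm:main_data}.
\end{itemize}

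The core step is (iv). Fix $(q_1,q_2)\in\mathcal{R}$, $u_2$ admissible for $q_2$ in $\mathcal{U}_{\Sigma_\varepsilon}$, and $u_1\le u_2$. Let $q_1'=\Delta_{\mathcal{U}_{\mathcal{D}}}(q_1,u_1)$; I need $x_2^{q_1'} \le x_2^{q_2'}$ for the successor $q_2'$ of $q_2$ in $\mathcal{U}_{\Sigma_\varepsilon}$. This breaks into three steps.
\begin{enumerate}
\item By Assumption~\ref{ass: sample_from_I}, there is a sample $k$ with $\tilde{x}_k\in\mathcal{I}^+_{q_1}(\xi)$ and $\tilde{u}_k=u_1$. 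Since $x_2^{q_1}\le\tilde{x}_k$, we get $k\in\mathbb{K}^+(q_1,u_1)$, hence $\mathcal{Q}_u(q_1,u_1)\subseteq \downarrow \tilde{x}_k'$.
\item By Assumption~\ref{ass:bounds} and the growth bound (Lemma~\ref{lem:growth_bound}), $\tilde{x}_k' = f(\tilde{x}_k,u_1,\overline{D}) \le f(x_2^{q_1},u_1,\overline{D}) + \sum_j \alpha_{\cdot j}\xi_j \le f(x_2^{q_1},u_1,\overline{D})+\varepsilon$.
\item By monotonicity, $x_2^{q_1}\le x_2^{q_2}$ and $u_1\le u_2$, so this is at most $f(x_2^{q_2},u_2,\overline{D})+\varepsilon$, a point lying in $q_2'$. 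Thus every point of $\mathcal{Q}_u(q_1,u_1)$ is below a point of $q_2'$.
\end{enumerate}

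The main obstacle is converting ``every cell intersecting $\mathcal{Q}_u(q_1,u_1)$ meets $\downarrow z$, with $z\in q_2'$'' into ``$x_2^{q^\star}\le x_2^{q_2'}$ for all such cells $q^\star$'', and hence for the maximizing one $q_1'$. I would handle this with Assumption~\ref{assum1}. Pick $y\in q^\star\cap\downarrow z$; then $y\le z$. Assumption~\ref{assum1}(ii) applied to the quantizer gives $x_2^{Q(y)}\le x_2^{Q(z)}=x_2^{q_2'}$, and $Q(y)=q^\star$ up to the measure-zero boundary convention. It follows that $x_2^{q_1'}\le x_2^{q_2'}$, so $(q_1',q_2')\in\mathcal{R}$. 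Although not stated explicitly in the theorem, Assumption~\ref{assum1} would be imported as in Theorem~\ref{thm:main2}, or else the uniform grid case used. Finally, composing the three relations via Proposition~\ref{prop:transitivity} is unnecessary, since the statement is a chain.
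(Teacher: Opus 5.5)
Your proposal follows the paper's proof essentially step for step. It uses the same relation $x_2^{q_1}\le x_2^{q_2}$ and checks conditions (i)--(iii) the same way. For (iv) it uses the same chain: a sample in $\mathcal{I}^+_{q_1}(\xi)$ puts $\mathcal{Q}_u(q_1,u_1)$ below $\tilde{x}_k'$, Lemma~\ref{lem:growth_bound} adds at most $\varepsilon$, and monotonicity lifts the bound to $f(x_2^{q_2},u_2,\overline{D})+\varepsilon$.

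The only difference is that you justify the final step $x_2^{q_1'}\le x_2^{q_2'}$ explicitly through Assumption~\ref{assum1}(ii), whereas the paper simply asserts it. You are right that some partition regularity of this kind is tacitly needed at that step, even though the theorem statement does not list it.
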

\begin{proof}
The fact that $\mathcal{U}_{\Sigma_{\mathcal{D}}} \preccurlyeq^0_u \mathcal{U}_{\Sigma}$ was previously shown in Theorem~\ref{thm:main_data}.  

Lets us show that $\mathcal{U}_{\Sigma_{\varepsilon}}  \preccurlyeq^0_u \mathcal{U}_{\Sigma_{\mathcal{D}}}$. 
Consider the relation $\mathcal{R} \subseteq X_{\mathcal{U}_{\mathcal{D}}} \times X_{\mathcal{U}}$
defined by $(q_1,q_2) \in \mathcal{R}$ if and only if $x_2^{q_1} \leq x_2^{q_2}$. 
Let us show that $\mathcal{R}$ is an $0$-ASUAS relation from $\mathcal{U}_{\Sigma_{\varepsilon}}$ to $\mathcal{U}_{\Sigma_{\mathcal{D}}}$.
Since $X^o_{\mathcal{U}_{\mathcal{D}}}=X^o_\mathcal{U}$ 
condition (i) of Definition~\ref{Def:altsimu_up} is directly satisfied. 
Now, let $(q_1,q_2) \in \mathcal{R}$, we have $x_2^{q_1}\leq x_2^{q_2}$ 
and $H_{\mathcal{U}_{\mathcal{D}}}(q_1)=x_2^{q_1} \leq x_2^{q_2} = H_{\mathcal{U}}(q_2)$. 
Hence, condition (ii) in Definition~\ref{Def:altsimu_up} is satisfied. 

Consider $(q_1,q_2) \in \mathcal{R}$ and $u_{2} \in U^a_{\mathcal{U}}(q_2)$,
we have that $\Delta_{\mathcal{U}}(q_2,u_2) \subseteq X$.
Let $u_{1} \in \{\downarrow u_{2}\} \cap U_{\mathcal{U}_{\mathcal{D}}}$, 
we have $ x_2^{q_1'}= \max \{x_2^{q^\star} \mid {q^\star} \cap \mathcal Q_u(q_1,u_{1}) \neq \emptyset \}  \subseteq \{\downarrow X\} \subseteq X$,
 where the first inclusion comes from the fact that $\mathcal Q_u$ is always non empty due to the fact we can sample a data point $\tilde{x}_k \in \mathcal{I}_{q_1}^+$, and $\mathcal Q_u (q_1,u_{1}) \subseteq \{\downarrow X\} $ by construction.   Hence, there exists $q_{1}' \in X_{\mathcal{U}_{\mathcal{D}}}$
such that $x_2^{q_1'} \in q_{1}'$ and $q_1' = \Delta_{\mathcal{U}_{\mathcal{D}}}(q_1,u_1)$, which in turn implies that
$u_{1} \in U^a_{\mathcal{U}_{\mathcal{D}}}(q_1)$. Finally, since $U_{\mathcal{U}} = U_{\mathcal{U}_{\mathcal{D}}}$ by construction, we have $U_{\mathcal{U}} \subseteq \{\uparrow U_{\mathcal{U}_{\mathcal{D}}}\}$, and condition (iii) of Definition~\ref{Def:altsimu_up} is satisfied.

Consider $(q_1,q_2) \in \mathcal{R}$, $u_{2} \in U^a_{\mathcal{U}}(q_2)$ 
and any $u_{1} \in U^a_{\mathcal{U}_{\mathcal{D}}}$ satisfying $u_{1} \leq u_{2}$.
Let  $q_1' \in \Delta_{\mathcal{U}_{\mathcal{D}}}(q_1,u_1)$. We have the existence of $k \in \mathbb{K}$,
 $\tilde{x}_k \in \mathcal{I}_{q_1}^+$ and
\begin{equation}
    \label{eq:ineq1}
    \begin{aligned}
        f(\tilde{x}_k, {u}_1, \overline D) &\leq  f(x_2^{q_1} + \xi, u_1, \overline D),\\
        & \leq \max \{ f(x_2^{q_1}, u_1, \overline D) + \Omega_{\varepsilon}(0)\},\\
        & \leq \max \{ f(x_2^{q_2}, u_2, \overline D) + \Omega_{\varepsilon}(0)\}.
    \end{aligned}
\end{equation} 
The first and third inequalities come from the monotonicity of the function $f$, whereas the second inequality comes from equation (\ref{eqn:eta_nu}) and Lemma~\ref{lem:growth_bound}.
Since,  $\mathcal Q_u(q_1,u_1) \subseteq \downarrow f(\tilde{x}_k, {u}_1, \overline D) $ by construction, we have that $x_2^{q_1'} \leq x_2^{q_2'}$, which implies that 
$(q_1',q_2') \in \mathcal{R}$, and
condition (iv) of Definition~\ref{Def:altsimu_up} is satisfied.
\end{proof}

Theorem \ref{the:precision} shows how to reach a certain precision $\varepsilon$ when building the data-driven abstraction. This precision can be reached by choosing an adequate $\xi$ and sampling data points in each interval $\mathcal{I}_q^+, q \in \tilde X_{\mathcal{U}_{\mathcal{D}}}.$

To relax Assumption \ref{ass: sample_from_I}, we will study the probability that this will occur if we sample the data uniformly.

Assume that data points $(\tilde{x}_k, \tilde{u}_k)$ for $k \in \mathbb{K}$ are generated such that each $\tilde{x}_k$ is drawn independently and uniformly from a state set $\tilde{X}$, and each $\tilde{u}_k$ is drawn independently and uniformly from ${U}_{\mathcal{D}}$.
Under the uniform sampling assumption, the probability that 
a single sample $(\tilde{x}_k, \tilde{u}_k)$ satisfies that $\tilde{x}_k \in \mathcal{I}_q^+(\xi)$ and $\tilde{u}_k = u_p$ for a given pair $(q, u_p) \in X_{\mathcal{U}_D} \times U_{\mathcal{U}_D}$
is:
$p(q, u_p) := P(\tilde{x}_k \in \mathcal{I}_q^+(\xi) \text{ and } \tilde{u}_k = u_p) = \frac{\text{Vol}(\mathcal{I}_q^+(\xi))}{\text{Vol}(\tilde{X})} \cdot \frac{1}{|{U}_{\mathcal{D}}|}$.
Since this probability $p(q, u_p)$ is constant for all pairs $(q, u_p)$, let this constant probability be denoted by $p_s$.
\begin{proposition}
\label{pro:prop}
    Let $\beta \in (0,1)$ be a confidence parameter and $M := |X_{\mathcal{U}_D}| \cdot |U_{\mathcal{U}_D}|$.
If the number of samples $N = |\mathbb{K}|$ satisfies:
\begin{equation}
    \label{num_of_points}
    N \ge \frac{-\log(\beta) + \log(M)}{-\log(1-p_s)}
\end{equation}
then with probability at least $1-\beta$, the following relations hold:
$$\mathcal{U}_{\Sigma_{\varepsilon}} \preccurlyeq_{u}^{0} \mathcal{U}_{\Sigma_{D}} \preccurlyeq_{u}^{0} \mathcal{U}_{\Sigma} \preccurlyeq_{u}^{0} S_{\Sigma}$$
\end{proposition}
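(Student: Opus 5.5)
The plan is to reduce Proposition~\ref{pro:prop} to Theorem~\ref{the:precision}. That theorem gives the full chain $\mathcal{U}_{\Sigma_{\varepsilon}} \preccurlyeq_{u}^{0} \mathcal{U}_{\Sigma_{\mathcal{D}}} \preccurlyeq_{u}^{0} \mathcal{U}_{\Sigma} \preccurlyeq_{u}^{0} S_{\Sigma}$ deterministically whenever Assumption~\ref{ass: sample_from_I} holds, and $\xi$ satisfies the condition relating $\varepsilon$, $\alpha_{ij}$ and $\xi$; that condition is assumed in the background of the proposition. The two rightmost relations do not depend on the data at all (Theorems~\ref{thm:main} and~\ref{thm:main_data}). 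Hence it suffices to show that the random event
\[
E := \bigcap_{(q,u_p) \in X_{\mathcal{U}_D}\times U_{\mathcal{U}_D}} E_{q,u_p}
\]
has probability at least $1-\beta$. Here $E_{q,u_p}$ is the event that some $k \in \mathbb{K}$ has $\tilde{x}_k \in \mathcal{I}_q^+(\xi)$ and $\tilde{u}_k = u_p$. On $E$, Assumption~\ref{ass: sample_from_I} holds, so Theorem~\ref{the:precision} yields the chain.

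First, I bound the failure probability of a single pair. The $N$ samples are drawn independently, and each hits the pair $(q,u_p)$ with probability exactly $p_s$. Therefore
\[
P(E_{q,u_p}^c) = (1-p_s)^N.
\]
This uses that $\tilde{x}_k$ and $\tilde{u}_k$ are independent and uniform, and that $\mathcal{I}_q^+(\xi) \subseteq \tilde{X}$, so the volume ratio is a genuine probability.

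Second, I apply a union bound over the $M = |X_{\mathcal{U}_D}|\cdot|U_{\mathcal{U}_D}|$ pairs:
\[
P(E^c) \le \sum_{(q,u_p)} P(E_{q,u_p}^c) = M(1-p_s)^N .
\]
Third, I check that \eqref{num_of_points} makes this at most $\beta$. Since $0<p_s<1$, we have $-\log(1-p_s)>0$. Multiplying \eqref{num_of_points} through by this positive quantity gives $N\log(1-p_s) \le \log\beta - \log M$. Exponentiating gives $M(1-p_s)^N \le \beta$, so $P(E) \ge 1-\beta$.

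The main technical point is not the computation but the measurability and inclusion step: the chain of relations holds on the event $E$, so its probability is at least $P(E)$. I would state explicitly that Theorem~\ref{the:precision} is a deterministic implication applied pointwise to each realization of the data set in $E$.

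I would also note the edge cases. If $p_s = 1$, the bound is trivial, since then $M=1$ and a single sample suffices. If $p_s=0$, which happens when $\xi$ has a zero component, the bound \eqref{num_of_points} is vacuous or undefined. So I would implicitly assume $\xi$ has positive entries, so that $p_s\in(0,1)$.
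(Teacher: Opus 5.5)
Your proposal is correct and follows essentially the same route as the paper: you reduce to Theorem~\ref{the:precision} through the coverage event, use independence to get $(1-p_s)^N$ per state--input pair, take a union bound over the $M$ pairs, and invert $M(1-p_s)^N \le \beta$. You also run the final algebra forward from the sample-size condition to the bound and make explicit the pointwise application of the deterministic theorem on the event. The paper only derives the condition backward and leaves that application implicit, but since every step is reversible this is the same argument.
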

\begin{proof} Let us consider the event $\mathcal{E}$ that for every abstract state $q \in X_{\mathcal{U}_D}$ and every abstract input $u_p \in U_{\mathcal{U}_D}$, there exists at least one sampled data point $(\tilde{x}_k, \tilde{u}_k)$ from the $N$ samples such that $\tilde{x}_k \in \mathcal{I}_q^+(\xi)$ and $\tilde{u}_k = u_p$. This is the condition required for Theorem \ref{the:precision} to hold, which in turn leads to the claimed ASUAS relations:
\begin{multline}
        \mathcal{E} = \{ \forall q \in X_{\mathcal{U}_D}, \forall u_p \in U_{\mathcal{U}_D}, \exists k \in \mathbb{K} \text{ s.t.} \\
    (\tilde{x}_k \in \mathcal{I}_q^+(\xi) \text{ and } \tilde{u}_k = u_p) \}
\end{multline}
Let $\mathcal{E}_{q,u_p}$ be the event that for a specific pair $(q, u_p)$, there exists at least one sample satisfying the condition:
$$\mathcal{E}_{q,u_p} := \left\{ \exists k \in \mathbb{K} \text{ s.t. } (\tilde{x}_k \in \mathcal{I}_q^+(\xi) \text{ and } \tilde{u}_k = u_p) \right\}$$
The probability that a single sample $(\tilde{x}_k, \tilde{u}_k)$ satisfies the condition for a specific $(q, u_p)$ is $p_s$.
The probability that a single sample does \emph{not} satisfy the condition for $(q, u_p)$ is $1 - p_s$.
Since the $N$ samples are independent, the probability that \emph{none} of the $N$ samples satisfy the condition for $(q, u_p)$ (i.e., the event $\mathcal{E}_{q,u_p}$ does \emph{not} occur, denoted $\mathcal{E}_{q,u_p}^c$) is:
$$\mathbb{P}(\mathcal{E}_{q,u_p}^c) = (1 - p_s)^N$$

The event $\mathcal{E}$ occurs if $\mathcal{E}_{q,u_p}$ occurs for all $M = |X_{\mathcal{U}_D}| \cdot |U_{\mathcal{U}_D}|$ 
pairs of $(q, u_p)$.
We are interested in the probability $\mathbb{P}(\mathcal{E})$.
The complementary event $\mathcal{E}^c = \bigcup_{q,u_p} \mathcal{E}_{q,u_p}^c$ satisfies:
$$\mathbb{P}(\mathcal{E}^c) = \mathbb{P}\left(\bigcup_{q,u_p} \mathcal{E}_{q,u_p}^c\right) \le \sum_{q,u_p} \mathbb{P}(\mathcal{E}_{q,u_p}^c)$$
There are $M$ such terms in the sum. Hence,
$$\mathbb{P}(\mathcal{E}^c) \le M \cdot (1 - p_s)^N$$
The probability that event $\mathcal{E}$ occurs is $\mathbb{P}(\mathcal{E}) = 1 - \mathbb{P}(\mathcal{E}^c)$. Thus,
$$\mathbb{P}(\mathcal{E}) \ge 1 - M(1-p_s)^N$$
We want this probability to be at least $1-\beta$, hence one gets $1 - M(1-p_s)^N \ge 1-\beta$, which implies that $(1-p_s)^N \le \frac{\beta}{M}$. Taking logarithms on both sides (note that $\log(1-p_s)$ is negative since $0 < p_s \le 1$), we obtain
$$N \ge \frac{\log\left(\frac{\beta}{M}\right)}{\log(1-p_s)} = \frac{\log(\beta) - \log(M)}{-\left(-\log(1-p_s)\right)} = \frac{-\log(\beta) + \log(M)}{-\log(1-p_s)}$$
This provides the required condition for $N$. If this condition on $N$ is met, then with probability at least $1-\beta$, event $\mathcal{E}$ holds.
\end{proof}

For the data-driven abstraction to meet a given $\varepsilon$ precision at a  confidence level $\beta$, \eqref{num_of_points} in Proposition \ref{pro:prop} suggests valide number of data points.

\section{Numerical example} \label{sec:6}

\subsection{Model description and control objective}

We consider a vehicle moving along a straight road. The dynamical model  is adapted from~\cite{saoud2018contract} and given by 
\begin{equation}
\label{eqn:model}
m\dot{v}=\beta(u,v)=\left\{
\begin{array}{l c r}
u-f_{0}-f_{2}v^2  &\text{ if }& v>0\\
\max(u-f_0,0)  &\text{ if }& v=0,
\end{array}
\right.
\end{equation}
where $m>0$ is the vehicle's mass, $u$ is the net engine torque applied to the wheels, $v \geq 0$ represents the vehicle's velocity, and the term $f_0+f_2v^2$ includes the rolling resistance and aerodynamics. For this system, $u \in [\underline{U},\overline{U}]$ is the control input. Moreover, we include a leading vehicle whose velocity satisfies $d \in D=[d_1,d_2]$, which is considered as a disturbance. One can easily check that Assumption~\ref{assum:contr_dist} is satisfied. 

The complete dynamical representation of the system is given by:
\begin{equation}
\label{eqn4}
\left\{
\begin{array}{r c l}
\dot{h}&=&d - v\\
m\dot{v}&=&\beta(u,v),
\end{array}
\right.
\end{equation}
where $h$ is the relative distance between the leader and the follower. Based on this continuous-time system, we generate a discrete-time model using the sampling period $\tau=0.5 s$, while conserving the monotonicity property of the system with respect to the partial order $\leq_2$. That is,  $y:=(y_1,y_2) \leq_2  x:=(x_1,x_2)$ if and only if $y_1 \geq x_1$ and $y_2 \leq x_2$.

\begin{table}
	\caption{{Vehicle and safety parameters}} 
	\centering
	\begin{tabular}{|c|c|c|}
		\hline 
		Parameter & Value & Unit \\ 
		\hline 
		$M$ & $1370$ & $Kg$ \\ 
		
		$f_0$ & $51.0709$ & $N$ \\ 
		
		$f_2$ & $0.4161$ &  $Ns^2/{m^2}$\\ 
		
		$\underline{U}$&  $-4031.9$ &   $mKg/{s^2}$ \\
		
	$\overline{U}$&  $2687.9$&   $mKg/{s^2}$ \\
		
		$d_1$&  $10$&   $m$ \\
		
		$d'$&  $70$&   $m$  \\
		
		$v_{\max}$&  $15$ &   $m/s$  \\
        \hline
	\end{tabular}
	\label{table:parameters}
\end{table}

The objective is to compute a controller to ensure that the velocity $v$ remains between $0$ and $v_{\max}$, and the relative distance between the leader and the follower remains larger than $0$, while assuming that the velocity of the leader belongs to the set $D=[0,v_{\max}]$. This can be formlised as the safety specification $\phi:=((0,+\infty)\times [0,v_{\max}])^w$. Moreover, since the constraint $v \geq 0$ is directly satisfied from the model description in~(\ref{eqn:model}), the considered safety specification $\phi$is lower closed with respect to the partial order $\leq_2$. We recall that the maximal safety controller is the most permissive controller ensuring safety, in the sense that it admits every input admitted by any other safe controller~\cite{tabuada2009verification}. Furthermore, the domain of the maximal safety controller is the largest safe set that is invariant under the maximal safety controller.

\subsection{Numerical results}



\begin{figure}[!t]
	\begin{center}
		\includegraphics[width=0.9\columnwidth]{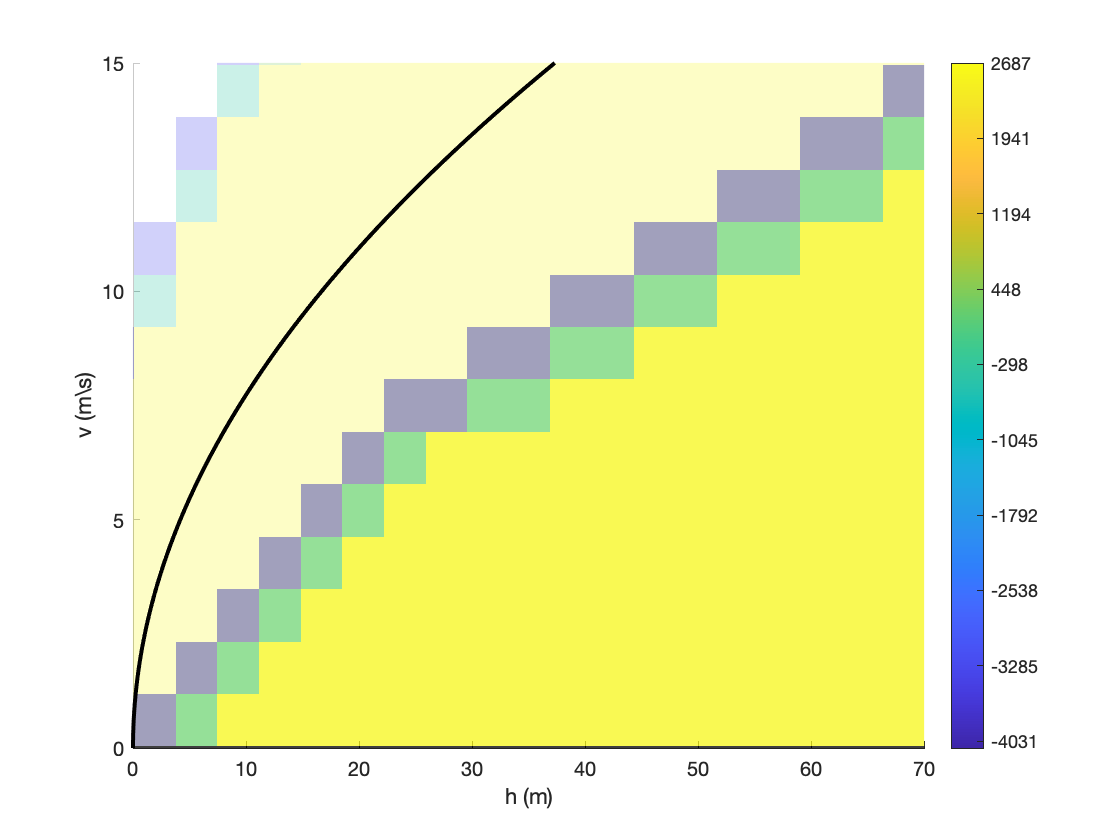}\\
	\end{center}
	\caption{Boundary of the domain of the maximal safety controller for the system $\Sigma$ together with the maximal safety controllers for the lower-sparse abstractions (larger safe region) and upper-sparse abstractions (smaller safe region), with $20\times 14$ discrete states and $10$ discrete inputs.}
	\label{fig:up_low_1}	
\end{figure}

In this section, we numerically illustrate the benefit of the proposed approach. The values shown in Table~\ref{table:parameters} are taken from~\cite{saoud2018contract}.

For the construction of the upper- and lower-sparse abstractions $\mathcal{L}_{\Sigma}$ and $\mathcal{U}_{\Sigma}$, we use the same partitioning technique presented in~\cite{saoud2018contract}, by generating a Cartesian partition in the state space $X$ and input space $U$.


Figure~\ref{fig:up_low_1} represents the symbolic controllers computed for the upper-sparse abstraction (smaller region) and the lower-sparse abstraction (larger region), with $20\times 14$ discrete states and $10$ discrete inputs. We used the lazy approach to compute controllers for monotone dynamical systems, as proposed recently in~\cite{ivanova2022lazy}. In the figure, the color bar represents the given input set $U=[\underline{U},\overline{U}]$, where the blue color corresponds to the minimal input $U_{\min}=-4031.9 \;mKg/{s^2}$, and the yellow color corresponds to the maximal input $U_{\max}=2687.9 \;mKg/{s^2}$. For a given state $(h,v)$ of the vehicle, Figure~\ref{fig:up_low_1} shows the maximal allowed control input. Let us mention that, following Theorem~\ref{thm:1}, and in view of the ASUAS relation proposed in Definition~\ref{Def:altsimu_up}, if an input $u \in U$ is allowed by the maximal safety controller, then all inputs satisfying $u' \leq u$ are allowed by that controller. Moreover, to show the precision provided by the proposed upper- and lower-sparse abstractions, we also present in black the boundary of the domain of the maximal safety controller for the original system $\Sigma$, which can be computed analytically for this problem, following the approach presented in~\cite{devonport2020data}.

In Figure~\ref{fig:up_low_1}, one can see that the domain of the maximal safety controller computed using the upper-sparse abstraction is included in the domain of the maximal safety controller for the original system $\Sigma$, which in turn is included in the domain of the maximal safety controller computed using the 
lower-sparse abstraction. This observation is consistent with the result of Theorem~\ref{thm:main}. Indeed, while the controller computed for the upper-sparse abstraction can be refined into a concrete controller for the original system $\Sigma$, the controller computed for the 
lower-sparse abstraction is used just to measure the conservativeness of our abstraction. Indeed, Figure~\ref{fig:up_low_1} shows that while the obtained controller for the upper-sparse abstraction is sound, in the sense that it can be refined into a controller for the original system $\Sigma$, the obtained controller is conservative. 

In order to measure the conservativeness of the obtained controller, and to improve its precision, we rely on the result of Theorem~\ref{thm:main2}. First, one can easily check that the values of the parameters $\alpha_{i,j}$, $i,j \in \{1,2\}$, given by $\alpha_{11}=\alpha_{12}=1$, $\alpha_{21}=0$, and $\alpha_{22}=1$ satisfy the inequalities in Assumption \ref{ass:bounds}. Hence, in view of condition (\ref{eqPrecision}), and since the discretization used corresponds to $20\times14$ discrete states, the precision $\varepsilon=(7,7)$ is achieved in that example, and one gets
$$\mathcal{L}_{\Sigma_{\varepsilon}}  \preccurlyeq^\eta_u \mathcal{U}_{\Sigma}  \preccurlyeq^0_u S_{\Sigma} \preccurlyeq^0_u \mathcal{L}_{\Sigma}.$$
To improve the result obtained in Figure~\ref{fig:up_low_1}, we now start by fixing a desired precision $\varepsilon=(0.5,0.5)$. In view of Theorem~\ref{thm:main2}, this precision can be achieved by choosing a discretization parameter $\eta=(0.25,0.25)$, for which, we used a partition corresponding to $(280 \times 60)$ discrete states. Figure~\ref{fig:up_low_2} represents the boundary of the domain of the maximal safety controller for $\Sigma$, together with the maximal safety controllers for the upper- and lower-sparse abstractions, with $280\times 60$ discrete states and $10$ discrete inputs. One can see in Figure~\ref{fig:up_low_2} that, for the chosen precision $\varepsilon=(0.5,0.5)$, the domain of the obtained controller using the 
upper-sparse abstraction is almost the same as the domain of the maximal safety controller, which is consistent with our theoretical results.

\begin{figure}[!t]
	\begin{center}
		\includegraphics[width=0.9\columnwidth]{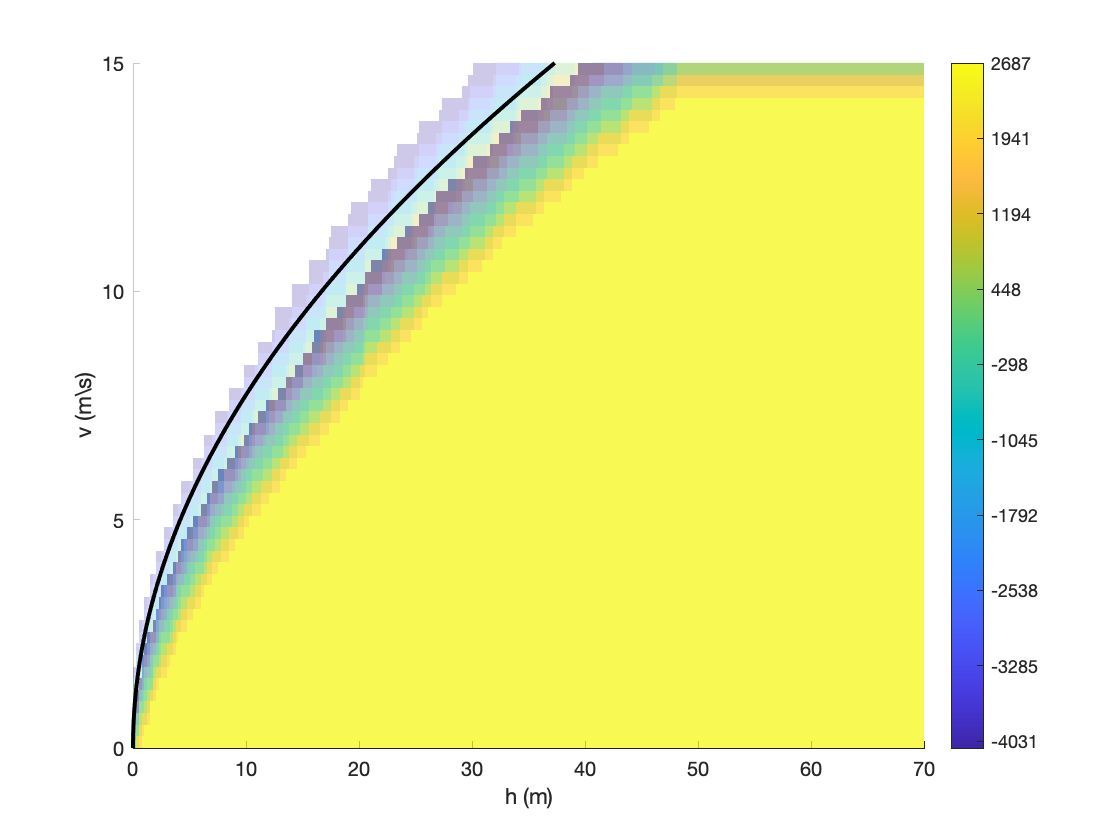}\\
	\end{center}
	\caption{Boundary of the domain of the maximal safety controller for the system $\Sigma$ together with the maximal safety controllers for the lower-sparse abstractions (larger safe region) and upper-sparse abstractions (smaller safe region), with $280\times 60$ discrete states and $10$ discrete inputs.}
	\label{fig:up_low_2}	
\end{figure}

\subsection{Solving the problem in 
3-dimensions}

In this section, we consider an adaptation of the model in (\ref{eqn4}). While in the latter case, the velocity of the leader was considered as a disturbance, in this case, we consider the velocity of the leader as a state variable, and we consider the acceleration of the leader 
to be the disturbance. 
The new dynamical model is given by
\begin{equation}
\left\{
\begin{array}{r c l}
\dot{h}&=&v_2 - v_1\\
m\dot{v}_1&=&\beta(u,v_2)\\
m\dot{v}_2&=&\gamma(d,v_1),
\end{array}
\right.
\end{equation}
where $h$ is the relative distance between the leader and the follower vehicles, $v_1$ is the velocity of the follower vehicle and $u \in [\underline{U},\overline{U}]$ is its control input. The map $\beta$ describing the dynamics of the follower is given in (\ref{eqn:model}). Similarly, $v_2$ is the velocity of the leader vehicle and $d \in D=[d_1,d_2]= [\underline{U},\overline{U}]$ is its acceleration, which is considered as a disturbance. The dynamics $\gamma$ of the leading vehicle is given by
\begin{equation*}
\label{eqn:model_leader}
m\dot{v_2}=\left\{
\begin{array}{l c r}
d-f_{0}-f_{2}v_2^2 &\text{if}& 0<v_2<v_{max}\\
\max(d-f_0,0) &\text{if}& v_2=0\\
\min(d-f_0-f_2 v_{max}^2,0)
&\text{if}& v_2=v_{max},
\end{array}
\right.
\end{equation*}
where $v_{\max}$ is given in Table \ref{table:parameters}. For the leader vehicle, we assume that the velocity $v_2$ remains in $[0,v_{\max}]$. Figure \ref{fig:upper_1_3d} shows the maximal allowed controller computed for the upper-sparse abstraction. Similarly, Figure \ref{fig:lower_1_3d} shows the maximal allowed controller computed for the lower-sparse abstraction. Both abstractions were computed with $20\times 14 \times 14$ discrete states and $10$ discrete inputs.

\begin{figure*}[!t]
    \begin{minipage}[t]{0.45\textwidth}
    \centering
    \includegraphics[width=\linewidth]{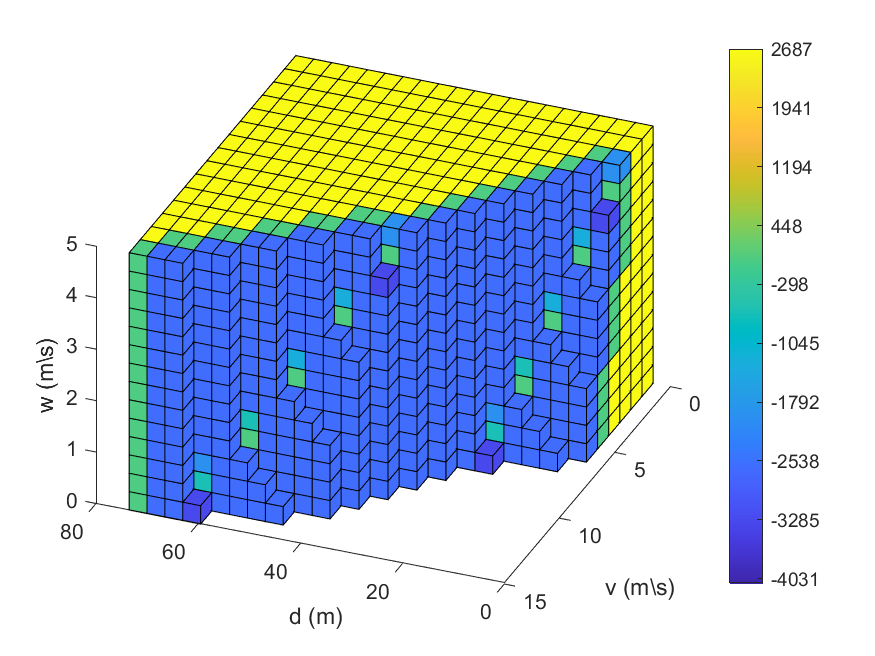}\
    \caption{Maximal safety controller computed for the upper-sparse abstraction $\mathcal{U}_{\Sigma}$, with $20\times 14 \times 14$ discrete states and $10$ discrete inputs.}
    \label{fig:upper_1_3d}
    \end{minipage}\hfill
    \begin{minipage}[t]{0.45\textwidth}
    \centering
    \includegraphics[width=\linewidth]{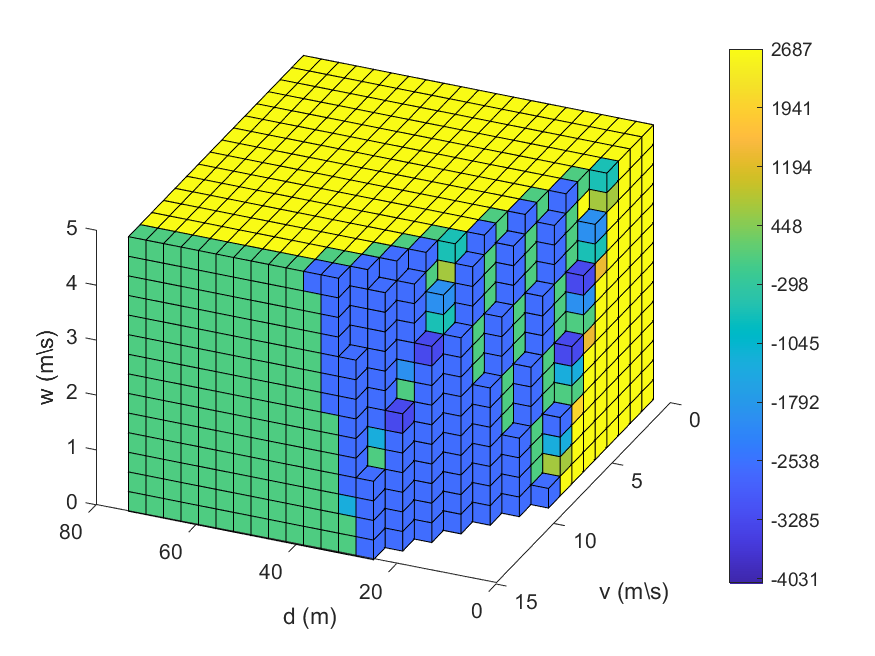}\
    \caption{Maximal safety controller computed for the lower-sparse abstraction $\mathcal{L}_{\Sigma}$, with $20\times 14 \times 14$ discrete states and $10$ discrete inputs.}
    \label{fig:lower_1_3d}
    \end{minipage}
\end{figure*}

\subsection{Numerical Results for Data-Driven Model}

This section demonstrates the data-driven approach by comparing the controller synthesized from an upper-sparse data-driven abstraction, $\mathcal{U}_{\Sigma_{\mathcal{D}}}$, with its model-based counterpart, $\mathcal{U}_{\Sigma}$, and the controller synthesized from a lower-sparse data-driven abstraction, $\mathcal{L}_{\Sigma_{\mathcal{D}}}$, 
with its model-based counterpart, $\mathcal{L}_{\Sigma}$. The data-driven abstractions are constructed from a dataset of $10^4$ points collected under the {maximum disturbance} scenario, as described in Section \ref{sec:dd_max}. Both the model-based and the data-driven abstractions are created using the same discretization of $70 \times 20$ grid for the states and $10$ inputs, to ensure a fair comparison.

The resulting maximal safety controllers 
for both abstractions are shown below. Figure~\ref{fig:notdata} displays the controller synthesized using the {model-based} and the {data-driven} upper-sparse abstractions, and we can see that the domain of the model-based controller contains the domain of the data-driven one, which is expected as according to {Theorem \ref{thm:main_data}}, the relationship between these two abstractions is $\mathcal{U}_{\Sigma_{\mathcal{D}}} \preccurlyeq^0_u \mathcal{U}_{\Sigma}$. This formal guarantee implies that the controller derived from data must be at least as conservative as the one derived from the model. This outcome successfully validates the data-driven approach, demonstrating its ability to produce a correct-by-construction controller that is provably sound, without any access to the underlying system equations.

Figure~\ref{fig:data} displays the controller synthesized using the model and {data-driven} lower-sparse abstractions. In this case, the domain of the controller derived from the data-driven abstraction contains the domain of the controller derived using the model of the system. This also validates Theorem \ref{thm:main_data}. 

\begin{figure*}[!t]
    \begin{minipage}[t]{0.48\textwidth}
    \centering
    \includegraphics[width=0.9\linewidth]{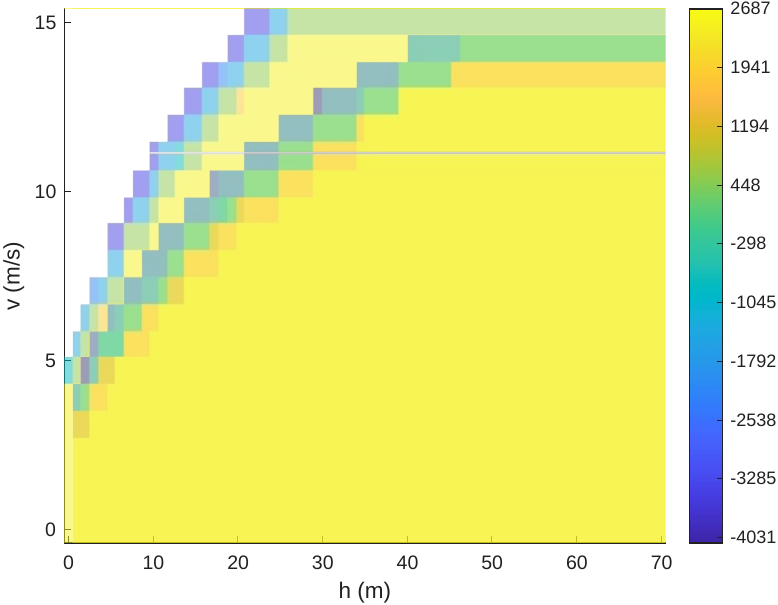}\
    \caption{Maximal safety controller for the {model-based} upper-sparse abstraction $\mathcal{U}_{\Sigma}$ (larger region) and the {data-driven} upper-sparse abstraction $\mathcal{U}_{\Sigma_{\mathcal{D}}}$ (smaller region). }
    \label{fig:notdata}
    \end{minipage}\hfill
    \begin{minipage}[t]{0.48\textwidth}
    \centering
    \includegraphics[width=0.9\linewidth]{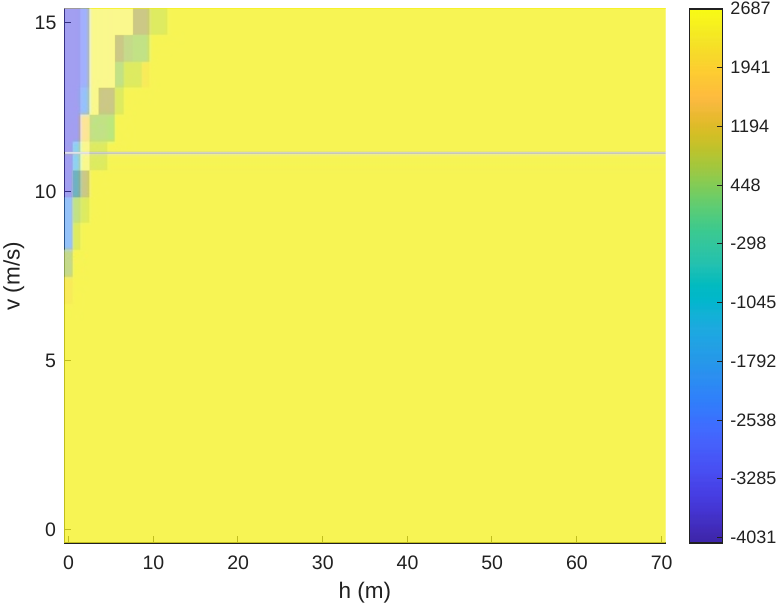}\
    \caption{Maximal safety controller for the {model-based} lower-sparse abstraction $\mathcal{L}_{\Sigma}$ (smaller region) and the {data-driven} lower-sparse abstraction $\mathcal{L}_{\Sigma_{\mathcal{D}}}$ (larger region).}
    \label{fig:data}
    \end{minipage}
\end{figure*}

\subsection{Validation of Conservativeness and Precision}
In this section, we experimentally validate the theoretical claims regarding the control of conservativeness (Theorem~\ref{the:precision} and Proposition~\ref{pro:prop}). All experiments were conducted using the same vehicle dynamics and the same $70 \times 20$ discretization grid.

\subsubsection{Validation of Theorem \ref{the:precision} (Deterministic Sampling)}
To validate Theorem~\ref{the:precision}, we construct a ``perfect'' dataset such that every abstract state-input pair $(q,u)$ has at least one sample in the critical region $\mathcal{I}_q^+(\xi)$. We use $\xi=[0.5, 0.5]$  
and computed the corresponding $\varepsilon=[1, 0.5]$. We then compute:
\begin{enumerate}
    \item A  model-based abstraction $\mathcal{U}_{\Sigma_{\varepsilon}}$, where the dynamics are expanded by $\varepsilon$.
    \item A data-driven abstraction $\mathcal{U}_{\Sigma_{\mathcal{D}}}$ using the perfect dataset.
\end{enumerate}
Since the dataset satisfies the premise of Theorem~\ref{the:precision}, the safe set of the model-based abstraction  $\mathcal{U}_{\Sigma_{\varepsilon}}$ is contained in that of the data-driven abstraction $\mathcal{U}_{\Sigma_{\mathcal{D}}}$, consistently with the relation $\mathcal{U}_{\Sigma_{\varepsilon}} \preccurlyeq^0_u \mathcal{U}_{\Sigma_{\mathcal{D}}}$. On the same discretization grid, the model-based abstraction yields $1172$ safe cells, while the data-driven abstraction yields $1200$. This is illustrated in Figure~\ref{fig:theorem5}.

\begin{figure}[!t]
	\begin{center}
		\includegraphics[width=0.9\columnwidth]{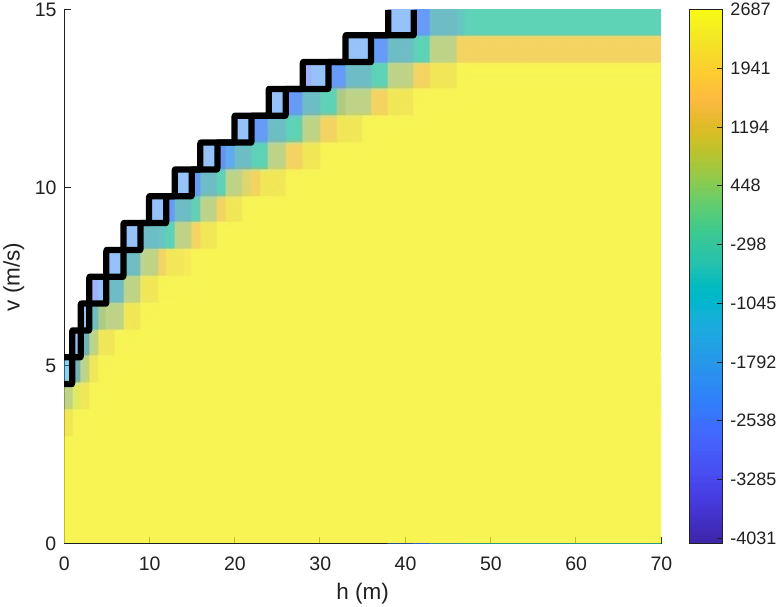}\\
	\end{center}
	\caption{Deterministic validation of Theorem 5. The robust model-based safe set is contained within the data-driven safe set, confirming $\mathcal{U}_{\Sigma_{\varepsilon}} \preccurlyeq \mathcal{U}_{\Sigma_{\mathcal{D}}}$.}
	\label{fig:theorem5}	
\end{figure}

\subsubsection{Validation of Proposition \ref{pro:prop} (Probabilistic Sampling)}
To validate the probabilistic bound in Proposition~\ref{pro:prop}, we evaluate the ability of random sampling to achieve the condition required for Theorem \ref{the:precision} on the same $70 \times 20$ grid with $10$ control inputs, resulting in $M = 14000$ total abstract state-input pairs. We used the same target region size $\xi=[0.5, 0.5]$ (with volume $\text{Vol}(\xi) = 0.25$) over the extended state space (with volume $\text{Vol}(\tilde{X}) = 1092.75$), which yields a single-sample hit probability $p_s = \frac{0.25}{1092.75 \times 10} \approx 2.2878 \times 10^{-5}$. Setting a confidence level $1-\beta = 0.90$, the required number of samples computed using equation \eqref{num_of_points} is $N \ge 517\,939$.
We performed a Monte-Carlo experiment with $1000$ independent trials. In each trial, we generate $N = 517\,939$ random samples and checked if every abstract state-input pair covers its corresponding $\xi$-region. The experiment achieved an empirical success rate of $918/1000$ ($91.8\%$), confirming the validity of the sampling formula.

\section{Conclusion}
This paper proposed a new approach to construct complete abstraction pairs for monotone control systems. A new behavioral relationship was introduced to relate the monotone system to its abstractions, and upper- and lower-sparse abstractions were designed in the context of the proposed relationships. Together, these two abstractions were shown to form a complete abstraction pair. Moreover, we presented an approach to control the conservativeness between the two abstractions. The results were further extended to data-driven systems, where the abstraction is constructed directly from a finite set of data points sampled from the system, rather than relying on an explicit model. Future directions include the generalization to  mixed-monotone systems, and improving scalability of the proposed approach using compositional techniques \cite{saoud2019compositional} and assume-guarantee contracts \cite{saoud2021assume}.

\appendix


\begin{lemma}
\label{lem:growth_bound}
If the map  $f:X \rightarrow X$ of the system $\Sigma$ in (\ref{dis_sys}) is continuously differentiable and satisfies
\begin{equation}
\label{eqn:bounds}
    0 \leq \frac{\partial f_i}{\partial x_j} \leq \alpha_{ij} ~~ \forall i,j \in \{1,2,\ldots,n\}
\end{equation}
then for any $x \in X$, for any $u \in U$, for any $d \in D$, and for any $\eta=(\eta_1,\eta_2,\ldots,\eta_n) \in \mathbb{R}^n_{+}$, we have:
$$\max\{\Omega_{\varepsilon}(f(x-\eta,u,d))\} \geq f(x,u,d) $$
for any $\varepsilon \in \mathbb{R}^n_{+}$ satisfying $\varepsilon_i \geq \sum\limits_{j=1}^n\alpha_{ij}\eta_j$, $i \in 
\{1,2,\ldots,n\}$.
\end{lemma}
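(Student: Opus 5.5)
The key observation is that $\max\{\Omega_{\varepsilon}(z)\} = z+\varepsilon$ componentwise, since $\Omega_{\varepsilon}(z)$ is the box $[z-\varepsilon, z+\varepsilon]$. The claim therefore reduces to showing, for every $i\in\{1,\ldots,n\}$,
\[
f_i(x,u,d) - f_i(x-\eta,u,d) \leq \varepsilon_i .
\]
Given the hypothesis $\varepsilon_i \geq \sum_j \alpha_{ij}\eta_j$, it suffices to prove $f_i(x,u,d) - f_i(x-\eta,u,d) \leq \sum_j \alpha_{ij}\eta_j$. The plan is to apply the mean value theorem (or the fundamental theorem of calculus) along the segment joining $x-\eta$ to $x$, with $u$ and $d$ held fixed.

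Concretely, set $g_i(t) := f_i(x-\eta+t\eta,u,d)$ for $t\in[0,1]$. Since $f$ is continuously differentiable, $g_i$ is $C^1$ and
\[
g_i(1)-g_i(0) = \int_0^1 \sum_{j=1}^n \frac{\partial f_i}{\partial x_j}(x-\eta+t\eta,u,d)\,\eta_j \, dt .
\]
Each $\eta_j\geq 0$ and each partial derivative lies in $[0,\alpha_{ij}]$ by \eqref{eqn:bounds}. Hence the integrand is bounded above by $\sum_j \alpha_{ij}\eta_j$, which gives the desired inequality. It is also nonnegative, which is consistent with monotonicity but not needed here. Stacking the $n$ componentwise inequalities yields $f(x,u,d) \leq f(x-\eta,u,d)+\varepsilon = \max\{\Omega_{\varepsilon}(f(x-\eta,u,d))\}$.

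The only delicate point is the domain. The segment $[x-\eta,x]$ must lie in the set where $f$ is defined and differentiable, and $x-\eta$ itself must be a valid argument. I would handle this using the standing hypothesis that $X$ is lower closed. Since $x\in X$ and every point $x-\eta+t\eta \leq x$, the whole segment is contained in $\downarrow x \subseteq X$, so the computation is legitimate. (If $f$ is regarded as defined on an open neighborhood, this is automatic.) With that settled, the rest is a routine one-line estimate.
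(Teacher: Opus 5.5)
Your proof is correct and follows essentially the same route as the paper. The paper also bounds $|f_i(x-\eta,u,d)-f_i(x,u,d)|$ by $\sum_j \alpha_{ij}\eta_j \leq \varepsilon_i$ using the derivative bounds, concludes $f(x,u,d)\in\Omega_{\varepsilon}(f(x-\eta,u,d))$, and then takes the componentwise maximum of the box; your version makes the mean-value step explicit, notes that only the upper derivative bound is needed, and handles the domain issue (via lower closedness of $X$) that the paper leaves implicit.
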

\begin{proof}
From (\ref{eqn:bounds}), one has that for any $x \in X$, for any $u \in U$, for any $d \in D$ and for any $i \in \{1,2,\ldots,n\}$, $|f_i(x-\eta,u,d)-f_i(x,u,d)|  \leq \sum\limits_{j=1}^n\alpha_{ij}\eta_j \leq \varepsilon_i$. Hence, it follows that $f(x,u,d) \subseteq \Omega_{\varepsilon}(f(x-\eta,u,d))$
which in turn implies that 
$\max\{\Omega_{\varepsilon}(f(x-\eta,u,d))\} \geq f(x,u,d)$.
\end{proof}

\begin{lemma}
\label{lem:eps}
    Consider $A \subseteq \mathbb{R}^n$ and $\varepsilon \in \mathbb{R}^n_+$. Then,  $\downarrow \{\Omega_{\varepsilon}(A)\} \subseteq \Omega_{\varepsilon}(\downarrow A)$.
\end{lemma}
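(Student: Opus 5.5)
We argue by direct element chasing, unfolding the definitions of lower closure and of $\Omega_{\varepsilon}$ given in the Notation paragraph. Fix an arbitrary $z \in \downarrow \{\Omega_{\varepsilon}(A)\}$. By definition of the lower closure there is a point $w \in \Omega_{\varepsilon}(A)$ with $z \leq w$. By definition of $\Omega_{\varepsilon}(A)$ there is $a \in A$ with $w \in \Omega_{\varepsilon}(a)$, that is, $|w_i - a_i| \leq \varepsilon_i$ for every $i \in \{1,\ldots,n\}$.

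The goal is to find a point $b \in \downarrow A$ such that $z \in \Omega_{\varepsilon}(b)$. The natural candidate is the componentwise minimum
\begin{equation*}
b := \min(z,a), \qquad b_i := \min\{z_i,a_i\},\quad i \in \{1,\ldots,n\}.
\end{equation*}
Clearly $b \leq a$, so $b \in \downarrow a \subseteq \downarrow A$.

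It remains to check that $|z_i - b_i| \leq \varepsilon_i$ for each $i$. We split into two cases according to the coordinate.
\begin{itemize}
\item If $z_i \leq a_i$, then $b_i = z_i$ and the difference is $0 \leq \varepsilon_i$.
\item If $z_i > a_i$, then $b_i = a_i$ and $0 < z_i - a_i \leq w_i - a_i \leq \varepsilon_i$. Here we use $z \leq w$ together with $w \in \Omega_{\varepsilon}(a)$.
\end{itemize}
Hence $z \in \Omega_{\varepsilon}(b)$ with $b \in \downarrow A$, which gives $z \in \Omega_{\varepsilon}(\downarrow A)$ and proves the inclusion.

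The only step needing care is the choice of the witness $b$. The obvious choice $b = a$ fails, because $z$ may lie far below $a$ in some coordinate and so fall outside $\Omega_{\varepsilon}(a)$. Taking the componentwise minimum fixes this, since it absorbs any downward excess of $z$ into the lower closure of $A$. After that choice, the argument reduces to the two-case coordinate estimate above.
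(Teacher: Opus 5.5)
Your proof is correct and takes essentially the same route as the paper. You unfold the definitions to get $z \le w$ with $w \in \Omega_{\varepsilon}(a)$, then use a componentwise-minimum witness lying in $\downarrow a$. The only difference is the witness: the paper takes $b_i=\min\{a_i, z_i+\varepsilon_i\}$, while your $b_i=\min\{z_i,a_i\}$ is slightly simpler, since it gives zero error in every coordinate where $z_i\le a_i$.
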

\begin{proof}
   Consider $x \in \downarrow\{\Omega_{\varepsilon}(A)\}$. Then there exist $z \in \Omega_{\varepsilon}(A)$ with $x \leq z$, and $a \in A$ such that $z \in \Omega_{\varepsilon}(a)$, i.e., $|z_i-a_i| \leq \varepsilon_i$ for all $i \in \{1,\ldots,n\}$. In particular, $x \leq z \leq a+\varepsilon$. Define $w \in \mathbb{R}^n$ component-wise by $w_i := \min\{a_i,\, x_i+\varepsilon_i\}$ for all $i \in \{1,\ldots,n\}$. Then $w \leq a$, hence $w \in \downarrow A$. Moreover, for each $i$ we have $w_i \leq x_i+\varepsilon_i$ by construction, and $w_i \geq x_i-\varepsilon_i$ since either $w_i = x_i+\varepsilon_i \geq x_i-\varepsilon_i$, or $w_i = a_i \geq x_i-\varepsilon_i$ (the latter because $x_i \leq a_i+\varepsilon_i$). Therefore $|x_i-w_i| \leq \varepsilon_i$ for all $i$, that is, $x \in \Omega_{\varepsilon}(w) \subseteq \Omega_{\varepsilon}(\downarrow A)$, which ends the proof.
\end{proof}





\begin{proposition}
\label{prop:4}
    Let $S_c:=(X_c,X^o_c,U_c,\Delta_c,Y_c,H_c)$ be a controller for the transition system $S_1=(X_1,X_1^o,U_1,\Delta_1,Y_1,H_1)$ such that $\mathcal{R}_{1,c}$ is an $\varepsilon$-approximate strong upper alternating simulation relation from $S_c$ to $S_1$, for some $\varepsilon \in \mathbb{R}^n_+$.
    Then there exists an $\varepsilon$-approximate upper simulation relation from
    $S_c \times_{\mathcal{R}_{1,c}} S_{1}$ to $S_c$. 
\end{proposition}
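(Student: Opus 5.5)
The plan is to take the projection of the product onto its controller component as the witnessing relation, and to check the three conditions of Definition~\ref{Def:simu_up} one at a time. The structure of $S_c \times_{\mathcal{R}_{1,c}} S_1$ is built so that every move of the product is, in its $X_c$-component, a move of $S_c$. Concretely, write $S_{1c} := S_c \times_{\mathcal{R}_{1,c}} S_1 = (X_{1c},X^o_{1c},U_{1c},\Delta_{1c},Y_{1c},H_{1c})$ and define
$$\mathcal{R}' := \{((x_1,x_c),z) \in X_{1c} \times X_c \mid z = x_c\}.$$
In Definition~\ref{Def:simu_up}, $S_{1c}$ plays the role of the first system and $S_c$ that of the second. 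The claim is that $\mathcal{R}'$ is an $\varepsilon$-AUS relation from $S_{1c}$ to $S_c$. Assumption~\ref{assPO} is met because $U_{1c}=U_c$ and $Y_{1c}=Y_1$, with $Y_1$ and $Y_c$ subsets of the same set $Y$ by the definition of a controller.

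\textbf{Step 1 (initial states).} Take $(x_1^o,x_c^o) \in X^o_{1c} = X_{1c} \cap (X_1^o \times X_c^o)$. Then $x_c^o \in X_c^o$, and $((x_1^o,x_c^o),x_c^o) \in \mathcal{R}'$ by construction, so condition (i) holds.

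\textbf{Step 2 (output ordering).} Take $((x_1,x_c),x_c) \in \mathcal{R}'$. Since $X_{1c}=\mathcal{R}_{1,c}$, we have $(x_1,x_c) \in \mathcal{R}_{1,c}$. Condition (ii) of Definition~\ref{Def:altsimu_up} for the $\varepsilon$-ASUAS relation $\mathcal{R}_{1,c}$ then gives $H_1(x_1) \leq H_c(x_c)+\varepsilon$. As $H_{1c}(x_1,x_c)=H_1(x_1)$, this yields $H_{1c}(x_1,x_c) \leq H_c(x_c)+\varepsilon$, which is condition (ii).

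\textbf{Step 3 (transition matching).} Take $((x_1,x_c),x_c) \in \mathcal{R}'$, an input $u \in U^a_{1c}(x_1,x_c)$, and a successor $(x_1',x_c') \in \Delta_{1c}((x_1,x_c),u)$. The definition of $\Delta_{1c}$ directly provides two facts: $u \in U^a_c(x_c)$, and $x_c' \in \Delta_c(x_c,u)$ (with $(x_1',x_c') \in \mathcal{R}_{1,c}$). We therefore choose $u_2 := u \in U^a_c(x_c)$ and $x_2' := x_c' \in \Delta_c(x_c,u_2)$. Then $((x_1',x_c'),x_c') \in \mathcal{R}'$, so condition (iii) holds.

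I do not expect a genuine obstacle: the work that would normally go into constructing a matching successor is already built into the definition of $\Delta_{1c}$. That existence question is exactly what Remark~\ref{remcl} attributes to condition (iv) of $\mathcal{R}_{1,c}$, and here it is not needed. The only point requiring care is bookkeeping of the direction of the relation. $\mathcal{R}_{1,c} \subseteq X_1 \times X_c$ is an ASUAS from $S_c$ to $S_1$, and its condition (ii) reads $H_1(x_1) \leq H_c(x_c)+\varepsilon$. This is exactly the orientation needed for the upper-simulation inequality $H_{1c} \leq H_c + \varepsilon$. With that checked, the three conditions of Definition~\ref{Def:simu_up} hold, and $\mathcal{R}'$ is the required $\varepsilon$-approximate upper simulation relation from $S_c \times_{\mathcal{R}_{1,c}} S_1$ to $S_c$.
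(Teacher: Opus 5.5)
Your proposal is correct and matches the paper's proof: it uses the same diagonal relation, pairing each $(x_1,x_c)\in\mathcal{R}_{1,c}$ with the controller state $x_c$, and the same three checks. In both, (ii) is inherited from condition (ii) of the ASUAS relation $\mathcal{R}_{1,c}$, and (iii) follows directly from the definition of $\Delta_{1c}$ by matching with the same input $u$ and the same successor $x_c'$, so condition (iv) of $\mathcal{R}_{1,c}$ is never invoked.
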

\begin{proof}
Consider the relation $\mathcal{R}$ defined as follows:
\begin{eqnarray*}
\mathcal{R} & =\{((x_1,x_c),\hat{x}_c)) \in X_1 \times X_c \times X_c \mid \\ & (x_1,x_c) \in \mathcal{R}\text{ and } x_c=\hat{x}_c\}.
\end{eqnarray*}
Let us show that $\mathcal{R}$ satisfies the requirement of Definition \ref{Def:simu_up}. First, since $ X_{1c}^o \subseteq X_1^o \times X_c^o$, condition (i) of Definition \ref{Def:simu_up} is directly satisfied. For $((x_1,x_c),\hat{x}_c)) \in \mathcal{R}$, since $(x_1,x_c) \in \mathcal{R}_{1,c}$ and $x_c=\hat{x}_c$, it follows that $H_{1c}(x_1,x_c)=H_1(x_1) \leq H_c(x_c)+\varepsilon = H_c(\hat{x}_c)+\varepsilon$, where the first inequality comes from the fact that $\mathcal{R}_{1,c}$ is an $\varepsilon$-ASUAS relation. Hence, condition (ii) of Definition \ref{Def:simu_up} is satisfied.

Consider $((x_1,x_c),\hat{x}_c)) \in \mathcal{R}$ and pick any $u \in U^a_{1c}(x_1,x_c)$ and $(x_1',x_c') \in \Delta_{1c}(x_1,x_c,u)$. It follows from the definition of the transition relation of the controlled system $S_c \times_{\mathcal{R}_{1,c}} S_{1}$ that $x_c' \in \Delta_c(x_c,u)$. Now we choose $\hat{u}=u$. Using the fact that $x_c=\hat{x}_c$, we have the existence of $\hat{x}_c' \in \Delta_c(\hat{x}_c,\hat{u})=\Delta_c(x_c,u)$ such that $\hat{x}_c'=x_c'$. Moreover, since $(x_1',x_c') \in \Delta_{1c}(x_1,x_c,u)$ it follows from definition of the transitions of the controlled system $S_c \times_{\mathcal{R}_{1,c}} S_{1}$ that $(x_1',x_c') \in \mathcal{R}_{1,c}$. Hence, we conclude that $((x_1',x_c'),\hat{x}_c')) \in \mathcal{R}$.
\end{proof}

\begin{proposition}
\label{prop:5}
    Consider the transition systems $S_i:=(X_i,X_i^o,U_i,\Delta_i,Y_i,H_i)$, $i=1,2$ such that $\mathcal{R}_{1,2}$ is an $\varepsilon$-approximate upper simulation relation from $S_1$ to $S_2$ for some $\varepsilon \in \mathbb{R}^n_{+}$. If $S_2$ satisfies a lower closed specification $\phi \subseteq Y_2^w$ ($\mathcal{B}(S_2) \subseteq \phi$),  then $S_1$ satisfies the specification $\Omega_{\varepsilon}(\phi)$. 
\end{proposition}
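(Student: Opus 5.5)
We prove the statement directly. Given any output behavior of $S_1$, we use the $\varepsilon$-AUS relation $\mathcal{R}_{1,2}$ to build a behavior of $S_2$ that dominates it up to $\varepsilon$. Membership in $\Omega_{\varepsilon}(\phi)$ then follows from lower closedness of $\phi$ together with Lemma~\ref{lem:eps}.

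\textbf{Step 1: shadowing trajectory.} Take an arbitrary trajectory $(x_1^0,u_1^0),(x_1^1,u_1^1),\ldots$ of $S_1$, with output behavior $\sigma_{y}=H_1(x_1^0),H_1(x_1^1),\ldots\in\mathcal{B}(S_1)$. By condition (i) of Definition~\ref{Def:simu_up}, pick $x_2^0\in X_2^o$ with $(x_1^0,x_2^0)\in\mathcal{R}_{1,2}$. Inductively, suppose $(x_1^k,x_2^k)\in\mathcal{R}_{1,2}$. Since $x_1^{k+1}\in\Delta_1(x_1^k,u_1^k)$ and $u_1^k\in U^a_1(x_1^k)$, condition (iii) gives some $u_2^k\in U^a_2(x_2^k)$ and $x_2^{k+1}\in\Delta_2(x_2^k,u_2^k)$ with $(x_1^{k+1},x_2^{k+1})\in\mathcal{R}_{1,2}$. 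This produces, by induction (a countable sequence of choices), a trajectory of $S_2$ whose output behavior $\sigma_{z}:=H_2(x_2^0),H_2(x_2^1),\ldots$ lies in $\mathcal{B}(S_2)\subseteq\phi$. Condition (ii) yields $H_1(x_1^k)\le H_2(x_2^k)+\varepsilon$ for all $k$, that is, $\sigma_y\le\sigma_z+\varepsilon$ componentwise.

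\textbf{Step 2: landing in $\Omega_\varepsilon(\phi)$.} The sequence $\sigma_z+\varepsilon$ belongs to $\Omega_\varepsilon(\sigma_z)\subseteq\Omega_\varepsilon(\phi)$. Hence $\sigma_y\in\downarrow\Omega_\varepsilon(\phi)$, and it remains to show $\downarrow\Omega_\varepsilon(\phi)\subseteq\Omega_\varepsilon(\phi)$. For this we apply Lemma~\ref{lem:eps} pointwise along the sequence, which gives $\downarrow\Omega_\varepsilon(\phi)\subseteq\Omega_\varepsilon(\downarrow\phi)$. Lower closedness of $\phi$ gives $\downarrow\phi=\phi$, so the right-hand side equals $\Omega_\varepsilon(\phi)$. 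The sequence lemma can be written out explicitly. Define $w_k$ componentwise by $w_{k,i}:=\min\{z_{k,i},\,y_{k,i}+\varepsilon_i\}$. Then $w\le\sigma_z$, so $w\in\phi$ by lower closedness, and $|y_{k,i}-w_{k,i}|\le\varepsilon_i$ for every $k$ and $i$, so $\sigma_y\in\Omega_\varepsilon(w)\subseteq\Omega_\varepsilon(\phi)$.

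\textbf{Main obstacle.} The delicate point is this final step, namely lifting Lemma~\ref{lem:eps} from points to infinite sequences. One must also check that the auxiliary sequence $w$ has entries in the output set $Y$, since $\phi\subseteq Y^w$ and lower closedness is taken within $Y^w$. Here $Y=Y_2=X$, which is lower closed in the applications, so $w_k\le z_k\in Y$ implies $w_k\in Y$. In the general statement this needs to be read with $Y$ lower closed, or with lower closedness of $\phi$ understood in $\mathbb{R}^{m\,w}$. The construction of the shadowing trajectory in Step 1 is routine.
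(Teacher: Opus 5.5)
Your proof is correct and follows the paper's argument essentially step for step. You build a shadowing trajectory of $S_2$ via conditions (i) and (iii) of Definition~\ref{Def:simu_up}, use (ii) to get $\sigma_y\le\sigma_z+\varepsilon$, and conclude with Lemma~\ref{lem:eps} and lower closedness of $\phi$. Your explicit sequence-level construction $w_{k,i}=\min\{z_{k,i},y_{k,i}+\varepsilon_i\}$ is just the paper's proof of Lemma~\ref{lem:eps} applied componentwise, and it is slightly more careful than the paper, which applies the lemma to sets of sequences without comment. It also makes your intermediate claim $\sigma_z+\varepsilon\in\Omega_\varepsilon(\sigma_z)$ unnecessary, which is good, since that claim would require $\sigma_z+\varepsilon\in Y^w$.
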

\begin{proof}
    Let $\sigma_{y_1} \in \mathcal{B}(S_1)$ be an output behaviour of the system $\Sigma_1$, and let $\sigma_1:=(x_{1,0},u_{1,0}),(x_{1,1},u_{1,1}),\ldots$ be the corresponding behavior. We now construct a behavior $\sigma_2:=(x_{2,0},u_{2,0}),(x_{2,1},u_{2,1}),\ldots$ of the system $S_2$ as follows:
    Since $x_{1,0} \in X^0_{1}$ and by definition of the approximate upper simulation relation $\mathcal{R}_{1,2}$, there exists $x_{2,0} \in X^0_{2}$ such that $(x_{1,0},x_{2,0}) \in \mathcal{R}_{1,2}$. It now follows again from the definition of approximate upper simulation relation, since $(x_{1,0},x_{2,0}) \in \mathcal{R}_{1,2}$ and $x_{1,1} \in \Delta_1(x_{1,0},u_{1,0})$, there must exist a transition $x_{2,1} \in \Delta_2(x_{2,0},u_{2,0})$ in $S_2$ with $(x_{1,1},x_{2,1}) \in \mathcal{R}_{1,2}$. We can repeat the argument again by using $(x_{1,1},x_{2,1}) \in \mathcal{R}_{1,2}$ and $x_{1,2} \in \Delta_1(x_{1,1},u_{1,1})$ to conclude the existence of $x_{2,2} \in \Delta_2(x_{2,1},u_{2,1})$ with $(x_{1,2},x_{2,2}) \in \mathcal{R}_{1,2}$. By repeating this argument inductively, one gets the existence of the behavior $\sigma_2$ satisfying $(x_{1,k},x_{2,k}) \in \mathcal{R}_{1,2}$ for all $k \in \mathbb{N}_0$. Finally, invoking condition (ii) of the definition of an approximate upper simulation relation, $(x_{1,k},x_{2,k}) \in \mathcal{R}_{1,2}$ implies $y_{1,k}=H_1(x_{1,k}) \leq y_{2,k}+\varepsilon=H_2(x_{2,k})+ \varepsilon$ for all $k \in \mathbb{N}$. Hence, one gets that $\mathcal{B}(S_1) \subseteq \downarrow\{\Omega_{\varepsilon}(\mathcal{B}(S_2))\}$. Moreover,  we also have that $\mathcal{B}(S_2) \subseteq \phi$. Combining the two last inclusions, one gets from Lemma \ref{lem:eps}, and using the fact that the set $\phi$ is lower closed that $\mathcal{B}(S_1) \subseteq \downarrow\{\Omega_{\varepsilon}(\mathcal{B}(S_2))\}
        \subseteq \downarrow\{\Omega_{\varepsilon}(\phi)\}  \subseteq \Omega_{\varepsilon}(\downarrow \phi) \subseteq \Omega_{\varepsilon}(\phi)$. Hence, one gets that $ \mathcal{B}(S_1) \subseteq \Omega_{\varepsilon}(\phi)$.
\end{proof}





\bibliographystyle{ieeetr}
\section*{References}
\bibliography{ref}

\end{document}